\newtheorem{theorem}{Theorem}
\newtheorem{proposition}{Proposition}
\newtheorem{lemma}{Lemma}
\newtheorem{corollary}{Corollary}
\newtheorem*{claim*}{Claim}
\newtheorem*{lemma*}{Lemma}
\theoremstyle{definition}
\newtheorem{definition}{Definition}[section]
\newcommand{\sw}[1]{\textcolor{orange}{}}
\newcommand{\memento}[1]{\textcolor{red}{}}
\newcommand{\reb}[1]{\textcolor{black}{#1}}
\begin{document}

\title{Entanglement buffering with two quantum memories}

\author{Bethany Davies}
\thanks{these two authors contributed equally.}
\email{b.j.davies@tudelft.nl}
\orcid{}

\author{\'{A}lvaro G. I\~{n}esta}
\thanks{these two authors contributed equally.}
\email{a.gomezinesta@tudelft.nl}
\orcid{}

\author{Stephanie Wehner}
\orcid{}

\affiliation{QuTech, Delft Univ. of Technology, Lorentzweg 1, 2628 CJ Delft, The Netherlands}
\affiliation{EEMCS, Quantum Computer Science, Delft Univ. of Technology, Mekelweg 4, 2628 CD Delft, The Netherlands}
\affiliation{Kavli Institute of Nanoscience, Delft Univ. of Technology, Lorentzweg 1, 2628 CJ Delft, The Netherlands}

\maketitle

\begin{abstract}
  Quantum networks crucially rely on the availability of high-quality entangled pairs of qubits, known as entangled links, distributed across distant nodes. Maintaining the quality of these links is a challenging task due to the presence of time-dependent noise, also known as decoherence. Entanglement purification protocols offer a solution by converting multiple low-quality entangled states into a smaller number of higher-quality ones. 
In this work, we introduce a framework to analyse the performance of entanglement buffering setups that combine entanglement consumption, decoherence, and entanglement purification.
We propose two key metrics: the availability, which is the steady-state probability that an entangled link is present, and the average consumed fidelity, which quantifies the steady-state quality of consumed links.
We then investigate a two-node system, where each node possesses two quantum memories: one for long-term entanglement storage, and another for entanglement generation. 
We model this setup as a continuous-time stochastic process and derive analytical expressions for the performance metrics.
Our findings unveil a trade-off between the availability and the average consumed fidelity.
We also bound these performance metrics for a buffering system that employs the well-known bilocal Clifford purification protocols.
Importantly, our analysis demonstrates that, in the presence of noise, consistently purifying the buffered entanglement increases the average consumed fidelity, even when some buffered entanglement is discarded due to purification failures.
\end{abstract}

\section{Introduction}
The functionality of quantum network applications typically relies on the consumption of entangled pairs of qubits, also known as \emph{entangled links}, that are shared among distant nodes \cite{Wehner2018}.
The performance of quantum network applications does not only depend on the rate of production of entangled links, but also on their quality.
In a quantum network, it is therefore a priority for high-quality entangled states to be readily available to network users.
This is a challenging task, since entangled links are typically stored in memories that are subjected to time-dependent noise, meaning that the quality of stored entangled links decreases over time. This effect is known as decoherence.

A common way of overcoming the loss in quality of entangled links is to use \emph{entanglement purification} protocols \cite{Bennett1996,Deutsch1996,Dur1999,Yan2023}. An $m$-to-$n$ entanglement purification protocol consumes $m$ entangled quantum states of low quality and outputs $n$ states with a higher quality, where typically $m > n$. The simplest form of purification schemes are $2$-to-$1$, also known as \emph{entanglement pumping} protocols. One downside of using purification is that there is typically a probability of failure, in which case the input entangled links must be discarded and nothing is produced. 

In this work, we take a crucial step towards the design of high-quality entanglement buffering systems.
The goal of the buffer is to make an entangled link available with a high quality, such that it can be consumed at any time for an application.
We develop methods to analyse the performance of an entanglement buffering setup in a system with entanglement consumption, decoherence, and entanglement pumping.
We introduce two metrics to study the performance: (\textit{i}) the \emph{availability}, which is the steady-state probability that a link is available, and (\textit{ii}) the \emph{average consumed fidelity}, which is the steady-state average quality of entangled links upon consumption. We measure the quality of quantum states with the fidelity, which is a well-known metric for this \cite{Nielsen2002}.

We use these metrics to study a two-node system where each of the nodes has two quantum memories, each of which can store a single qubit (see Figure \ref{fig.1G1Billustration}).
This system is of practical relevance since early quantum networks are expected to have a number of memories per node of this order (e.g. in \cite{Kalb2017} and \cite{Yan2022}, entanglement purification was demonstrated experimentally between two distant nodes, each with the capability of storing two qubits).
We study a system where each node has one good (long-term) quantum memory, G, and one bad (short-term) memory, B, per node. We therefore refer to this entanglement buffering setup as the \emph{1G1B system}. The good memories are used to store an entangled link between the nodes that can be consumed at any time. The bad memories are used to generate a new entangled link between the nodes. The new link may be used to pump the stored link with fresh entanglement.

\begin{figure}[t]
  \centering
  \includegraphics[width=0.7\linewidth]{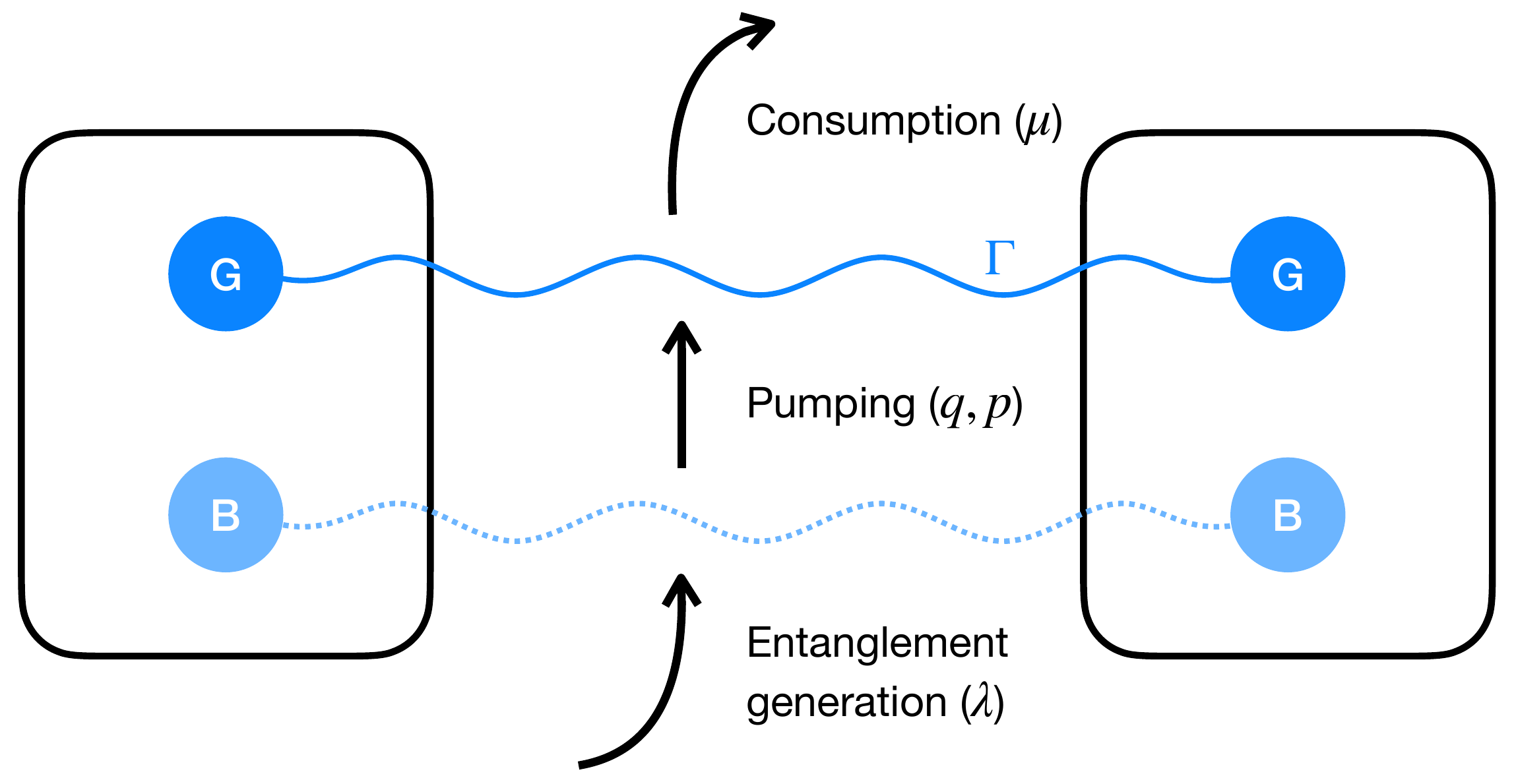}
  \caption{Illustration of the entanglement buffering system with two quantum memories (1G1B system). Each of the nodes has two memories (G and B).
  Memory G is used to store the buffered link.
  An entangled link is generated at a rate $\lambda$ in memory B.
  If memory G is empty when the new link is generated in B, the link is immediately transferred to G.
  If memory G is occupied, the new link generated in B is immediately used to purify the buffered link with probability $q$ (otherwise, the new link is discarded).
  The pumping protocol consumes the link in B to increase the quality of the buffered link in G, and it succeeds with probability $p$ (otherwise, it destroys the link in G).
  The buffered link is consumed at a fixed rate $\mu$.
  The quality of the entanglement stored in G decays exponentially with rate $\Gamma$. Formal definitions of the problem parameters can be found in Section \ref{sec:1G1B_system}.}
  \label{fig.1G1Billustration}
\end{figure}

Calculating the temporal evolution of the fidelity of an entangled link is generally a difficult task, since the fidelity depends on the history of operations that have been applied to the link in the past.
By modelling the state of the 1G1B system as a continuous-time stochastic process, we are able to find analytical solutions for the availability and the average consumed fidelity of the system.
We illustrate the application of these results in a simplified scenario where purification has a linear action on the quality of the buffered link.

Our main contributions are the following:
\begin{itemize}
    \item We propose two metrics to measure the performance of an entanglement buffering system: the availability and the average consumed fidelity.
    \item We provide a simple closed-form expression for the availability in the 1G1B system.
    \item We develop an analytical framework to calculate the average fidelity of the links consumed in a 1G1B system. We provide a closed-form expression for pumping schemes that increase the fidelity of the entangled link linearly with the initial fidelity.
\end{itemize}

Our main findings are the following:
\begin{itemize}
    \item We confirm the intuition that, except in some edge cases, there is a trade-off between availability and average consumed fidelity: one must either consume low-quality entanglement at a higher rate, or high-quality entanglement at a lower rate.
    \item Consider a situation where bilocal Clifford protocols are employed (this is one of the most popular and well-studied classes of purification protocols \cite{Dehaene2003}). \reb{Then, if the noise experienced by the quantum memories is above certain threshold,} pumping the stored link with fresh entanglement always increases the average consumed fidelity, even if the stored link is often discarded due to a small probability of successful pumping. \reb{We provide an explicit expression for this noise threshold, which depends on the purification protocol employed and the fidelity of newly generated links.}
\end{itemize}

The structure of the paper is the following. In Section \ref{sec.background}, we provide a short overview of related work (an additional introduction to the quantum information concepts required to understand this work is provided in Appendix \ref{app:quantum_preliminaries}).
In Section \ref{sec:1G1B_system}, we explain the physical setup and provide a formal definition of the 1G1B system as a stochastic process. In Section \ref{sec.performance}, we define the performance metrics of interest and provide analytical expressions that enable their computation. In Section \ref{sec:linear_jump}, we analyse the system in the case where the pumping protocol produces an output state whose fidelity is a linear function of the fidelity of one of the input states.
In Section \ref{sec:operating_regimes}, we use these results to bound the performance of the 1G1B system, in the case where bilocal Clifford protocols are employed for entanglement pumping.
Lastly, in Section \ref{sec.conclusions}, we discuss the implications of this work and future research directions.


\section{Related work}
\label{sec.background}

The performance analysis of quantum networks is unique because of the trade-off between the rate of distribution of entangled links and the quality of distributed links, both of which are important for the functionality of networking applications. This leads to interesting stochastic problems, which are important to understand the parameter regimes of a possible architecture. For example, \cite{Shchukin2019,Li2020,Inesta2023} deal with the problem of generating an end-to-end entangled link across a chain of quantum repeaters, where both the rate of production and the quality of the end-to-end links are quantities of interest. Another example is the problem of generating multiple entangled links between two users with a high quality, which is treated in \cite{Praxmeyer2013,Davies2023}.
In these works, the time between successfully generated entangled links is modelled by a geometric distribution.
However, the time taken up by an entanglement generation attempt is generally small compared to other relevant time scales \cite{Pompili2021,Liu2023}.
Hence, a simplifying assumption that we make in this work is that the time between entanglement generation attempts is exponentially distributed.
This is a common assumption in the quantum networking literature (see e.g. \cite{Vardoyan2023,nain2020,Chandra2022}), because it can enable the finding of closed-form relations between physical variables and protocol parameters.
Here, we introduce and find expressions for the values of two key performance metrics in the steady state.

Previous work that incorporates entanglement purification schemes into the analysis of quantum network architectures typically involves numerical optimisation methods (see e.g. \cite{victora2023}), or only considers specific purification protocols \cite{Bratzik2013}.
By contrast, in this work we focus on presenting the purification protocol in a general way, and finding closed-form solutions for the performance metrics of interest (albeit for a simpler architecture). This is an important step towards an in-depth understanding of how one can expect purification to impact the performance of a near-term quantum network.

Other works have introduced the concept of entanglement buffering (preparing quantum links to be consumed at a later time) over a large-scale quantum network \cite{Inesta2023a,Pouryousef2023}.
To the best of our knowledge, the only work with a similar set-up to ours is \cite{Elsayed2023}, which was developed in parallel and independently of our work. There, the authors study the steady-state fidelities of a system involving two memories used for storage (good memories), and one memory used for generation (bad memory).
This work differs from ours in multiple ways.
For example, the analysis is done in discrete time and it is assumed that the fidelity takes a discrete set of values, whereas we do not make this assumption since we work in continuous time.
Additionally, consumption of entanglement is not included in the system studied in \cite{Elsayed2023}, which may impact the steady-state behaviour.

Lastly, we note that previous work generally assumes a specific protocol for entanglement buffering between each pair of nodes, and does not address the following fundamental question: \emph{what is the best way to buffer entanglement between two users in a quantum network?} To the best of our knowledge, we address this question for the first time.


\section{The 1G1B (one good, one bad) system}
\label{sec:1G1B_system}
We now define the 1G1B system. In Section \ref{sec:system_description}, we describe and motivate the model of the system. In Section \ref{sec:system_definition}, we define the variables of interest precisely. This facilitates the definition of the performance metrics in Section \ref{sec.performance}.
In Appendix \ref{app:quantum_preliminaries}, we provide an introduction to the quantum background required.

\subsection{System description}
\label{sec:system_description}
Below we provide a list of assumptions that model the 1G1B system, and provide motivation for each assumptions. An illustration of the system is given in Figure \ref{fig.1G1Billustration}.

\begin{enumerate}
    \item 
        \textbf{Each of the nodes has two memories: one long-term memory (good, G) and one short-term memory (bad, B). The B memories are used to generate new entangled links. The G memories are used as long-term storage (entanglement buffer).}
    
    This is motivated by the fact that \emph{storage} (G) and \emph{communication} (B) qubits are often present in experimental scenarios, where the former is used to store entanglement and the latter is used to generate new links \cite{Benjamin2006,Kalb2017,Lee2022}.
    \item \textbf{New entangled links are generated in memory B according to a Poisson process with rate $\lambda$. New entangled links always have the form $\rho_\mathrm{new}$.}
    
    Physical entanglement generation attempts are typically probabilistic and heralded \cite{Bernien2013,Barrett2005}. \reb{In other words,} the attempt can fail with some probability and, when this occurs, a failure flag is raised. Therefore, the generation of a single link may take multiple attempts.
    The time taken by an attempt is typically fixed (this is both the case in present-day quantum networks \cite{Pompili2021} and an assumption that is commonly made in the theoretical analysis of quantum networks \cite{Inesta2023, Inesta2023a, Davies2023}). Then, the time between attempts follows a geometric distribution.
    Since the probability of successful generation and the length of the time step is often small compared to other relevant time scales \cite{Pompili2021,Liu2023}, \reb{we use a continuous approximation, i.e. that the time between arrivals are exponentially distributed. This is a Poisson process (see e.g. Chapter 6.8 from \cite{grimmett2020}).}
    \item 
    \textbf{When a link is newly generated in memory B, if memory G is empty (no link present), the new link is immediately placed there. If memory G is not empty, the nodes immediately either (\textit{i}) attempt pumping with probability $q$, or (\textit{ii}) discard the new link from memory B (probability $1-q$).}

    This step is included because it may not always be a good idea to carry out pumping, due to there being a  possibility of this failing. 
    \item \textbf{Links stored in memory G are Werner states.}

    Werner states take the simple form 
    \begin{equation*}
\rho_\mathrm{\scriptscriptstyle W} = F \ketbra{\phi^+} + \frac{1-F}{3} \ketbra{\psi^+} + \frac{1-F}{3} \ketbra{\psi^-} + \frac{1-F}{3} \ketbra{\phi^-},
\end{equation*}
where $\left\{ \ket{\phi^+},\ket{\psi^+},\ket{\phi^-},\ket{\psi^-} \right\}$ denote the Bell basis.
A Werner state corresponds to maximally entangled state that has been subjected to isotropic noise. The state in the good memory is therefore fully described by one parameter: the fidelity $F$ to the target state $\ket{\phi^{+}}$. Any state can be transformed into a Werner state with the same fidelity by applying extra noise, a process known as \emph{twirling} \cite{Bennett1996a,Horodecki1999}. Hence, this assumption constitutes a worst-case model.
    \item \textbf{While in memory G, states are subject to depolarising noise with memory lifetime~$1/\Gamma$.}

    Depolarising noise can also be seen as a worst-case noise model \cite{Dur2005}.
    After a time $t$ in memory, this maps the state fidelity $F$ to $$F \rightarrow e^{-\Gamma t}\left(F-\frac{1}{4}\right) + \frac{1}{4}. $$
    \item \textbf{Consumption requests arrive according to a Poisson process with rate $\mu$. When a consumption request arrives, if there is a stored link in memory G, it is immediately used for an application (and therefore removed from the memory). If there is no link available, the request is ignored.}

    \reb{This means that the time until the next consumption request arrives is independent of the arrival time of previous requests, and it is exponentially distributed. This assumption is commonly made in the performance analysis of queuing systems (see e.g. Chapter 14 from \cite{VanMiegham2014}).}
    \item \textbf{Assumptions about pumping:}
    \begin{enumerate}
        \item \textbf{Pumping is carried out instantaneously.}

        This is because the execution time is generally much lower than the other timescales involved in the problem.
        For example, in state-of-the-art setups, an entangled link is generated approximately every 0.5 s \cite{Pompili2021}, while entanglement pumping may take around $0.5\cdot10^{-3}$ s \cite{Kalb2017}.
        If the nodes are far apart, classical communication between them would only add a negligible contribution to the purification protocol (e.g. classical information takes less than $10^{-4}$ s to travel over 10 km of optical fiber). 
        
        \item  \textbf{Suppose that the link in memory G has fidelity $F$ and the link in memory B is in state $\rho_{\mathrm{new}}$. If pumping succeeds, the output link has fidelity $J(F,\rho_{\mathrm{new}})$, and remains in the good memory. If pumping fails, all links are discarded from the system. } \reb{ Here, the \textit{jump function} $J(F,\rho_{\mathrm{new}})\in [0,1]$ is dependent on the choice of purification protocol. Given the assumption that one of the links is a Werner state, the form of this function is 
        \begin{equation}
            J(F,\rho_{\mathrm{new}}) = \frac{\tilde{a}(\rho_{\mathrm{new}}) F + \tilde{b}(\rho_{\mathrm{new}})}{p(F,\rho_{\mathrm{new}})},
            \label{eqn:jump_function_rational}
        \end{equation}
        with 
        \begin{equation}
            p(F,\rho_{\mathrm{new}}) = c(\rho_{\mathrm{new}}) F + d(\rho_{\mathrm{new}})
            \label{eqn:probability_linear_general}
        \end{equation}
        where $\tilde{a},\tilde{b},c,d$ are functions of $\rho_{\mathrm{new}}$. Here, $p(F,\rho_{\mathrm{new}})$ is the success probability of purification. See Appendix  \ref{app:jump_function_general_form} for an explanation of why the jump function and success probability take this form. } 
        \item  \textbf{Pumping succeeds with probability $p$, \reb{which is constant in the fidelity of memory G}.}
        \reb{We see from the above that this is a special case, and that in general the probability of purification success varies linearly with the fidelity of the good memory. However, performing the analysis with a constant probability of success does allow us to find bounds on the operating regimes of the system by considering the best-case and worst-case values of $p$ (see Section~\ref{sec:operating_regimes}). Combining this with Assumption 7b, we see that this is effectively equivalent to setting $c(\rho_{\mathrm{new}})=0$. The jump function is then linear in the fidelity of memory G, and can be written as
        \begin{equation*}
            J(F,\rho_{\mathrm{new}}) = a(\rho_{\mathrm{new}}) F + b(\rho_{\mathrm{new}}),
        \end{equation*}
        where $a\coloneqq \tilde{a}/p$ and $b\coloneqq \tilde{b}/p$.} 
    \end{enumerate}
\end{enumerate}
Implicit in the above is that the process of entanglement generation, pumping and consumption ((2),(3),(6) and (7b)) are independent. We provide a summary of the parameters involved in the 1G1B system in Table \ref{tab.variables}.

\renewcommand{\arraystretch}{1.2}
\begin{table}[t]
    \centering
    \caption{Parameters of the 1G1B system. See main text for detailed explanations.
    }\label{tab.variables}
    \vspace{-2mm} 
\begin{tabular}{p{0.1\textwidth}p{0.73\textwidth}}
\multicolumn{2}{c}{\textbf{Hardware}}\\[1pt]
\hline
	$\lambda$ & Rate of heralded entanglement generation (time between successful attempts is exponentially distributed with rate $\lambda$) \\
    $\rho_\mathrm{new}$ & Entangled state produced after a successful entanglement generation\\
	$\Gamma$ & Rate of decoherence (fidelity of the entangled link decays exponentially over time with rate $\Gamma$) \\[5pt]
\multicolumn{2}{c}{\textbf{Application}}\\[1pt]
\hline
    $\mu$ & Rate of consumption (specified by application) \\[5pt]
\multicolumn{2}{c}{\textbf{Pumping protocol}}\\[1pt]
\hline
    $q$ & Probability of attempting pumping immediately after a successful entanglement generation attempt (otherwise the new link is discarded)\\
    $p$ & Probability of successful pumping \\
    $J(F,\rho_\mathrm{new})$ & Jump function: fidelity of the output state following successful pumping ($F$ is the fidelity of the Werner state stored in the good memory)
\end{tabular}

\end{table}

\subsection{System definition}
\label{sec:system_definition}
In this subsection, we define the state of the system mathematically, which will be the main object of study in the rest of this work. We view the state of the system as the number of rounds of pumping that the link in memory has undergone. From now on, when we refer to 1G1B, we refer to the stochastic process that evolves according to the following definition.
\begin{definition}[1G1B system]
    Let $s(t)$ be the state of the 1G1B system at time $t$. This takes values 
\begin{equation}
    s(t) = \begin{cases}
        \emptyset \text{ \small if there is no link in memory,} \\
        i \geq 0 \; \text{ \small if there is a link in memory which is the result of $i$ successful pumping rounds,}
    \end{cases}
\end{equation}
where $i=0$ corresponds to a link in memory that has not undergone any pumping. Assume that the system starts with no link, i.e. $s(0)=\emptyset$. The system transitions from state $\emptyset$ to state $0$ when a new link is generated and placed in the good memory, which was previously empty. The rate of transition from $\emptyset$ to $0$ is then given by the entanglement generation rate $\lambda$. Pumping success occurs when a new link is produced (rate $\lambda$), pumping is attempted (probability $q$), and pumping succeeds (probability $p$). Therefore, the transition from state $i$ to $i+1$ occurs with rate $\lambda q p$. The final allowed transition is from $i$ to $\emptyset$ which occurs due to consumption or purification failure, which occurs with rate $\mu+\lambda q (1-p)$.
\label{def:1G1B_system}
\end{definition}

We also refer to the state $i\geq 0$ as the $i$th \textit{purification level}. Since the transitions between each state in 1G1B occur according to an exponential distribution with rate that is only dependent on the current state of the system, this is a continuous-time Markov chain (CTMC) on the state space $\{\emptyset, 0,1,... \}$. The resulting CTMC and the rate of transitions is depicted in Figure \ref{fig.ctmc_1g1b}. This is the main object of study in our work. 
\begin{figure}[h]
\centering
	\begin{tikzpicture}[->, >=stealth', auto, semithick, node distance=1.8cm]
	\tikzstyle{every state}=[fill=white,draw=black,thick,text=black,scale=1]
	\node[state]    (empty)               {$\emptyset$};
	\node[state]    (zero)[right of=empty]   {$0$};
	\node[state]    (one)[right of=zero]   {$1$};
	\node[state]    (two)[right of=one]   {$2$};
        \node          (dots)[right of=two]   {...};
	\path (empty) edge [bend left] node [above] {$\lambda$} (zero); 
        \path (zero) edge [bend left] node [above] {$\lambda qp$} (one);
        \path (one) edge [bend left] node [above] {$\lambda qp$} (two);
        \path (two) edge [bend left] node [above] {$\lambda qp$} (dots);
        \path (zero) edge [bend left]  (empty);
        \path (one) edge [bend left]  (empty);
        \path (two) edge [bend left] (empty);
        \path (dots) edge [bend left] node [below] {$\mu+q(1-p)\lambda$} (empty);
	\end{tikzpicture}

\caption{The transitions of the 1G1B system.}
\label{fig.ctmc_1g1b}
\end{figure}
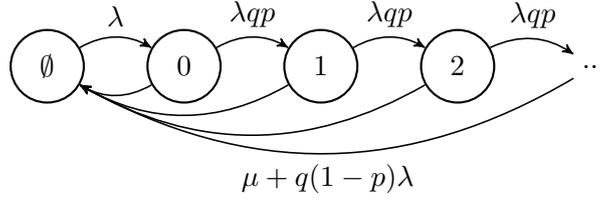

Recall that we are also interested in the fidelity of the link in memory. This is dependent not only on the state $s(t)\in \{\emptyset,0,1,... \}$, but also on the time spent in the states leading up to the current purification level. This motivates the following definition.
\begin{definition} Suppose that $s(t)=i$. Then, we define random variable $\vec{T}(t)$ to be the length-$(i+1)$ vector storing the times spent in the recent purification levels $0,1,\dots,i$ leading up to the current one, where time $T_j(t)$ was spent in the most recent visit to state $j$ ($j\leq i-1$), and time $T_i(t)$ is the time spent so far in state $i$. See Figure \ref{fig:fidelity_vs_time} for a depiction of this.
\label{def:T(t)}
\end{definition} 
We also need a framework with which to compute the fidelity at time $t$. Recalling assumption (5) of Section \ref{sec:system_description}, we denote decoherence by the following.
\begin{definition}
    Let $D_t : [0,1]\rightarrow [0,1]$ denote the action of depolarising noise on the state fidelity $F$. This has action $$ D_t[F]= e^{-\Gamma t}\left(F-\frac{1}{4}\right) +\frac{1}{4}. $$
    \label{def:depol_noise}
\end{definition}
We now formally define the jump function. 
 \begin{definition}
     After successfully applying purification to a Werner state with fidelity $F$ and a general two-qubit state $\rho_{\mathrm{new}}$, the output state has fidelity $J(F,\rho_{\mathrm{new}})$. We refer to $J$ as the \textit{jump function} of the protocol. \reb{The general form of this is given in (\ref{eqn:jump_function_rational}).}
 \end{definition}
We note that every purification protocol has a corresponding jump function. The exact form of $J$ is dependent on the choice of pumping protocol, but in general is a continuous rational function of $F$, taking values in $[0,1]$.

We also need to compute the fidelity after many rounds of decoherence and pumping. This essentially means composing $D_t$ and $J$.
\begin{definition}
Let $F^{(i)}(t_0,...,t_i)$ denote the fidelity after spending time $t_0,...,t_i$ in each purification level $0,1,...,i$. This may be defined recursively as
\begin{equation}
    F^{(i)}(t_0,...,t_i) = D_{t_i}\left[ J(F^{(i-1)}(t_0,\dots,t_{i-1}), \rho_\mathrm{new}) \right],
\label{eqn:composite_fidelity_fn}
\end{equation}
with $F^{(0)}(t_0) = D_{t_0}[F_{\mathrm{new}}]$, where $F_{\mathrm{new}}$ is the fidelity of $\rho_{\mathrm{new}}$. 
\label{def:composite_fidelity_fn}
\end{definition}
Note that $F^{(i)}$ is a continuous and bounded function of its inputs, since the same is true for $D_t$ and $J$. We are now equipped to define the fidelity of the system.
\begin{definition}
    The fidelity of the 1G1B system at $t$ is given by 
    \begin{equation}
        F(t) = \begin{cases}
            F^{(i)}(\vec{T}(t)) \text{    if } s(t) = i \geq 0, \\
            0, \text{   if } s(t) = \emptyset.
        \end{cases}
    \end{equation}
\label{def:F(t)}
\end{definition}

\begin{figure}
    \centering
    \includegraphics[width=0.7\textwidth]{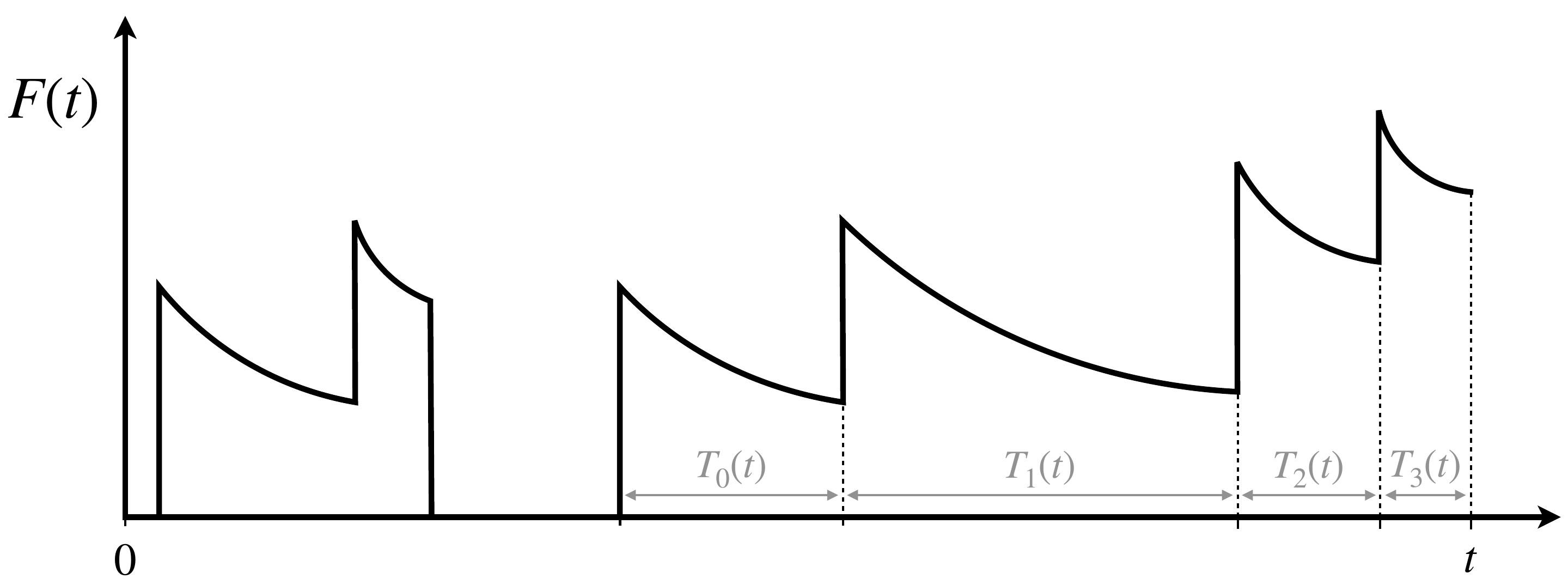}
    \caption{Example of the evolution of the fidelity of the buffered entanglement over time. The fidelity experiences a sudden boost every time a pumping protocol is successful. Then, it decays exponentially due to decoherence.
    Each state in the CTMC is identified by the number of times the current buffered link has been purified.
    If $s(t)=i$, the random variables $\{ T_j(t): j=0,1,...,i-1 \}$ are the times spent in each state of the CTMC immediately leading up to the current state, and $X_i(t)$ is the time so far spent in state $i$.}
    \label{fig:fidelity_vs_time}
\end{figure}
Note that this formulation can also be adapted to incorporate a system where we apply a different pumping protocol in each state of the CTMC. In that case, we would employ a more general recurrence relation:
\begin{equation}
    F^{(i)}(t_0,\dots,t_{i}) = D_{t_i}\left[ J^{(i)}(F^{(i-1)}(t_0,\dots,t_{i-1}), \rho_\mathrm{new}) \right],
\end{equation}
where the $J^{(i)}$ is the jump function corresponding to the pumping protocol applied in state $i$ of the CTMC. For simplicity, however, we study recurrence relations of the form (\ref{eqn:composite_fidelity_fn}). This may be used to model the situation where the same pumping protocol is applied every time, or provide bounds for using multiple protocols, as we do in Section \ref{sec:operating_regimes}.

\section{Performance metrics}\label{sec.performance}
In this Section, we define two metrics to evaluate the performance of an entanglement buffering system: the \emph{availability} and the \emph{average consumed fidelity}. We also provide analytical expressions for both metrics in the 1G1B system.

\subsection{Availability}
A natural measure for the quality of service provided to users is the probability that a consumption request may be served at any given time.
If there is a link stored in the good memories, the consumption request is immediately served. However, if there is no entanglement available, the request is ignored. Letting $P(s(t)=i)$ be the probability that the system is in state $i$ at time $t$, we define the steady-state distribution as
\begin{equation}\label{eq.steadystate_general}
    \pi_i := \lim_{t\rightarrow \infty} P(s(t)=i).
\end{equation}
Then, we define our first performance metric as follows.
\begin{definition}[Availability] The availability $A$ is defined as 
\begin{equation}\label{eq.availability_general}
    A \coloneqq 1-\pi_{\emptyset},
\end{equation}
which is the probability that there is a link in memory in the limit $t\rightarrow \infty$.
\end{definition}

This definition can be applied to any entanglement buffering setup. In the 1G1B system, the availability is well-defined, as shown in Appendix \ref{app.performance_metrics}.
Moreover, it is possible to derive a closed-form expression for the availability, as stated in the proposition below.
\begin{proposition} Consider the 1G1B system (Definition \ref{def:1G1B_system}). The \textit{availability} is given by
\begin{equation}
    A = 1-\pi_{\emptyset} = \frac{\lambda}{\lambda + \mu + \lambda q(1-p)},
    \label{eqn:availability_1g1b}
\end{equation}
and the rest of the steady-state distribution is given by 
\begin{equation}
    \pi_i = \frac{\lambda^{i+1} q^i p^i}{(\mu + \lambda q)^{i+1}} \pi_{\emptyset}.
\end{equation}
\label{prop:1G1B_stationary_distribution}
\end{proposition}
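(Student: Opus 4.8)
The plan is to treat the process exactly as it is defined, namely as a continuous-time Markov chain on the state space $\{\emptyset, 0, 1, 2, \dots\}$ with the transition rates drawn in Figure \ref{fig.ctmc_1g1b}, and to pin down its stationary distribution $(\pi_\emptyset, \pi_0, \pi_1, \dots)$ by solving the global balance equations. To keep the bookkeeping clean I would first abbreviate the two rates leaving any occupied state $i \ge 0$: the up rate $\beta \coloneqq \lambda q p$ (a successful pump, $i \to i+1$) and the reset rate $\gamma \coloneqq \mu + \lambda q(1-p)$ (consumption or pumping failure, $i \to \emptyset$). The only fact about these rates that I will use repeatedly is that the total outflow from an occupied state is $\beta + \gamma = \mu + \lambda q$.

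Writing ``rate out $=$ rate in'' for each state then gives a small and highly structured system. The balance equation for $\emptyset$ reads $\lambda \pi_\emptyset = \gamma \sum_{i \ge 0}\pi_i = \gamma(1-\pi_\emptyset)$, which by itself already yields $\pi_\emptyset = \gamma/(\lambda+\gamma)$ and hence $A = 1-\pi_\emptyset = \lambda/(\lambda + \mu + \lambda q(1-p))$, the claimed availability. For state $0$ the sole inflow is from $\emptyset$, so $(\beta+\gamma)\pi_0 = \lambda \pi_\emptyset$, i.e. $\pi_0 = \lambda\pi_\emptyset/(\mu+\lambda q)$. For each $i \ge 1$ the only inflow is the up-transition from $i-1$, giving $(\beta+\gamma)\pi_i = \beta\pi_{i-1}$: a one-step geometric recursion with ratio $\beta/(\beta+\gamma) = \lambda q p/(\mu+\lambda q)$.

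Iterating this recursion from $\pi_0$ and substituting the expression for $\pi_0$ in terms of $\pi_\emptyset$ produces $\pi_i = (\lambda q p/(\mu+\lambda q))^i\,\pi_0 = \lambda^{i+1}q^i p^i/(\mu+\lambda q)^{i+1}\,\pi_\emptyset$, which is precisely the stated formula. As a consistency check, and as an alternative route to $\pi_\emptyset$ that bypasses the $\emptyset$-balance equation, I would impose normalisation $\pi_\emptyset + \sum_{i\ge 0}\pi_i = 1$; the sum is a geometric series that evaluates to $\pi_\emptyset\,\lambda/\gamma$, so normalisation returns $\pi_\emptyset = \gamma/(\lambda+\gamma)$ once more, confirming that the two independent computations agree.

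The one genuine subtlety, and the step I expect to require the most care, is the convergence of that geometric series, i.e. that the chain really admits a normalisable stationary distribution. The ratio $\lambda q p/(\mu+\lambda q)$ is strictly less than $1$ exactly when $\gamma = \mu + \lambda q(1-p) > 0$, which holds whenever there is any positive consumption or pumping-failure rate; this is the positive-recurrence condition underlying well-definedness of the steady state, whose existence is treated separately in Appendix \ref{app.performance_metrics}. Everything else reduces to routine manipulation of the balance equations, and it is worth emphasising that no property of the fidelity, the jump function $J$, or the decoherence map $D_t$ enters at all: the stationary distribution is a function of the transition rates of the CTMC alone.
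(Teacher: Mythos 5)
Your proof is correct, and it takes a genuinely different route from the paper's. You solve the global balance equations of the CTMC directly: the $\emptyset$-equation gives $\pi_\emptyset$ immediately, the one-step recursion $(\mu+\lambda q)\pi_i = \lambda q p\,\pi_{i-1}$ gives the geometric tail, and the normalisation check plus the positive-recurrence condition $\mu + \lambda q(1-p)>0$ closes the argument. The paper explicitly acknowledges that this derivation is available and straightforward, but deliberately chooses a longer route via renewal theory: it models the alternation between $\emptyset$ and $\neg\emptyset$ as an (eventually delayed) renewal process, obtains $\lim_{t\to\infty}P(r(t)=\neg\emptyset)=\mathbb{E}[Y_1]/(\mathbb{E}[Y_1]+\mathbb{E}[Z_1])$ via the key renewal theorem, and then recovers the $\pi_i$ by characterising the number of pumping successes within the limiting current lifetime (Corollary \ref{cor:num_purifications_limit}). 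What the paper's approach buys is (i) a formula for the availability valid when the entanglement generation time is a general, non-exponential random variable, and (ii) the limiting joint distribution of the sojourn times $\vec{T}(t)$, which is the essential input to Theorem \ref{thm:average_fidelity_formula} and is not delivered by the balance equations. What your approach buys is brevity and transparency for this proposition in isolation; your closing observation that the stationary distribution depends only on the transition rates and not on the fidelity machinery is accurate and worth stating. One cosmetic caution: your symbols $\beta$ and $\gamma$ collide with the paper's notation (the appendix uses $\beta$ for your $\gamma=\mu+\lambda q(1-p)$ and $\delta$ for your $\beta=\lambda q p$, while $\gamma$ is reused in Lemma \ref{lem:formula_average_fidelity_linear_jump} for $\alpha/(\alpha+\Gamma)$), so the names would need adjusting if merged into the text.
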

See Appendix \ref{app.performance_metrics} for a proof of this proposition. We note that this can be derived in a straightforward manner using the balance equations for a CTMC. Instead, we use renewal theory, for two reasons. Firstly, this approach ties in neatly with the proof of the formula for the average fidelity (see the next subsection). Secondly, this approach provides a formula for the availability that is more general, as it also applies to the case where entanglement generation is described by a general random variable instead of being exponentially distributed. See Appendix \ref{app.performance_metrics} for the general formula for the availability.
\subsection{Average consumed fidelity}
The quality of service of an entanglement buffering system can also be measured in terms of the quality of the entanglement provided to the users.
Therefore, the average fidelity of the entangled links upon consumption can be used as an additional metric to assess the performance of the system.

\begin{definition}[Average consumed fidelity]\label{def.avg_cons_fid}
The \textit{average consumed fidelity} is the average fidelity of the entangled link upon consumption, in the steady state. More specifically,
\begin{equation}\label{def:average_cons_fidelity}
    \overline F \coloneqq \lim_{t\rightarrow \infty}\mathbb{E}\big[\, F(t) \,|\, s(t) \neq \emptyset \, \big].
\end{equation}
\end{definition}
In the definition of $\overline{F}$, we condition on not being in $\emptyset$ since consumption events do not happen when there is no link present.
As before, this performance metric can be applied to any entanglement buffering setup. In the case of the 1G1B system, it is possible to derive an analytical expression for $\overline F$ which explicitly depends on the steady-state distribution. The formula is given in the following theorem.

\begin{theorem} In the 1G1B system, the average consumed fidelity can be written as  
\begin{gather}
    \overline F = \frac{1}{A}\sum_{i=0}^\infty c_i \pi_i,
    \label{eqn:acf_formula}
\end{gather}
where $\pi_i = \lim_{t\rightarrow \infty} P(s(t)=i)$, and
\begin{equation}
    c_i = \mathbb{E} \left[F^{(i)}\!\left(Q_0,Q_1,...,Q_i \right) \right]
\label{eqn:formula_conditional_average_fidelity}
\end{equation}
where $A$ is the availability, $Q_0,Q_1,\dots,Q_i$ are i.i.d. random variables with $Q_0\sim \text{Exp}(\mu+\lambda q)$, and $F^{(i)}$ is given in Definition \ref{def:composite_fidelity_fn}.
\label{thm:average_fidelity_formula}
\end{theorem}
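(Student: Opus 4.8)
The plan is to reduce the steady-state conditional expectation to a per-level computation and then evaluate each level's contribution by a renewal-reward argument. Write $\theta \coloneqq \mu + \lambda q$ for the common exit rate of every occupied state. By the law of total expectation, and since $\lim_{t\to\infty} P(s(t)=i \mid s(t)\neq\emptyset) = \pi_i/A$, the claim reduces to showing that $\lim_{t\to\infty}\mathbb{E}\big[F(t)\,\mathbf{1}[s(t)=i]\big] = c_i\,\pi_i$; summing over $i$ and dividing by $A = 1-\pi_\emptyset$ then yields (\ref{eqn:acf_formula}). Because the chain regenerates every time it enters $\emptyset$, I would treat the excursions between successive visits to $\emptyset$ as i.i.d.\ cycles and apply the renewal-reward theorem. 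This reproduces the stationary probabilities $\pi_i$ along the way and so ties in with the proof of Proposition \ref{prop:1G1B_stationary_distribution}.

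The central structural fact I would invoke is the standard CTMC property that, in any occupied state $j\ge 0$, the holding time and the identity of the next transition are independent: the holding time is $\mathrm{Exp}(\theta)$ regardless of whether the jump is ``up'' (pumping success, probability $r \coloneqq \lambda q p/\theta$) or ``down'' to $\emptyset$. Consequently, on the event that a cycle climbs all the way to level $i$ — which has probability $r^i$ — the sojourn times $\tau_0,\dots,\tau_{i-1}$ in levels $0,\dots,i-1$ remain i.i.d.\ $\mathrm{Exp}(\theta)$ even after conditioning on those transitions being upward, while the sojourn $\tau_i$ in level $i$ is an independent $\mathrm{Exp}(\theta)$. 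During the time the chain occupies level $i$, the fidelity at age $x$ is exactly $F^{(i)}(\tau_0,\dots,\tau_{i-1},x)$ by Definition \ref{def:composite_fidelity_fn}.

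The reward accumulated in level $i$ during one cycle is therefore $\mathbf{1}[\text{reach } i]\int_0^{\tau_i} F^{(i)}(\tau_0,\dots,\tau_{i-1},x)\,dx$, whose expectation factors as $r^i\,\mathbb{E}\big[\int_0^{\tau_i} F^{(i)}(\tau_0,\dots,\tau_{i-1},x)\,dx\big]$. The key computation is to integrate out $\tau_i$: by Fubini, $\mathbb{E}_{\tau_i}\big[\int_0^{\tau_i} g(x)\,dx\big] = \int_0^\infty g(x)\,e^{-\theta x}\,dx = \theta^{-1}\,\mathbb{E}[g(Q_i)]$ with $Q_i\sim\mathrm{Exp}(\theta)$, so the age variable in the current level is automatically replaced by one more independent exponential. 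This turns the expected reward into $\theta^{-1} r^i c_i$, with $c_i = \mathbb{E}[F^{(i)}(Q_0,\dots,Q_i)]$ exactly as in (\ref{eqn:formula_conditional_average_fidelity}). Since the analogous reward ``$1$ per unit time in level $i$'' gives expected per-cycle occupancy $\theta^{-1} r^i$ and hence $\pi_i = \theta^{-1} r^i/\mathbb{E}[C]$, dividing the two renewal-reward ratios yields $\lim_{t\to\infty}\mathbb{E}[F(t)\,\mathbf{1}[s(t)=i]] = c_i\,\pi_i$, as required.

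The step I expect to require the most care is the treatment of the age, the time already spent in the current level, which is where the \emph{inspection} subtlety of stationary observation lives; the renewal-reward integral above handles it cleanly, but one must be careful that it is the time-average over the whole sojourn, not the memoryless residual alone, that produces the correct extra $\mathrm{Exp}(\theta)$ factor. The remaining technical points are routine: justifying the interchange of the limit, the infinite sum, and the expectation, which follows from $F(t)\in[0,1]$ together with $\sum_i \pi_i = 1$ by dominated convergence, and confirming that $\mathbb{E}[C] = 1/\lambda + 1/(\mu+\lambda q(1-p)) < \infty$ so that the cycles have finite mean length and the renewal-reward theorem applies.
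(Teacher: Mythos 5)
Your argument is correct in substance but follows a genuinely different route from the paper. The paper's proof conditions on $s(t)=i$ and then characterises the limiting \emph{joint distribution} of the sojourn vector $\vec{T}(t)$: it first shows via the key renewal theorem that the current lifetime $C(t)$ converges to $\mathrm{Exp}(\mu+\lambda q(1-p))$, and then, through a fairly involved computation (Lemma \ref{lem:num_purifications_in_limit} and Corollaries \ref{cor:num_purifications_limit}--\ref{cor:limiting_dist_T(t)}), that conditional on $s(t)=i$ the vector $\vec{T}(t)$ converges in distribution to $(Q_0,\dots,Q_i)$ i.i.d.\ $\mathrm{Exp}(\mu+\lambda q)$; the formula then follows from the continuous mapping theorem and an interchange-of-limits lemma. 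You instead decompose the process into i.i.d.\ regeneration cycles at visits to $\emptyset$ and compute the expected reward $\mathbf{1}[\text{reach }i]\int_0^{\tau_i}F^{(i)}(\tau_0,\dots,\tau_{i-1},x)\,\mathrm{d}x$ per cycle; the identity $\mathbb{E}\big[\int_0^{\tau_i}g(x)\,\mathrm{d}x\big]=\theta^{-1}\mathbb{E}[g(Q_i)]$ converts the age in the current level into one extra independent exponential, so the inspection-paradox issue that the paper handles with its current-lifetime analysis is absorbed automatically, and Lemma \ref{lem:num_purifications_in_limit} is not needed at all. Your per-cycle bookkeeping also reproduces $\pi_i$ consistently with Proposition \ref{prop:1G1B_stationary_distribution}. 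What your approach buys is a shorter, more mechanical computation; what the paper's buys is the stronger distributional statement about $\vec{T}(t)$, which also makes the pointwise-in-$t$ limit immediate.

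One point needs more care than you give it: the renewal--reward theorem as usually stated yields the long-run \emph{time average} $\lim_{t\to\infty}\frac{1}{t}\int_0^t\mathbb{E}\big[F(u)\mathbf{1}[s(u)=i]\big]\,\mathrm{d}u$, whereas Definition \ref{def.avg_cons_fid} requires the pointwise limit $\lim_{t\to\infty}\mathbb{E}\big[F(t)\mathbf{1}[s(t)=i]\big]$. Your appeal to dominated convergence addresses the interchange of sum and limit but not this distinction. The gap is closable in the standard way: set up the renewal-type equation for $f(t)=\mathbb{E}\big[F(t)\mathbf{1}[s(t)=i]\big]$ and apply the key renewal theorem (Theorem \ref{thm:key_renewal}, in its directly-Riemann-integrable form, since the forcing term here is dominated by $e^{-\beta t}$ but need not be monotone), which identifies the pointwise limit with your per-cycle ratio. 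With that sentence added, your proof is complete.
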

\begin{proof}[Sketch proof of Theorem \ref{thm:average_fidelity_formula}] 
\reb{A first step is to expand by conditioning on the value of $s(t)$,
\begin{align*}
    \mathbb{E}[F(t)|s(t)\neq \emptyset] &= \sum_{i=0}^{\infty} \mathbb{E}[F(t)|s(t)=i ] P\!\left(s(t)=i|s(t)\neq \emptyset \right) \\ &= \frac{1}{P\!\left(s(t)\neq \emptyset \right)}\sum_{i=0}^{\infty} \mathbb{E}[F(t)|s(t)=i ] P\!\left(s(t)=i \right).
\end{align*}
In Proposition \ref{prop:convergence_sum} (Appendix \ref{app.num_pur_subsec}), we show that, when $t\rightarrow \infty$, the limit can be brought inside of the sum, and so
\begin{align*}
    \overline{F} &= \lim_{t\rightarrow \infty} \mathbb{E}[F(t)|s(t)\neq \emptyset] \\ &= \frac{1}{A}\sum_{i=0}^{\infty} \pi_i\cdot \lim_{t\rightarrow \infty} \mathbb{E}\big[ F(t)|s(t)=i \big],
\end{align*}
where we have used the definition of the steady-state distribution and the availability (see (\ref{eq.steadystate_general})  and (\ref{eq.availability_general})).
The values $\pi_i$ may be computed using Proposition \ref{prop:1G1B_stationary_distribution}. The remaining work is then to show that 
\begin{equation}
    \lim_{t\rightarrow \infty}\mathbb{E}\big[ F(t)|s(t)=i \big] = \mathbb{E}\left[F^{(i)}\left(Q_0,\dots,Q_i\right)\right], \label{eqn:limit_of_F(i)_sketch}
\end{equation}
which essentially requires the characterisation of the limiting distribution of $\vec{T}(t)$, since from Definition \ref{def:F(t)} we recall that $\mathbb{E}\big[ F(t)|s(t)=i \big] = \mathbb{E}\left[F^{(i)}\left(\vec{T}(t)\right)\big|s(t)=i\right].$ This is achieved with the following result: conditional on $s(t)=i$, $\vec{T}(t)\rightarrow (Q_0,\dots, Q_i)$ in distribution as $t\rightarrow \infty$, where the $Q_j$ are i.i.d. random variables with $Q_0\sim \mathrm{Exp}(\mu+\lambda q)$. There are two main steps to show this (see Figure \ref{fig.sketch_proof_timeline} for graphical intuition):}

\begin{figure}[t]
  \centering
  \includegraphics[width=0.9\linewidth]{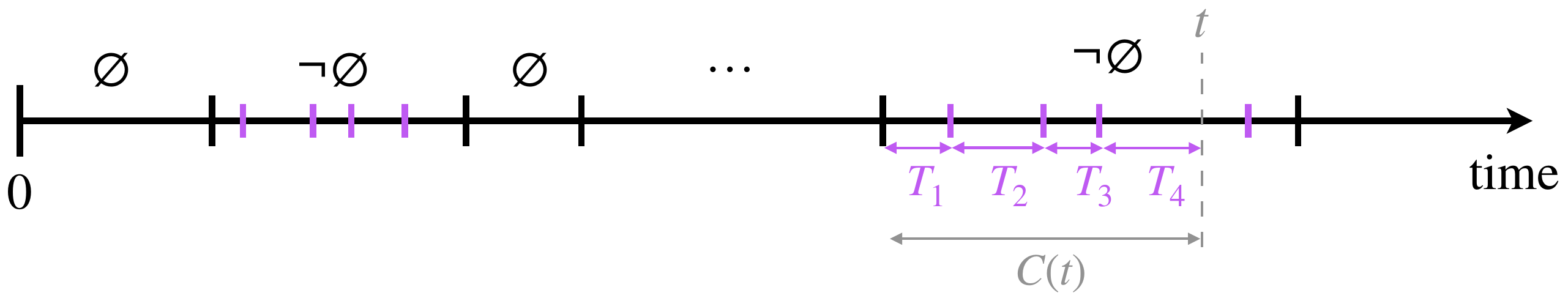}
  \caption{\reb{An example timeline of the 1G1B process. Black dashes are link generation and removal. Shorter purple dashes are pumping rounds. If there is a link present at time $t$, the random variable $C(t)$ is the total time spent so far in $\neg \emptyset$ (link present). Pumping rounds occur within the time $C(t)$ as a Poisson process with rate $\lambda q p$. This may be used to characterise the distribution of $\vec{T}(t)$ in the limit $t\rightarrow \infty$, which is needed to prove Theorem \ref{thm:average_fidelity_formula}.}}
  \label{fig.sketch_proof_timeline}
\end{figure}
\reb{
\begin{enumerate}
    \item Let $C(t)$ be the total time spent so far in $\neg \emptyset$ (link in memory G) at the time $t$. The first step is to show that $C(t)\rightarrow C$ in distribution as $t\rightarrow \infty$, where $C\sim \text{Exp}(\mu+\lambda q(1-p))$. This is shown with renewal theory. For more details, see the results of Appendix \ref{app:subsec_simplified_1G1B}.
    \item Characterise the limiting distribution of the time spent in each purification level \textit{within} the time $C(t)$. These are the $T_j(t)$. We use the fact that pumping rounds occur as a Poisson process within the time $C(t)$. For more details, see the results of Appendix \ref{app.num_pur_subsec}.
\end{enumerate}
Finally, since $F^{(i)}$ is a continuous function of its inputs, (\ref{eqn:limit_of_F(i)_sketch}) follows.}
\end{proof}

For the full proof, see Appendix \ref{app.performance_metrics}. The particularly simple form of (\ref{eqn:formula_conditional_average_fidelity}) can be attributed to the fact that in a CTMC, 
the time spent in a state is not influenced by the state to which the system transitions. As an example, in the CTMC from Figure \ref{fig.ctmc_transition_example}, the time spent in state B before a transition does not depend on the transition itself, and this time is exponentially distributed with rate $r_{\scriptscriptstyle BA}+r_{\scriptscriptstyle BC}$.
In the 1G1B system, the times spent in the states $j =0,1,\dots,i-1$ leading up to state $i$ are all exponentially distributed with rate $\lambda q p + \mu + \lambda q (1-p) = \mu + \lambda q$.
As a consequence, the average fidelity after $i$ successful purifications, $c_i$, does not depend on the probability of successful purification $p$.

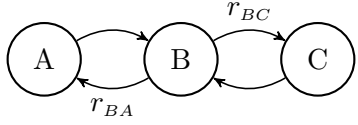
\begin{figure}[h]
\centering
	\begin{tikzpicture}[->, >=stealth', auto, semithick, node distance=1.8cm]
	\tikzstyle{every state}=[fill=white,draw=black,thick,text=black,scale=1]
	\node[state]    (A)               {A};
	\node[state]    (B)[right of=A]   {B};
	\node[state]    (C)[right of=B]   {C};
	\path (A) edge [bend left] node [above] {} (B); 
    \path (B) edge [bend left] node [above] {$r_{\scriptscriptstyle BC}$} (C);
    \path (B) edge [bend left] node [below] {$r_{\scriptscriptstyle BA}$} (A); 
    \path (C) edge [bend left] node [below] {} (B);
    \end{tikzpicture}
\caption{In a CTMC, the time spent in a state is independent of the transition that happens next. In this example, the time spent in state B before leaving is exponentially distributed with rate $r_{\scriptscriptstyle BA}+r_{\scriptscriptstyle BC}$.}
\label{fig.ctmc_transition_example}
\end{figure}




Having systematic closed-form expressions for the functions $F^{(i)}$ enables the efficient computation of $c_i$ and, therefore, $\overline F$.
The calculation of $F^{(i)}$ in closed-form for a general $J$ is quite involved, since \reb{the recurrence relation} (\ref{eqn:composite_fidelity_fn}) becomes a rational difference equation with arbitrary coefficients.
However, in the following sections we consider a jump function which is linear, for which it is possible to find a closed-form solution for $\overline F$. 


\section{Entanglement buffering with a linear jump function}
\label{sec:linear_jump}
In a purification protocol with a linear jump function, the output fidelity is a linear function of the fidelity of one of the input entangled links. \reb{When the probability of successful purification is constant with the fidelity of the good memory, as we assume in 1G1B, this implies that the jump function is linear. This is shown in Appendix \ref{app:jump_function_general_form}.}
In this Section, we compute a closed-form solution for the average consumed fidelity in a 1G1B system assuming a linear jump function. Then, we analyse the performance of the system using the performance metrics defined in Section \ref{sec.performance} (availability and average consumed fidelity). In Section \ref{sec:operating_regimes}, we focus on bilocal Clifford protocols, an important type of purification scheme. 
For a given value of target availability, we provide upper and lower bounds on the average consumed fidelity that can be achieved by any bilocal Clifford protocol in the 1G1B system.

Purification protocols with linear jump functions are relevant for two main reasons:
\begin{enumerate}[label=(\roman*)]
    \item Purification protocols are generally more effective within some range of input fidelities (the increase in fidelity is larger when the input fidelities are within some interval). If the system operates within a small range of fidelities, one may approximate the true jump function with a linear jump function.
    \item One can find linear jump functions that upper and lower bound a set of jump functions of interest. These may then be used to upper and lower bound a fidelity-based performance metric (such as the average consumed fidelity) of a system that has the freedom to employ any of these jump functions. 
\end{enumerate}
In Appendix \ref{app.linear_bounds}, we demonstrate (ii) in the case where bilocal Clifford protocols are employed in the 1G1B system. The output fidelity of a bilocal Clifford protocol can be upper and lower bounded by nontrivial linear functions when one of the input states is a Werner state (using some additional minor assumptions).

Consider a pumping scheme that takes as input a Werner state with fidelity $F$ and an arbitrary state $\rho_\mathrm{new}$. In the 1G1B system, these are the states in the good and the bad memories, respectively.
A linear jump function can be written as
\begin{equation}\label{eq.linear_jump_function}
    J(F,\rho_\mathrm{new}) = a(\rho_\mathrm{new}) F + b(\rho_\mathrm{new}),
\end{equation}
with $0\leq a(\rho_\mathrm{new}) \leq 1$ and $(1-a(\rho_\mathrm{new}))/4\leq b(\rho_\mathrm{new}) \leq 1-a(\rho_\mathrm{new})$, as shown in Proposition~\ref{prop:range_coeffs_linear_jump}.
In what follows, we implicitly assume that $a$ and $b$ depend on $\rho_\mathrm{new}$.

We now derive a closed-form solution for the average consumed fidelity of 1G1B when the jump function is linear, using Theorem \ref{thm:average_fidelity_formula}. The formula requires knowledge of the steady state distribution $\{ \pi_i: i = \emptyset,0,1,... \}$, and the expected fidelities $c_i$, as defined in (\ref{eqn:formula_conditional_average_fidelity}).
Recall that we assume a constant $p$, and therefore the steady-state distribution is independent of the jump function. Hence, we can use the formula for $\pi_i$ from Proposition \ref{prop:1G1B_stationary_distribution}. The work then lies in computing the $c_i$, which are dependent on the choice of jump function, \reb{recalling their definition in (\ref{eqn:formula_conditional_average_fidelity}). From the same equation, we see that} the first step to compute $c_i$ is to find an explicit solution for the function $F^{(i)}$.
The linear jump function (\ref{eq.linear_jump_function}) allows us to do this by solving the recurrence relation (\ref{eqn:composite_fidelity_fn}). The explicit form of $F^{(i)}$ is provided in the following proposition (see Appendix \ref{app:average_fidelity_derivation_linear_jump} for a proof).
\begin{proposition}
    Consider a 1G1B system with $J(F,\rho_\mathrm{new}) = aF+b$ and $F^{(0)}(t_0)=D_{t_0}(F_\mathrm{new})$, where $F_\mathrm{new}$ is the fidelity of the state $\rho_\mathrm{new}$. Then,
    \begin{equation}
        F^{(i)}(t_0,...,t_{i-1},t_i) = \frac{1}{4} + \sum_{j=0}^i m_j^{(i)} e^{-\Gamma(t_j+t_{j+1}...+t_i)}
        \label{eqn:fidelity_fn_linear}
    \end{equation}
    where the constants $m^{(i)}_j$ are given by $m_0^{(0)}=F_\mathrm{new}-\frac{1}{4}$, and 
    \begin{equation}
    m_j^{(i)} = 
    \begin{cases}
    a^{i-j}
    \left( \frac{a}{4} + b - \frac{1}{4} \right), \text{ if } j>0, \\ 
    a^i \left(F_\mathrm{new} - \frac{1}{4}\right)\text{ if } j=0.
    \end{cases} 
    \end{equation}
    for $i>0$.
\label{prop:formula_fidelity_functions_linear_sym_jump}
\end{proposition}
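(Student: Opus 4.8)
The plan is to prove the formula by induction on the purification level $i$, exploiting the fact that both the jump function and the decoherence map are affine in the fidelity, so that the ansatz $F^{(i)} = \tfrac14 + \sum_{j=0}^i m_j^{(i)} e^{-\Gamma(t_j + \cdots + t_i)}$ is preserved under the recurrence (\ref{eqn:composite_fidelity_fn}). For the base case $i=0$, I would simply evaluate $F^{(0)}(t_0) = D_{t_0}[F_\mathrm{new}] = e^{-\Gamma t_0}(F_\mathrm{new} - \tfrac14) + \tfrac14$ using Definition \ref{def:depol_noise}, which matches the claimed expression with the single coefficient $m_0^{(0)} = F_\mathrm{new} - \tfrac14$.

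For the inductive step, I would assume the formula holds at level $i-1$ and substitute it into (\ref{eqn:composite_fidelity_fn}). First, applying the linear jump function $J(F,\rho_\mathrm{new}) = aF + b$ scales every exponential term by $a$ and shifts the constant, giving $J(F^{(i-1)},\rho_\mathrm{new}) = \tfrac{a}{4} + b + a\sum_{j=0}^{i-1} m_j^{(i-1)} e^{-\Gamma(t_j + \cdots + t_{i-1})}$. Then applying $D_{t_i}$ subtracts $\tfrac14$, multiplies the result by $e^{-\Gamma t_i}$, and re-adds $\tfrac14$. The subtraction turns the constant $\tfrac{a}{4} + b$ into $\tfrac{a}{4} + b - \tfrac14$, and the multiplication by $e^{-\Gamma t_i}$ both attaches a fresh factor to every inherited exponential (so that the exponent $t_j + \cdots + t_{i-1}$ becomes $t_j + \cdots + t_i$) and produces a new standalone term $e^{-\Gamma t_i}(\tfrac{a}{4} + b - \tfrac14)$.

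Collecting terms by their exponent then yields the coefficient recurrences directly: the newly created top term has $m_i^{(i)} = \tfrac{a}{4} + b - \tfrac14$, while each inherited term satisfies $m_j^{(i)} = a\, m_j^{(i-1)}$ for $0 \le j \le i-1$. It then remains to check that the closed-form expressions in the statement solve these recurrences, which is immediate: for $j>0$ one has $a\cdot a^{(i-1)-j}(\tfrac{a}{4}+b-\tfrac14) = a^{i-j}(\tfrac{a}{4}+b-\tfrac14)$, and the top term is just the $j=i$ case of this since $a^0 = 1$; for $j=0$ one has $a\cdot a^{i-1}(F_\mathrm{new}-\tfrac14) = a^i(F_\mathrm{new}-\tfrac14)$. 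No step presents any genuine difficulty, so the only thing to be careful about is the bookkeeping of the exponents: the key observation is that decoherence acts uniformly, appending the same factor $e^{-\Gamma t_i}$ to every surviving term, which is precisely what makes the exponents accumulate as the nested sums $t_j + \cdots + t_i$ rather than as anything more complicated.
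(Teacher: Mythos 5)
Your proof is correct and follows essentially the same route as the paper's: induction on $i$, with the base case from Definition \ref{def:depol_noise} and an inductive step that substitutes the ansatz into the recurrence (\ref{eqn:composite_fidelity_fn}), reads off the coefficient recurrences $m_j^{(i)} = a\,m_j^{(i-1)}$ and $m_i^{(i)} = \frac{a}{4}+b-\frac{1}{4}$, and verifies the closed form. The only cosmetic difference is that the paper phrases the step as $i \to i+1$ rather than $i-1 \to i$.
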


In the following Lemma, we use the formula for $F^{(i)}$ \reb{(found in Proposition 2) and combine this with Theorem \ref{thm:average_fidelity_formula}} to derive a closed-form expression for $c_i$, \reb{and therefore} for the average consumed fidelity.

\begin{lemma}
Consider a 1G1B system with $J(F,\rho_\mathrm{new})=aF+b$ and $F^{(0)}(t_0)=D_{t_0}(F_\mathrm{new})$, where $F_\mathrm{new}$ is the fidelity of the state $\rho_\mathrm{new}$. Then, the average fidelity after $i\geq 0$ purification rounds is given by 
\begin{equation}
c_i = \frac{1}{4} +  \left(F_\mathrm{new}-\frac{1}{4}\right)\cdot a^i  \gamma^{i+1} + \left( \frac{a}{4} + b - \frac{1}{4} \right) \gamma  \frac{1-a^i \gamma^i}{1-a\gamma},
\label{eqn:conditional_average_fidelity_linear_jump}
\end{equation}
where $\alpha = \mu + \lambda q$ and $\gamma = \alpha/(\alpha+ \Gamma)$. Moreover, the average consumed fidelity is given by
\begin{equation}
    \overline F_{\scriptscriptstyle \mathrm{linear}}  = \frac{\frac{1}{4}\Gamma + b \lambda q p + F_\mathrm{new} \Big(\mu + \lambda q (1 - p)\Big)}{\Gamma + \mu + \lambda q (1-pa)}.
    \label{eqn:formula_average_fidelity_linear_jump}
\end{equation}
\label{lem:formula_average_fidelity_linear_jump}
\end{lemma}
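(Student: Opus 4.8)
The plan is to establish the two displayed formulas in sequence, deriving $c_i$ first and then summing. For $c_i$, I would begin from Theorem~\ref{thm:average_fidelity_formula}, which gives $c_i = \mathbb{E}\big[F^{(i)}(Q_0,\dots,Q_i)\big]$ with the $Q_j$ i.i.d.\ $\mathrm{Exp}(\alpha)$, $\alpha = \mu+\lambda q$. Substituting the explicit form of $F^{(i)}$ from Proposition~\ref{prop:formula_fidelity_functions_linear_sym_jump}, linearity of expectation turns $c_i$ into $\tfrac14 + \sum_{j=0}^i m_j^{(i)}\,\mathbb{E}\big[e^{-\Gamma(Q_j+\cdots+Q_i)}\big]$. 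Since the exponentials factor over the independent $Q_k$ and each single term satisfies $\mathbb{E}[e^{-\Gamma Q}] = \alpha/(\alpha+\Gamma) = \gamma$ (the Laplace transform of an exponential), the expectation of the product of the $(i-j+1)$ factors is $\gamma^{\,i-j+1}$. Inserting the stated values of $m_j^{(i)}$ and reindexing the resulting finite geometric series via $k=i-j$ produces exactly the claimed expression for $c_i$.

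For the average consumed fidelity I would feed this $c_i$, together with the steady-state probabilities $\pi_i$ from Proposition~\ref{prop:1G1B_stationary_distribution}, into $\overline F = \tfrac1A\sum_i c_i\pi_i$. Writing $\beta = \lambda q p/\alpha$, the probabilities take the geometric form $\pi_i = \pi_\emptyset(\lambda/\alpha)\beta^i$, and a convenient preliminary step is to check that the prefactor $(\pi_\emptyset/A)(\lambda/\alpha)$ collapses to $1-\beta$, so that $\overline F = (1-\beta)\sum_{i=0}^\infty c_i\beta^i$, i.e.\ a geometric average of the $c_i$. I would then split $\sum_i c_i\beta^i$ into the three pieces coming from the three terms of $c_i$; each is either a geometric series in $\beta$ or a difference of geometric series in $\beta$ and $a\gamma\beta$, so each sums in closed form, and after multiplying back by $(1-\beta)$ one telescoping term simplifies to $\beta(1-a\gamma)/(1-a\gamma\beta)$.

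The bulk of the effort, and the main (purely computational) obstacle, is the final algebraic simplification. After summing, $\overline F$ takes the intermediate form $\tfrac14 + \big(F_\mathrm{new}-\tfrac14\big)\frac{\gamma(1-\beta)}{1-a\gamma\beta} + \big(\tfrac{a}{4}+b-\tfrac14\big)\frac{\gamma\beta}{1-a\gamma\beta}$. Substituting $\gamma=\alpha/(\alpha+\Gamma)$ and $\beta=\lambda q p/\alpha$ and clearing the common factor $(\alpha+\Gamma)$ turns $1-a\gamma\beta$ into $\big(\Gamma+\mu+\lambda q(1-ap)\big)/(\alpha+\Gamma)$, which already matches the target denominator. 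Placing everything over this denominator and expanding the numerator, the step I would watch most carefully is the cancellation of all the constant $\tfrac14$-contributions not attached to $F_\mathrm{new}$ or $b$: the $\mu$-terms cancel, and the $\lambda q$-terms cancel because $(1-ap)-(1-p)+ap-p=0$, leaving only $\tfrac14\Gamma$, $b\lambda q p$, and $F_\mathrm{new}\big(\mu+\lambda q(1-p)\big)$. Confirming this cancellation yields the stated closed form for $\overline F_{\scriptscriptstyle\mathrm{linear}}$.
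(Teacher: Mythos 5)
Your proposal is correct and follows essentially the same route as the paper's proof: compute $c_i$ as an expectation of $F^{(i)}$ over i.i.d.\ $\mathrm{Exp}(\alpha)$ times using the factorisation $\mathbb{E}[e^{-\Gamma Q}]=\gamma$, then feed $c_i$ and the geometric $\pi_i$ into $\overline F = \frac{1}{A}\sum_i c_i\pi_i$ and sum the resulting geometric series. The only (cosmetic) difference is that you normalise via $\overline F=(1-\beta)\sum_i c_i\beta^i$ and carry out the final cancellation by hand, where the paper delegates that last simplification to Mathematica; your intermediate expressions and the claimed cancellations all check out.
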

The closed-form solution (\ref{eqn:formula_average_fidelity_linear_jump}) is obtainable since $\overline{F}=\frac{1}{A}\sum_{i=0}^{\infty} \pi_i c_i$ is a geometric series with the linear jump function, \reb{as can be seen from the form of $\pi_i$ and $c_i$ as found in Proposition \ref{prop:1G1B_stationary_distribution} and Equation \ref{eqn:conditional_average_fidelity_linear_jump}}. In the following proposition, we see how $\overline{F}$ varies with $p$ and $q$.
\begin{proposition}
    The quantity $\overline F_{\scriptscriptstyle \mathrm{linear}}$ has the following properties:
    \begin{enumerate}[label=(\alph*)]
        \item $\overline F_{\scriptscriptstyle \mathrm{linear}}$ is a monotonic function of $q$;
        \item $\overline F_{\scriptscriptstyle \mathrm{linear}}$ is a monotonic function of $p$;
    \end{enumerate}
\label{lem:properties_average_fidelity_linear_jump}
\end{proposition}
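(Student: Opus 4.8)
The plan is to exploit the fact that the closed form (\ref{eqn:formula_average_fidelity_linear_jump}) is, in each of the two variables separately, a ratio of two affine functions, i.e. a linear-fractional (Möbius) map. Concretely, for fixed $p$ I would collect the $q$-dependence of numerator and denominator and write
\begin{equation*}
\overline F_{\scriptscriptstyle \mathrm{linear}} = \frac{A_1 q + A_0}{B_1 q + B_0},
\end{equation*}
with $A_1 = \lambda\bigl(pb + (1-p)F_\mathrm{new}\bigr)$, $A_0 = \tfrac14\Gamma + \mu F_\mathrm{new}$, $B_1 = \lambda(1-pa)$ and $B_0 = \Gamma + \mu$. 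Symmetrically, for fixed $q$ I would group the $p$-dependence as
\begin{equation*}
\overline F_{\scriptscriptstyle \mathrm{linear}} = \frac{\tilde A_1 p + \tilde A_0}{\tilde B_1 p + \tilde B_0},
\end{equation*}
with $\tilde A_1 = \lambda q (b - F_\mathrm{new})$, $\tilde A_0 = \tfrac14\Gamma + \mu F_\mathrm{new} + \lambda q F_\mathrm{new}$, $\tilde B_1 = -\lambda q a$ and $\tilde B_0 = \Gamma + \mu + \lambda q$. Both regroupings are pure bookkeeping, since the numerator of (\ref{eqn:formula_average_fidelity_linear_jump}) is bilinear in $(p,q)$ and the denominator is too.

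The key step is then the standard observation that a Möbius map $x \mapsto (A_1 x + A_0)/(B_1 x + B_0)$ has derivative
\begin{equation*}
\frac{\partial}{\partial x}\frac{A_1 x + A_0}{B_1 x + B_0} = \frac{A_1 B_0 - A_0 B_1}{(B_1 x + B_0)^2},
\end{equation*}
whose numerator $A_1 B_0 - A_0 B_1$ does not depend on $x$. Hence the derivative keeps a constant sign on any interval where the denominator is nonzero, which is exactly monotonicity. Applying this with $x = q$ (resp. $x = p$) and the coefficients above gives part (a) (resp. part (b)); notably I would not even need to evaluate the determinant-like quantity $A_1 B_0 - A_0 B_1$, since only the facts that it is \emph{constant in $x$} and of definite sign matter for the claim.

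The one point that genuinely needs care, and which I expect to be the main (if minor) obstacle, is ensuring that the denominator never vanishes on the physical domain $p, q \in [0,1]$, so that the derivative is well-defined throughout and its sign cannot flip through a pole. For this I would use $a \in [0,1]$ (Proposition~\ref{prop:range_coeffs_linear_jump}) together with the positivity of the rates $\Gamma, \mu, \lambda$: since $1 - pa \geq 0$ whenever $p, a \in [0,1]$, the denominator obeys
\begin{equation*}
\Gamma + \mu + \lambda q(1 - pa) \geq \Gamma + \mu > 0,
\end{equation*}
and the same bound controls both Möbius representations. With the denominator bounded away from zero, the derivative is continuous and of constant sign on $[0,1]$, completing the argument. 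If one also wanted the \emph{direction} of monotonicity rather than its mere existence, the final step would be to compute $A_1 B_0 - A_0 B_1$ and $\tilde A_1 \tilde B_0 - \tilde A_0 \tilde B_1$ explicitly and read off their signs in terms of $F_\mathrm{new}$, $a$, $b$ and the rates.
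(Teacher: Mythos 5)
Your proposal is correct and is essentially the paper's own argument: the paper likewise computes $\partial \overline F_{\scriptscriptstyle \mathrm{linear}}/\partial q$ and $\partial \overline F_{\scriptscriptstyle \mathrm{linear}}/\partial p$, finds each to be a $q$- (resp.\ $p$-) independent numerator over the square $\bigl(\Gamma+\mu+\lambda q(1-ap)\bigr)^2$, and concludes that the sign is constant — which is exactly your M\"obius-map determinant $A_1B_0-A_0B_1$, merely evaluated explicitly. Your added check that the denominator stays positive on $p,q\in[0,1]$ is a small tightening the paper leaves implicit, but not a different route.
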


We provide a proof of Lemma \ref{lem:formula_average_fidelity_linear_jump} and Proposition \ref{lem:properties_average_fidelity_linear_jump} in Appendix \ref{app:average_fidelity_derivation_linear_jump}. We now have closed-form expression for $A$ and $\overline F_{\scriptscriptstyle \mathrm{linear}}$, which allows for a thorough analysis of the performance of the 1G1B system with the linear jump function. In particular, the following conclusions may already be drawn.
\begin{itemize}
    \item Result ($a$) from Proposition \ref{lem:properties_average_fidelity_linear_jump} implies that the average consumed fidelity is maximized for $q=0$ or $q=1$. Consider a 1G1B system with a fixed set of parameters and a pumping scheme with a linear jump function.
If the pumping protocol is good enough (e.g. when $b\geq F_\mathrm{new}(1-a)$, as explained in Appendix \ref{app:average_fidelity_derivation_linear_jump}), then pumping every time a link is generated ($q=1$) maximises the average consumed fidelity. Sometimes, the pumping protocol chosen may impact the average consumed fidelity negatively and in that case one should never pump entanglement ($q=0$) to increase the average consumed fidelity.
\item Result ($b$) from Proposition \ref{lem:properties_average_fidelity_linear_jump} provides similar insights: a pumping protocol with a good jump function always benefits from a larger probability of success, i.e. $\overline F_{\scriptscriptstyle \mathrm{linear}}$ is maximized for $p=1$. When the protocol is detrimental, failure ($p=0$) benefits the overall procedure, since it frees the good memory and allows for a fresh entangled link to be allocated there. 
\end{itemize}


When the jump function is good (i.e. when $\overline F_{\scriptscriptstyle \mathrm{linear}}$ is monotonically increasing in $q$), we observe a trade-off between $\overline F_{\scriptscriptstyle \mathrm{linear}}$ and the availability $A$, which is a decreasing function of $q$, as can be seen from (\ref{eqn:availability_1g1b}).
This behaviour is shown in Figure \ref{fig.linear_jump}.
If we rarely purify (small $q$), a low-quality entangled state (small $\overline F_{\scriptscriptstyle \mathrm{linear}}$) will be available most of the time (large $A$).
In that case, the average consumed fidelity can be lower than the fidelity of newly generated links, since the entanglement is not being purified often enough to compensate the noise introduced by the memory over time (in Figure \ref{fig.linear_jump}, the average consumed fidelity is below the dashed line for small $q$).
When purification is performed more often (larger $q$), the quality of the stored entanglement will be higher (larger $\overline F_{\scriptscriptstyle \mathrm{linear}}$), at the expense of a more limited availability (smaller $A$), since purification can fail and destroy the entanglement stored in the long-term memory.
This trade-off disappears when the pumping scheme is deterministic ($p=1$): the availability remains constant when varying $q$ since purification will always succeed and the stored entanglement will not be destroyed.
Note that, if the system is dominated by decoherence ($\Gamma \gg \lambda, \mu$), the average consumed fidelity will always be smaller than~$F_0$.

As a validation check, we also implemented a Monte Carlo simulation of the 1G1B system, which provided the same availability and average consumed fidelity that we obtained analytically (our code is available at \href{https://github.com/AlvaroGI/buffering-1G1B}{https://github.com/AlvaroGI/buffering-1G1B}).

\begin{figure}[t]
  \centering
  \includegraphics{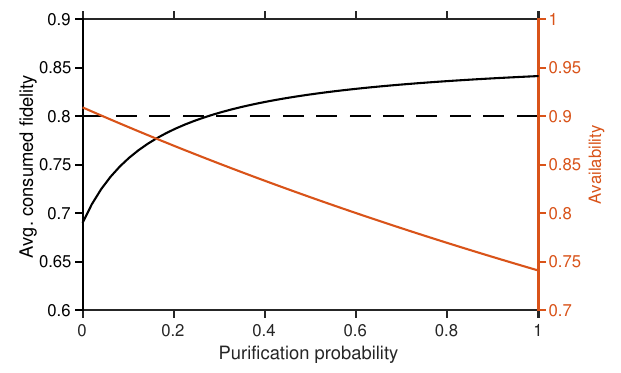}
  \caption{Trade-off between average consumed fidelity and availability. When the pumping is good enough (see discussion in main text), the average consumed fidelity $\overline F$ (black line) increases with increasing purification probability $q$, while the availability $A$ (orange line) decreases. The dashed line corresponds to the fidelity of newly generated links ($F_\mathrm{new} = 0.8$). Other parameters used in this example (times and rates in the same arbitrary units): $\lambda = 1$, $\mu = 0.1$, $p=0.75$, $\Gamma = 1/40$, $J(F,\rho_\mathrm{new}) = (1/3)F + (1+F_\mathrm{new})/3$,
  $\rho_\mathrm{new} = F_\mathrm{new} \ketbra{\phi^+} + (1-F_\mathrm{new}) \left(\ketbra{\psi^+} +  \ketbra{\psi^-} + \ketbra{\phi^-}\right)/3$.
  This jump function corresponds to a linear approximation of a specific bilocal Clifford protocol (the DEJMPS protocol) in the high-fidelity regime \cite{Deutsch1996}.}
  \label{fig.linear_jump}
\end{figure}

\subsection{Operating regimes of bilocal Clifford protocols}
\label{sec:operating_regimes}
In this Subsection, we study the operating regimes of the 1G1B system, under the assumption that the pumping protocol employed is a bilocal Clifford protocol \cite{Dehaene2003, Jansen2022}.
Firstly, we find upper and lower bounds for the availability.
Then, for a desired value of the availability within these bounds, we find lower and upper bounds for the average consumed fidelity that can be provided by bilocal Clifford protocols. This analysis finds limits to the performance of the 1G1B buffering system.

    Bilocal Clifford protocols are one of the most well-studied types of protocol \cite{Dehaene2003, Jansen2022,Goodenough2023}. One of their main advantages is that they are relatively simple to execute, since they involve a basic set of gates. 
    To the best of our knowledge, bilocal Clifford circuits have been the only purification protocols implemented experimentally so far (see, e.g. \cite{Kalb2017,Yan2022}).
    In Appendix \ref{app.linear_bounds} we provide further details on bilocal Clifford protocols.

Let us start our performance analysis by discussing the availability.
The maximum value that can be achieved by any protocol (bilocal Clifford or not) is $\lambda / (\lambda + \mu)$, as can be seen from (\ref{eqn:availability_1g1b}).
This maximum value is obtained when there is no pumping or the pumping protocol succeeds deterministically, i.e. when $q=0$ or $p=1$.
The availability is lower bounded by $\lambda/(2\lambda + \mu )$, and the minimum value is attained when a pumping protocol is always applied and it never succeeds, i.e., when $q=1$ and $p=0$.

To find bounds for the average consumed fidelity, we first need to bound the jump functions of all bilocal Clifford protocols, which we do in the following Lemma.
We only consider nontrivial protocols, i.e. we do not consider bilocal Clifford protocols with $J(F, \rho_\mathrm{new}) = F$ or $J(F, \rho_\mathrm{new}) = F_\mathrm{new}$, where $F_\mathrm{new}$ is the fidelity of $\rho_\mathrm{new}$.
The former trivial jump function corresponds to a protocol that leaves the buffered link untouched, while the second trivial jump function corresponds to a protocol that replaces the buffered link by the newly generated link.

\begin{lemma}
    Let $J(F, \rho_\mathrm{new})$ be the jump function of a nontrivial bilocal Clifford protocol ($J(F, \rho_\mathrm{new})\neq F$ and $J(F, \rho_\mathrm{new})\neq F_\mathrm{new}$, where $F_\mathrm{new}$ is the fidelity of $\rho_\mathrm{new}$). Assume $\rho_\mathrm{new}$ is a Bell-diagonal state:
    \begin{align}
        \rho_\mathrm{new} &= F_\mathrm{new} \ketbra{\Phi^+} + \lambda_1 \ketbra{\Psi^+} +\lambda_2 \ketbra{\Psi^-} +\lambda_3 \ketbra{\Phi^-},
    \end{align}
    with $F_\mathrm{new}+\lambda_1+\lambda_2+\lambda_3 = 1$. Let us define $F^*$ as
    \begin{equation}
        F^* = \frac{2F_{\mathrm{new}}-1+\sqrt{(2F_{\mathrm{new}}-1)^2 - 2 \lambda_\mathrm{min}(1-2F_{\mathrm{new}} - 2 \lambda_\mathrm{min})}}{2(2F_{\mathrm{new}}-1 + 2 \lambda_\mathrm{min})},
        \label{eqn:def_F_star_maintext}
    \end{equation}
    where $\lambda_\mathrm{min} = \min(\lambda_1,\lambda_2,\lambda_3)$.
    Then, for all $F\in [\frac{1}{4},F^*]$, the jump function is lower bounded as follows:
\begin{equation}\label{eq.newlinearboundsF}
    a_\mathrm{\scriptscriptstyle l} F + b_\mathrm{\scriptscriptstyle l} \leq J(F, \rho_\mathrm{new}) 
\end{equation}
where 
\begin{align}\label{eq.lower_bounds_ab}
    a_\mathrm{\scriptscriptstyle l} &= \frac{2(4F^*-1)\left[2F_\mathrm{new} - (F_\mathrm{new}+\lambda_{\min})(F_\mathrm{new}+\lambda_{\max})\right]+4(\lambda_{\max} - \lambda_{\min})(1-F^*)}{(4F^*-1)\left[(4F_\mathrm{new}+4\lambda_{\max}-2)F^*+2-F_{\mathrm{new}}-\lambda_{\max} \right]}, \text{ and }\nonumber \\ 
    \;\;\;  b_\mathrm{\scriptscriptstyle l} &= \frac{F_{\mathrm{new}}+\lambda_{\max}}{2}-\frac{a_\mathrm{\scriptscriptstyle l}}{4},
\end{align}
with $\lambda_{\max} = \max\{\lambda_1,\lambda_2,\lambda_3 \} $. For $F\in [1/4,1]$, the jump function is upper bounded as
\begin{equation}\label{eq.newlinearboundsF_upp}
   J(F, \rho_\mathrm{new}) \leq a_\mathrm{\scriptscriptstyle u} F + b_\mathrm{\scriptscriptstyle u},
\end{equation}
with
\begin{equation}\label{eq.upper_bounds_ab}
    a_\mathrm{\scriptscriptstyle u} = \frac{4(1-F_{\mathrm{new}})}{3}, \text{ and } \;\; b_\mathrm{\scriptscriptstyle u} = \frac{4F_{\mathrm{new}}-1}{3}.
\end{equation}
Moreover, the success probability of the protocol is bounded by $p_\mathrm{\scriptscriptstyle l} \leq p \leq p_\mathrm{\scriptscriptstyle u}$, where
\begin{equation}\label{eq.linearboundsp}
    p_\mathrm{\scriptscriptstyle l} = \frac{1}{2}, \text{ and } \;\; p_\mathrm{\scriptscriptstyle u} = F_\mathrm{new} + \max(\lambda_1,\lambda_2,\lambda_3).
\end{equation}
\label{lem:bilocal_clifford_bounds}
\end{lemma}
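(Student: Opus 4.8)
The plan is to exploit the fact, established for bilocal Clifford protocols in \cite{Dehaene2003,Jansen2022}, that such a protocol acts on two Bell-diagonal inputs by a \emph{bilinear} transformation of their Bell coefficient vectors: both the success probability and the unnormalised $\phi^+$-weight of the output are bilinear forms with $0/1$ coefficient matrices fixed by the protocol. The key simplification is that the stored link is Werner, so I would first write it as the mixture $\rho_{\mathrm W}(F) = w\,\ketbra{\phi^+} + (1-w)\,I/4$ with $w=(4F-1)/3$. By linearity of the (trace-nonincreasing) purification map, the success probability $p(F)$ and the unnormalised output fidelity $\tilde{F}(F)$ are affine in $w$, hence affine in $F$; this is precisely why the general form (\ref{eqn:jump_function_rational})--(\ref{eqn:probability_linear_general}) holds and makes $J=\tilde F/p$ a fractional-linear (M\"obius) function of $F$. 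The structural fact I will lean on throughout is that on any interval avoiding its pole a M\"obius function is monotone and has a single fixed concavity. The coefficient ranges of Proposition~\ref{prop:range_coeffs_linear_jump} provide the admissible $(a,b)$ whenever a linear surrogate is used.

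For the probability bounds I would write $p(F)=\sum_{(i,j)\in S}x_i(F)\,y_j$ for the protocol's success set $S$, where $x(F)=(F,\tfrac{1-F}{3},\tfrac{1-F}{3},\tfrac{1-F}{3})$ and $y=(F_{\mathrm{new}},\lambda_1,\lambda_2,\lambda_3)$. Using $\sum_i x_i=\sum_j y_j=1$ together with the balanced (coset-like) structure of $S$ for nontrivial protocols, one shows $p(F)$ is monotone in $F$ with minimal value $p(1/4)=\tfrac12$, giving $p\ge p_{\mathrm{l}}=\tfrac12$, and with maximal value attained by aligning the $\phi^+$ output with the largest error weight of $\rho_{\mathrm{new}}$, giving $p\le p_{\mathrm{u}}=F_{\mathrm{new}}+\lambda_{\max}$. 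The clean way to make this airtight over \emph{all} protocols is to reduce to a finite optimisation over the classification of bilocal Clifford circuits \cite{Jansen2022,Goodenough2023}.

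For the upper bound I would work at the two anchor fidelities. Since always $\tilde F\le p$, we get $\max_{\text{proto}}J(1)=1$; and at $F=1/4$ the Werner input is $I/4$, carrying no information, so a single useful copy against $I/4$ cannot be purified beyond the largest Bell weight of $\rho_{\mathrm{new}}$, giving $\max_{\text{proto}}J(1/4)=F_{\mathrm{new}}$ (attained by a swap-like protocol). The line $a_{\mathrm{u}}F+b_{\mathrm{u}}$ is exactly the chord through $(1/4,F_{\mathrm{new}})$ and $(1,1)$: indeed $a_{\mathrm{u}}\cdot\tfrac14+b_{\mathrm{u}}=F_{\mathrm{new}}$ and $a_{\mathrm{u}}+b_{\mathrm{u}}=1$. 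I would then show that the pointwise-maximal jump function is convex on $[1/4,1]$, so that it lies below its own chord, which is this line; every protocol lies below the maximal jump function, so the bound (\ref{eq.newlinearboundsF_upp}) follows. For the lower bound on $[1/4,F^*]$, the symmetric picture applies to the worst-case (concave) jump function, whose chord then lies \emph{below} it on that interval; the left anchor is $(1/4,(F_{\mathrm{new}}+\lambda_{\max})/2)$, consistent with $b_{\mathrm{l}}=\tfrac{F_{\mathrm{new}}+\lambda_{\max}}{2}-\tfrac{a_{\mathrm{l}}}{4}$, and $F^*$ plays the role of the right endpoint. Here $F^*$ is the maximal reachable fidelity, i.e.\ the relevant fixed point $J(F)=F$ of the optimal protocol, which solves the quadratic $cF^2+(d-\tilde a)F-\tilde b=0$; this is the origin of the square root and of the $\lambda_{\min}$-dependence in (\ref{eqn:def_F_star_maintext}). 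Restricting to $[1/4,F^*]$ is exactly what lets the chord to $F^*$ (rather than to $F=1$) serve as the tightest correct linear lower bound, since in the buffering dynamics the fidelity never exceeds $F^*$.

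The hard part will be turning the qualifier ``for \emph{all} nontrivial bilocal Clifford protocols'' into a tractable extremisation: I expect the main obstacle to be invoking the finite classification of bilocal Clifford circuits to (i) identify the protocols that pointwise maximise, respectively minimise, the output fidelity, (ii) verify the required convexity/concavity of their M\"obius jump functions \emph{uniformly} across the class so that the chord arguments apply, and (iii) pin down $F^*$ by selecting the correct root of the fixed-point quadratic. The probability bounds and the two anchor evaluations are comparatively routine once the bilinear-form representation and the Werner mixture decomposition are in place; the curvature control and the extremisation over the protocol family are where the real work lies.
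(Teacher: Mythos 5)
Your overall skeleton — reduce to the finite classification of $2$-to-$1$ bilocal Clifford circuits on a Werner $\times$ Bell-diagonal input, note that each jump function is a fractional-linear (M\"obius) function of $F$, characterise $F^*$ as the relevant root of the fixed-point quadratic $J(F)=F$, and anchor linear bounds at $F=1/4$ and at $F=1$ or $F^*$ — matches the paper, which enumerates exactly seven protocols (three nontrivial ones, $J_1,J_2,J_3$) and reads off explicit rational formulae for $J_i$ and $p_i$. Your probability bounds are also essentially the paper's: $p_i$ is shown to be increasing in $F$, with $p_i(1/4)=1/2$ and $p_i(1)=F_\mathrm{new}+\lambda_i$.

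However, there is a genuine error in your upper-bound argument. You claim the pointwise-maximal jump function is \emph{convex} on $[1/4,1]$ and therefore lies below its chord through $(1/4,F_\mathrm{new})$ and $(1,1)$. The paper proves the opposite curvature: every nontrivial $J_i$ is strictly \emph{concave} and increasing in $F$ (Proposition \ref{prop:J_i_concave_increasing}), so a concave function lies \emph{above} any chord between points on its graph, and your argument would fail (indeed, the chord-below-concave-curve direction is precisely what the paper uses for the \emph{lower} bound). Moreover, $(1/4,F_\mathrm{new})$ is not a point on any nontrivial jump function — $J_i(1/4)=(F_\mathrm{new}+\lambda_i)/2<F_\mathrm{new}$, and the value $F_\mathrm{new}$ at $F=1/4$ is attained only by the excluded replacement protocol — so the line $a_\mathrm{\scriptscriptstyle u}F+b_\mathrm{\scriptscriptstyle u}$ is not a chord at all, and a line exceeding a concave function at both endpoints need not dominate it in between. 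The paper closes this gap differently: it shows the tangent to $J_3$ at $F=1$ upper bounds the concave $J_3$ everywhere, that $J_3$ dominates $J_1,J_2$ for $F\geq F_\mathrm{int}$ (using the fact that all three jump functions intersect at a common point $F_\mathrm{int}$ with common value $\sqrt{F_\mathrm{new}/2}<F_\mathrm{new}$), and finally that the proposed line dominates that tangent on $[1/4,1]$ (Proposition \ref{prop:tangent_below}); the regime $F<F_\mathrm{int}$ is handled separately via $J_i\leq\sqrt{F_\mathrm{new}/2}<F_\mathrm{new}\leq J_\mathrm{\scriptscriptstyle UB}(F)$. A related (smaller) gap in your lower bound: the ``worst-case jump function'' is not a single protocol over $[1/4,F^*]$ — the ordering of $J_1,J_2,J_3$ flips between the two endpoints — so the bound must be built as the line joining the minimum at $1/4$ (one protocol) to the minimum at $F^*$ (another), which lies below every chord of every $J_i$ on that interval and hence, by concavity, below every $J_i$.
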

A proof of Lemma \ref{lem:bilocal_clifford_bounds} can be found in Appendix \ref{app.linear_bounds}. \reb{We show this by considering propoerties of the jump functions of bilocal Clifford protocols, which may be found explicitly.} Note that, despite the fact that we assume that newly generated entangled links are Bell-diagonal, other forms of the density matrix are also valid in practice, since they can be brought to Bell-diagonal form by adding extra noise \cite{Bennett1996a,Horodecki1999}. 
Note also that the lower bound for the jump function (\ref{eq.newlinearboundsF}) only applies when the fidelity of the buffered link is below $F^*$, but this is always the case in the 1G1B system, as shown in Appendix~\ref{app.linear_bounds}.

If we regard $F_\mathrm{new}$ as a fixed parameter, the upper and lower bounds to the jump function (\ref{eq.newlinearboundsF}) and (\ref{eq.newlinearboundsF_upp}) are linear in $F$, and the bounds to the success probability (\ref{eq.linearboundsp}) are constant.
It is now possible to find an upper and lower bound for the average consumed fidelity by combining Lemmas \ref{lem:formula_average_fidelity_linear_jump} and \ref{lem:bilocal_clifford_bounds}, as we do in the following corollary.

\begin{corollary}\label{corollary.bounds_avgconsfid}
    The average consumed fidelity of the 1G1B system when using any (nontrivial) bilocal Clifford protocol is lower bounded by
    \begin{equation}\label{eq.lowerbound_avgconsfid}
        \overline F_{\scriptscriptstyle \mathrm{l}} = \frac{\frac{1}{4}\Gamma + b_{\scriptscriptstyle \mathrm{l}} \lambda q p + F_\mathrm{new} \Big(\mu + \lambda q (1 - p_{\scriptscriptstyle \mathrm{l}})\Big)}{\Gamma + \mu + \lambda q (1-p_{\scriptscriptstyle \mathrm{l}}a_{\scriptscriptstyle \mathrm{l}})},
    \end{equation}
    and upper bounded by
    \begin{equation}\label{eq.upperbound_avgconsfid}
        \overline F_{\scriptscriptstyle \mathrm{u}} = \frac{\frac{1}{4}\Gamma + b_{\scriptscriptstyle \mathrm{u}} \lambda q p + F_\mathrm{new} \Big(\mu + \lambda q (1 - p_{\scriptscriptstyle \mathrm{u}})\Big)}{\Gamma + \mu + \lambda q (1-p_{\scriptscriptstyle \mathrm{u}}a_{\scriptscriptstyle \mathrm{u}})}
    \end{equation}
    where $a_{\scriptscriptstyle \mathrm{l}}$, $b_{\scriptscriptstyle \mathrm{l}}$, $p_{\scriptscriptstyle \mathrm{l}}$, $a_{\scriptscriptstyle \mathrm{u}}$, $b_{\scriptscriptstyle \mathrm{u}}$, and $p_{\scriptscriptstyle \mathrm{u}}$, are given by (\ref{eq.lower_bounds_ab}), (\ref{eq.upper_bounds_ab}), and (\ref{eq.linearboundsp}).
\end{corollary}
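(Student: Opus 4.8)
The plan is to combine the closed-form expression of Lemma~\ref{lem:formula_average_fidelity_linear_jump} with the linear and constant bounds of Lemma~\ref{lem:bilocal_clifford_bounds}, using monotonicity to pass from the true (rational) bilocal Clifford jump function to its linear envelopes. The starting point is the representation $\overline F = \frac{1}{A}\sum_{i=0}^\infty c_i \pi_i$ from Theorem~\ref{thm:average_fidelity_formula}, where $c_i = \mathbb{E}[F^{(i)}(Q_0,\dots,Q_i)]$ and $F^{(i)}$ is built from the jump function through the recurrence~(\ref{eqn:composite_fidelity_fn}).

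First I would establish a pointwise monotonicity principle: if two jump functions satisfy $J_1(F,\rho_\mathrm{new}) \leq J_2(F,\rho_\mathrm{new})$ for all $F$ in the operating range (with both increasing in $F$), then $c_i^{(1)} \leq c_i^{(2)}$ for every $i$. This follows by induction on~(\ref{eqn:composite_fidelity_fn}): since $J$ and the decoherence map $D_t$ (Definition~\ref{def:depol_noise}) are both monotonically increasing in the fidelity, chaining $J_1(F^{(i-1)}_1)\leq J_2(F^{(i-1)}_2)$ and applying $D_{t_i}$ gives $F^{(i)}_1 \leq F^{(i)}_2$ pointwise in the time arguments, and taking expectations over the $Q_j$ preserves the inequality. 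Because $\pi_i \geq 0$ and $A>0$ for fixed $p$, this sandwiches the average consumed fidelity of the real protocol between those of the two linear envelopes. To invoke the lower envelope~(\ref{eq.newlinearboundsF}), I would verify — as asserted below Lemma~\ref{lem:bilocal_clifford_bounds} and shown in Appendix~\ref{app.linear_bounds} — that the buffered fidelity never exceeds $F^*$, so that $a_\mathrm{\scriptscriptstyle l} F + b_\mathrm{\scriptscriptstyle l} \leq J(F,\rho_\mathrm{new})$ holds throughout the dynamics; the upper envelope~(\ref{eq.newlinearboundsF_upp}) is valid on all of $[1/4,1]$ and needs no such check.

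With the jump function replaced by $a_\mathrm{\scriptscriptstyle l}F+b_\mathrm{\scriptscriptstyle l}$ and $a_\mathrm{\scriptscriptstyle u}F+b_\mathrm{\scriptscriptstyle u}$, Lemma~\ref{lem:formula_average_fidelity_linear_jump} supplies closed forms for the two bounding averages as functions of $(a,b,p)$. It remains to control the success probability, for which we only know $p_\mathrm{\scriptscriptstyle l}\leq p \leq p_\mathrm{\scriptscriptstyle u}$ from~(\ref{eq.linearboundsp}). Here I would read off from~(\ref{eqn:formula_average_fidelity_linear_jump}) that $\overline F_{\scriptscriptstyle\mathrm{linear}}$ is increasing in $b$ (which appears only in the numerator with coefficient $\lambda q p \geq 0$) and increasing in $a$ (which appears only in the denominator with coefficient $-\lambda q p \leq 0$), and then use the monotonicity of $\overline F_{\scriptscriptstyle\mathrm{linear}}$ in $p$ from Proposition~\ref{lem:properties_average_fidelity_linear_jump}(b) to substitute the extremal $p_\mathrm{\scriptscriptstyle l}$ (for the lower bound) and $p_\mathrm{\scriptscriptstyle u}$ (for the upper bound) in the appropriate places. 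Collecting terms yields~(\ref{eq.lowerbound_avgconsfid}) and~(\ref{eq.upperbound_avgconsfid}).

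I expect the handling of $p$ to be the main obstacle. The success probability enters~(\ref{eqn:formula_average_fidelity_linear_jump}) simultaneously through the numerator (terms $b\lambda q p$ and $-F_\mathrm{new}\lambda q p$) and the denominator (term $-\lambda q p a$), with competing signs, so the direction in which each occurrence must be pushed to preserve a valid inequality is not uniform and requires the careful sign analysis underlying Proposition~\ref{lem:properties_average_fidelity_linear_jump}. A secondary subtlety is that the true bilocal Clifford success probability varies with $F$ (cf.\ Assumption~7b and~(\ref{eqn:probability_linear_general})), whereas the formula assumes a constant $p$; this is precisely what the uniform bounds $p_\mathrm{\scriptscriptstyle l}, p_\mathrm{\scriptscriptstyle u}$ are designed to absorb, reducing the variable-$p$ system to the two constant-$p$ extremes.
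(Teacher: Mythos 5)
Your overall strategy is the same one the paper relies on (the paper gives no separate proof of this corollary beyond ``combine Lemmas~\ref{lem:formula_average_fidelity_linear_jump} and~\ref{lem:bilocal_clifford_bounds}''), and the part you spell out that the paper leaves implicit is correct and necessary: since $D_t$ and every nontrivial bilocal Clifford jump function are increasing in $F$ (Proposition~\ref{prop:J_i_concave_increasing}), pointwise domination $a_{\scriptscriptstyle\mathrm{l}}F+b_{\scriptscriptstyle\mathrm{l}}\leq J(F,\rho_\mathrm{new})\leq a_{\scriptscriptstyle\mathrm{u}}F+b_{\scriptscriptstyle\mathrm{u}}$ on the operating range $[1/4,F^*]$ propagates through the recurrence (\ref{eqn:composite_fidelity_fn}) to give $c_i^{\scriptscriptstyle\mathrm{l}}\leq c_i\leq c_i^{\scriptscriptstyle\mathrm{u}}$ for each $i$, and for fixed $p$ the weights $\pi_i/A$ are unchanged, so the sandwich transfers to $\overline F$. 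You are also right to check that the buffered fidelity never exceeds $F^*$.

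The genuine gap is in your treatment of $p$. Proposition~\ref{lem:properties_average_fidelity_linear_jump}(b) only asserts that $\overline F_{\scriptscriptstyle\mathrm{linear}}$ is \emph{monotonic} in $p$; it does not tell you the direction, which is the sign of $\Gamma\left(4b+a-4F_\mathrm{new}\right)+4(\mu+\lambda q)\left(b-F_\mathrm{new}(1-a)\right)$. For the upper envelope one can verify $4b_{\scriptscriptstyle\mathrm{u}}+a_{\scriptscriptstyle\mathrm{u}}-4F_\mathrm{new}=0$ and $b_{\scriptscriptstyle\mathrm{u}}-F_\mathrm{new}(1-a_{\scriptscriptstyle\mathrm{u}})=(4F_\mathrm{new}-1)(1-F_\mathrm{new})/3\geq 0$, so $\overline F_{\scriptscriptstyle\mathrm{linear}}(a_{\scriptscriptstyle\mathrm{u}},b_{\scriptscriptstyle\mathrm{u}},\cdot)$ is nondecreasing in $p$ and substituting $p_{\scriptscriptstyle\mathrm{u}}$ legitimately yields (\ref{eq.upperbound_avgconsfid}). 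For the lower envelope, however, $z\coloneqq b_{\scriptscriptstyle\mathrm{l}}-F_\mathrm{new}(1-a_{\scriptscriptstyle\mathrm{l}})$ can be negative or near zero (the lower envelope is a deliberately pessimistic protocol), and since $4b+a-4F_\mathrm{new}=4z+a(1-4F_\mathrm{new})<4z$ for $F_\mathrm{new}>1/2$, the derivative in $p$ is then strictly negative for $\Gamma$ large enough; in that regime $\min_{p\in[p_{\scriptscriptstyle\mathrm{l}},p_{\scriptscriptstyle\mathrm{u}}]}\overline F_{\scriptscriptstyle\mathrm{linear}}(a_{\scriptscriptstyle\mathrm{l}},b_{\scriptscriptstyle\mathrm{l}},p)$ is attained at $p_{\scriptscriptstyle\mathrm{u}}$, not $p_{\scriptscriptstyle\mathrm{l}}$, so the chain $\overline F\geq\overline F_{\scriptscriptstyle\mathrm{linear}}(a_{\scriptscriptstyle\mathrm{l}},b_{\scriptscriptstyle\mathrm{l}},p)\geq\overline F_{\scriptscriptstyle\mathrm{linear}}(a_{\scriptscriptstyle\mathrm{l}},b_{\scriptscriptstyle\mathrm{l}},p_{\scriptscriptstyle\mathrm{l}})$ does not close. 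To repair the lower bound you must either establish the sign condition $\Gamma(4b_{\scriptscriptstyle\mathrm{l}}+a_{\scriptscriptstyle\mathrm{l}}-4F_\mathrm{new})+4(\mu+\lambda q)z\geq 0$ for the parameter regime considered (cf.\ the analogous sign discussion in Appendix~\ref{app.noise_threshold}), or take the minimum over the two endpoint values of $p$, or exploit the coupling between a protocol's success probability and its jump function rather than treating them as independent. Writing ``the careful sign analysis underlying Proposition~\ref{lem:properties_average_fidelity_linear_jump}'' acknowledges the issue but does not resolve it.
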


Now, we analyse the limits of the performance of the 1G1B system using the bounds on $\overline F$ from Corollary \ref{corollary.bounds_avgconsfid}.
Let us start with a 1G1B system with perfect memories, i.e. with $\Gamma = 0$.
This corresponds to an ideal situation that we can use as a benchmark: once we introduce noise, the average consumed fidelity will be lower than in this ideal case.
Figure \ref{fig.operating_regimes}(a) shows the achievable combinations of average consumed fidelity and availability for $F_\mathrm{new}=0.8$, generation rate $\lambda = 1$, and consumption rate $\mu = 0.1$. Below, we list some important observations that may be drawn from this Figure:
\begin{itemize}
    \item The regions shaded in grey correspond to \textbf{unattainable values} of average fidelity and availability, and they apply to any pumping scheme (bilocal Clifford or not). The average consumed fidelity cannot be larger than the one provided by a hypothetical protocol with jump function $J(F, \rho_\mathrm{new})=1$ and probability of success $p=1$, which is applied with probability $q=1$ (however, such a protocol does not exist).
    \item The performance of a 1G1B system that uses any \textbf{bilocal Clifford protocol} is contained within the \textbf{region shaded in blue and yellow}.
    The yellow/blue line corresponds to a hypothetical protocol with jump function and success probability saturating the lower/upper bounds from (\ref{eq.newlinearboundsF}) and (\ref{eq.linearboundsp}).
    For a fixed target availability, the blue line provides an upper bound on the maximum average consumed fidelity that can be achieved by using bilocal Clifford protocols. Here, we observe again the tradeoff between both performance metrics: if our target availability is very close to the maximum value, we cannot increase the average consumed fidelity beyond $F_\mathrm{new}$ (dotted line); but as we decrease the desired availability, we can achieve a higher consumed fidelity until we reach a maximum value.
    \item As a reference, we show the performance of the \textbf{\emph{replacement protocol}} (red star): in such a protocol, every time a new link is generated in the bad memory, the link in the good memory is replaced by the new one, without any form of purification. The replacement protocol is not bilocal Clifford because success is always declared (in bilocal Clifford circuits, success depends on some measurement outcomes \cite{Dehaene2003}). This simple protocol achieves maximum availability, given by $A = \lambda / (\lambda + \mu)$. However, since no purification is performed, this protocol cannot increase the fidelity above the initial value $F_\mathrm{new}$. In the absence of decoherence, the replacement protocol is equivalent to applying no purification at all ($q=0$).
\end{itemize}

In Figure \ref{fig.operating_regimes}(b), we perform a similar analysis for a 1G1B system in which the good memory has a finite lifetime, i.e. $\Gamma > 0$. This is a more realistic scenario. The following observations may be drawn from this Figure:
\begin{itemize}
    \item \textbf{Imperfect memories decrease the average consumed fidelity} but do not affect the availability. The availability is unaffected by the decoherence experienced by the entangled links, and therefore can take the same range of values as in Figure \ref{fig.operating_regimes}(a).
    \item The replacement protocol no longer provides an average fidelity $F_\mathrm{new}$. Instead, the average fidelity is lower than $F_\mathrm{new}$ since the quality of the state stored in the good memory decreases over time and is never increased beyond $F_\mathrm{new}$ due to the absence of purification. However, the \textbf{replacement protocol performs better than no pumping} at all ($q=0$). This is because the system can improve its fidelity every time a new link is produced, instead of waiting for a consumption event.
    \item In the presence of noise, the lower and upper bounds for bilocal Clifford protocols also shift towards lower values of average fidelity. Both the upper and lower bounds take their minimum value at $q=0$.
    This means that, in the presence of noise, any pumping protocol will increase the average consumed fidelity, i.e. \textbf{any pumping \reb{($q>0$)} is better than no pumping \reb{($q=0$)}}, even if it succeeds with the lowest-possible probability. 
    This is in contrast to when there is no noise (Figure \ref{fig.operating_regimes}(a)), where the lower bound takes its minimum at $q=1$ and no such conclusion can be drawn.
    In fact, this conclusion (any pumping is better than no pumping) always applies when the amount of noise, $\Gamma$, \reb{is above the following threshold:
    \begin{equation}\label{eq.noise_threshold}
        \Gamma > 4 \mu p \frac{F_\mathrm{new}(1-a)-b}{4F_\mathrm{new}(1-p) + (4b+a)p - 1},
    \end{equation}
    where $a$, $b$, and $p$ are given by the choice of purification protocol (see (\ref{eq.linear_jump_function})).
    In Appendix \ref{app.noise_threshold}} we compute this threshold analytically.
\end{itemize}

\begin{figure}[h]
  \centering
  \includegraphics[width=0.9\textwidth]{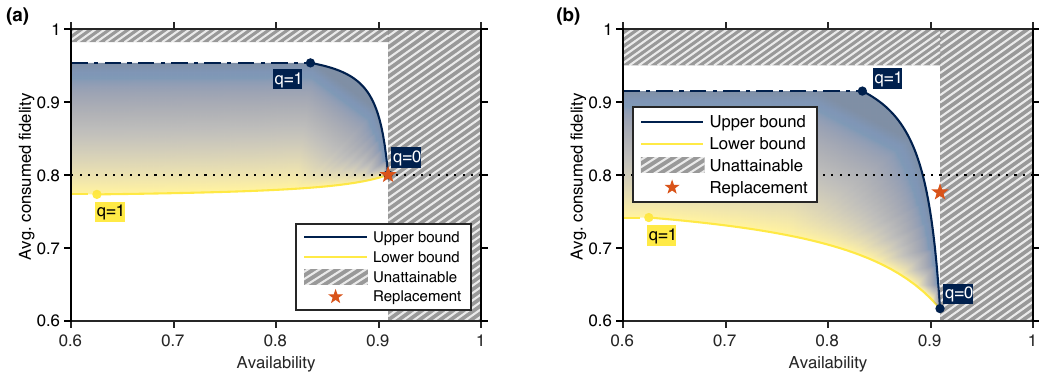}
  \caption{Noise in the memories decreases the average consumed fidelity but does not affect the availability.
  Bounds on the performance of a 1G1B system with a bilocal Clifford protocol,
  and with \textbf{(a)} noiseless memories ($\Gamma = 0$) or \textbf{(b)} noisy memories ($\Gamma = 5\cdot10^{-2}$ a.u.).
    For a given target availability, the average consumed fidelity is within the blue/yellow region (see Corollary \ref{corollary.bounds_avgconsfid}).
Availability is maximized for $q=0$ ($q$ is the probability of purification after successful entanglement generation), and it decreases for increasing $q$.
  White regions cannot be achieved by bilocal Clifford protocols.
  Striped regions cannot be achieved by any pumping protocol.
  Red star: performance of the replacement protocol (buffered link is replaced by new links).
  Dotted line: fidelity of newly generated entangled links.
    Parameters used in this example (times and rates in the same arbitrary units): $\lambda = 1$, $\mu = 0.1$, $F_\mathrm{new}=0.8$, $\rho_\mathrm{new} = F_\mathrm{new} \ketbra{\phi^+} + (1-F_\mathrm{new}) \left(\ketbra{\psi^+} +  \ketbra{\psi^-}\right)/2$.
  }
  \label{fig.operating_regimes}
\end{figure}


\section{Conclusions and outlook}\label{sec.conclusions}
Our work sheds light on how to buffer high-quality entanglement shared among remote nodes in a quantum network.
We have proposed two metrics to measure the performance of an entanglement buffering system: the availability and the average consumed fidelity.
The availability corresponds to the fraction of time in which entanglement is available for consumption. The average consumed fidelity measures the quality of the entanglement upon consumption. We have used these performance metrics to analyse the 1G1B system, an entanglement buffering setup that uses two quantum memories per node. One of these memories has a finite lifetime and is used to buffer the entanglement, while the other memory is only used for entanglement generation. Entanglement generated in the bad memory can be used to pump the entanglement stored in the good memory.
We have modelled the system as a continuous-time stochastic process and derived analytical expressions for both performance metrics.
Our results confirm the intuition that, except in some edge cases, there is a trade-off between consuming entanglement at a high rate (high availability) and consuming high-quality entanglement (high average consumed fidelity).
Remarkably, we found that, in a practical scenario (i.e. when the pumping protocol is bilocal Clifford and there is noise in the good memory), pumping the buffered entanglement is better than no pumping in terms of average consumed fidelity, even if the pumping has some probability of failure.

An assumption that allows us to find analytical solutions for our performance metrics is that the success probability of purification is constant over time. The model would be more realistic if the probability of successful purification was dependent on the state fidelity at that time, since this is the case for most protocols (in particular, the probability of successful purification is typically lower for input states with lower fidelity). This may mean that, realistically, the computation of the average fidelity when \textit{conditioning} on successful purification may bias the system towards higher fidelity. 
However, we believe the comparison of our model (constant success probability) with a more realistic one incorporating this effect (success probability dependent on $F(t)$) to be beyond the scope of this work, since we expect this to greatly complicate the analysis of the problem. 

Our proposed metrics can be used to evaluate the performance of other entanglement buffering systems.
An interesting extension of this work would be to compare the performance of the 1G1B system to a bipartite entanglement buffering setup with $n$ quantum memories per node. In such a system, one could employ more advanced purification protocols that consume more than two entangled states.
We also expect that the mathematical framework developed in this work can be used to initiate the performance analysis of more complex systems.
We leave this as future work.


\reb{\section*{Acknowledgements}
We thank K. Goodenough, S. de Bone, and P. Kaku for discussions and feedback, and T. Beauchamp, F. Ferreira da Silva, and S. Kar for proofreading.
BD aknowledges financial support from a KNAW Ammodo Award (SW).
\'{A}GI acknowledges support from the Netherlands Organisation for Scientific Research (NWO/OCW), as part of the Frontiers of Nanoscience program.
SW acknowledges support from an NWO VICI grant.}

\bibliographystyle{plain}


\onecolumn\newpage
\appendix

\clearpage
\section{Quantum preliminaries}
\label{app:quantum_preliminaries}
Here, we introduce the quantum information concepts needed in this work. For a general reference on quantum information theory, see, e.g., ref. \cite{Nielsen2002}.

In classical computer science, information is generally stored in the form of \emph{bits}, discrete binary variables that can take value zero or one. In quantum information, bits are generalized to \emph{qubits}.
Qubits describe systems that can be in a linear combination of two different states (say, state zero and state one).
As a consequence, a qubit must not be described as a binary variable but as a vector $\ket{\psi}\in\mathbb{C}^2$ with unit norm.
We represent these vectors using the Dirac notation, and we refer to $\ket{\cdot}$ as a ket and to its conjugate transpose $\bra{\cdot} \equiv \ket{\cdot}^\dagger$ as a bra.
The state of a \emph{pure} $n$-qubit system is then described by a $d$-dimensional vector $\ket{\psi}\in\mathbb{C}^d$, with $d=2^n$.
The elements of the basis of $\mathbb{C}^d$ are usually labeled $\ket{\mathbf{x}}$, with $\mathbf{x}\in\{0,1\}^n$. For example, the basis of the two-qubit space $\mathbb{C}^4$ can be written as $\{ \ket{00}, \ket{01}, \ket{10}, \ket{11} \}$.

Let us now consider a two-qubit system $\ket{\psi}\in\mathbb{C}^4$.
When one of the qubits is in state $\ket{\psi}_1\in\mathbb{C}^2$ and the other qubit is in state $\ket{\psi}_2\in\mathbb{C}^2$, we can write the joint state of the system as the tensor product of the individual qubits: $\ket{\psi} = \ket{\psi}_1 \otimes \ket{\psi}_2$. This is called a \emph{product state}.
When the two qubits are \emph{entangled}, it is not possible to describe their joint state as a tensor product.
One of the intuitive effects of the entanglement is that, if both qubits are measured, the measurement outcomes will be correlated.
This is the basis for many quantum networking applications, such as quantum key distribution \cite{Ekert1991} and quantum secret sharing \cite{Hillery1999,Cleve1999}.

Noise and quantum operations, such as \emph{gates} and \emph{measurements}, modify the state of a quantum system.
When dealing with noisy systems, it is generally more convenient to represent quantum states using the \emph{density matrix} formalism instead of kets and bras.
The density matrix of a \emph{pure} state $\ket{\psi}$ can be written as the outer product $\rho = \ketbra{\psi}$.
When the system is not pure, it is called \emph{mixed}, and it is written as a mixture of pure states: $\rho = \sum_i \alpha_i \ketbra{\psi_i}$, with $\alpha_i\in [0,1]$ and $\sum_i \alpha_i = 1$.
Intuitively, this corresponds to a system with some degree of uncertainty from a classical point of view. For example, consider a device that prepares the state $\ket{\psi_1}$ with probability $\alpha$ and the state $\ket{\psi_2}$ with probability $1-\alpha$. We can write the output state of the device as a mixed state $\alpha \ketbra{\psi_1} + (1-\alpha) \ketbra{\psi_2}$.

We refer to a two-qubit state $\rho$ as an \emph{entangled link} (in our setup, each qubit is located at a distant node, and the entanglement can be regarded as "link" connecting them).
Entangled links can be entangled to different degrees.
The \emph{Bell states} are examples of two-qubit maximally entangled (pure) states:
    \begin{equation}
        \ket{\phi^+} = \frac{\ket{00}+\ket{11}}{\sqrt{2}}, \;
        \ket{\psi^+} = \frac{\ket{01}+\ket{10}}{\sqrt{2}}, \;
        \ket{\psi^-} = \frac{\ket{01}-\ket{10}}{\sqrt{2}}, \;
        \ket{\phi^-} = \frac{\ket{00}-\ket{11}}{\sqrt{2}}.
    \end{equation}
Measurements on the individual qubits of any of these states yield maximally correlated outcomes.
All two-qubit maximally entangled states are equivalent, since they can be mapped via single-qubit operations to each other. Therefore, we can measure the \emph{quality of the entanglement} of a two-qubit state by measuring how close the state is to one of the Bell states, say $\ket{\phi^+}$.
Formally, we do this using the \emph{fidelity} of the state:
\begin{equation}
    F(\rho) = \bra{\Phi^+}\rho \ket{\Phi^+},
\end{equation}
where $\rho$ is the density matrix of an arbitrary two-qubit state.

Lastly, an important type of mixed entangled state is the \emph{Bell-diagonal state}:
\begin{equation*}
        \rho_\mathrm{\scriptscriptstyle BD} = F_\mathrm{\scriptscriptstyle BD} \ketbra{\phi^+} + \lambda_1 \ketbra{\psi^+} + \lambda_2 \ketbra{\psi^-} + \lambda_3 \ketbra{\phi^-},
    \end{equation*}
with $F_\mathrm{\scriptscriptstyle BD}, \lambda_1, \lambda_2, \lambda_3 \in [0,1]$ subjected to the normalization constraint $F_\mathrm{\scriptscriptstyle BD} + \lambda_1 + \lambda_2 + \lambda_3 = 1$. The fidelity of this state is $F_\mathrm{\scriptscriptstyle BD}$.
Bell-diagonal states are relevant because any two-qubit state can be transformed into Bell-diagonal form while preserving the fidelity by applying extra noise, a process known as \emph{twirling} \cite{Bennett1996a,Horodecki1999}. 
A specific instance of Bell-diagonal state is the Werner state \cite{Werner1989}:
\begin{equation}
    \rho_\mathrm{\scriptscriptstyle W} = F \ketbra{\phi^+} + \frac{1-F}{3} \ketbra{\psi^+} + \frac{1-F}{3} \ketbra{\psi^-} + \frac{1-F}{3} \ketbra{\phi^-},
    \label{eqn:werner_state}
\end{equation}
with fidelity $F\in[0,1]$. The Werner state corresponds to a maximally entangled state that has been subjected to isotropic noise.

\clearpage
\reb{ \section{General form of jump function}
\label{app:jump_function_general_form}
In this Appendix, we explain the form (\ref{eqn:jump_function_rational}) and (\ref{eqn:probability_linear_general}) of the jump function and success probability for a general purification protocol, for two input states $\rho_\mathrm{\scriptscriptstyle W}$ and $\rho_{\mathrm{new}}$, where
    \begin{equation*}
\rho_\mathrm{\scriptscriptstyle W} = F \ketbra{\phi^+} + \frac{1-F}{3} \ketbra{\psi^+} + \frac{1-F}{3} \ketbra{\psi^-} + \frac{1-F}{3} \ketbra{\phi^-}
\end{equation*}
is a Werner state and $\rho_{\mathrm{new}}$ is a general two-qubit state. Suppose that the purification protocol is described by a sequence of (possibly noisy) quantum operations that are described by a CPTP map $\Lambda$, and the final measurement outcome that signals success has measurement operator $M_{\mathrm{succ}}$. From e.g. Chapter 2.4 of \cite{Nielsen2002}, the output state is then given~by 
\begin{equation}
    \rho' = \frac{M_{\mathrm{succ}} \Lambda\left( \rho_\mathrm{\scriptscriptstyle W} \otimes \rho_{\mathrm{new}} \right)M_{\mathrm{succ}}^{\dag}}{p(F,\rho_{\mathrm{new}})},
\end{equation}
where 
\begin{equation}
    p(F,\rho_{\mathrm{new}}) = \mathrm{Tr}\left[M_{\mathrm{succ}} \Lambda\left( \rho_\mathrm{\scriptscriptstyle W} \otimes \rho_{\mathrm{new}} \right)M_{\mathrm{succ}}^{\dag} \right].
\end{equation}
We next rewrite the Werner state as
\begin{align*}
    \rho_\mathrm{\scriptscriptstyle W} &= F \ketbra{\phi^+} + (1-F) \rho^{\perp} \\ &= \rho^{\perp} + F \left( \ketbra{\phi^+} - \rho^{\perp}\right)
\end{align*}
where 
\begin{equation*}
    \rho^{\perp} = \frac{1}{3}\left( \ketbra{\psi^+} + \ketbra{\psi^-} + \ketbra{\phi^-}\right),
\end{equation*}
and $p$ is the probability of success. We therefore have 
\begin{align*}
M_{\mathrm{succ}} \Lambda\left( \rho_\mathrm{\scriptscriptstyle W} \otimes \rho_{\mathrm{new}} \right)M_{\mathrm{succ}}^{\dag} =\; &M_{\mathrm{succ}} \Lambda\left( \rho^{\perp} \otimes \rho_{\mathrm{new}} \right)M_{\mathrm{succ}}^{\dag}\\
&+ F\cdot M_{\mathrm{succ}} \Lambda\left( \left(\ketbra{\phi^+} - \rho^{\perp}\right) \otimes \rho_{\mathrm{new}} \right)M_{\mathrm{succ}}^{\dag},
\end{align*}
and taking the trace of the above yields
\begin{equation*}
    p(F,\rho_{\mathrm{new}}) = d(\rho_{\mathrm{new}} ) + F\cdot c(\rho_{\mathrm{new}} ),
\end{equation*}
where $c$ and $d$ are obtained from the choice of purification protocol, i.e. from $\Lambda$ and $M_{\mathrm{succ}}$. Similarly, the output fidelity of upon success is given by 
\begin{align*}
    \bra{\phi^+} \rho' \ket{\phi^+} = \frac{F\cdot \tilde{a}(\rho_{\mathrm{new}}) + \tilde{b}(\rho_{\mathrm{new}})}{p(F,\rho_{\mathrm{new}})},
\end{align*}
where 
\begin{align*}
    \tilde{a}(\rho_{\mathrm{new}}) &= \bra{\phi^+} M_{\mathrm{succ}} \Lambda\left( \left( \ketbra{\phi^+} - \rho^{\perp} \right) \otimes \rho_{\mathrm{new}} \right)M_{\mathrm{succ}}^{\dag} \ket{\phi^+}, \\  \tilde{b}(\rho_{\mathrm{new}}) &= \bra{\phi^+} M_{\mathrm{succ}} \Lambda\left(\rho^{\perp}  \otimes \rho_{\mathrm{new}} \right)M_{\mathrm{succ}}^{\dag} \ket{\phi^+}.
\end{align*}
This confirms the form (\ref{eqn:jump_function_rational}) and (\ref{eqn:probability_linear_general}) for the jump function and success probability.}

\clearpage

\section{Formulae for performance metrics}\label{app.performance_metrics}
In this Appendix, we prove Proposition \ref{prop:1G1B_stationary_distribution} and Theorem \ref{thm:average_fidelity_formula}, which provide the formulae for our two performance metrics (availability and average consumed fidelity).
First, in \ref{app:subsec_simplified_1G1B}, we describe the stochastic process in the 1G1B setup in a simplified form and we provide some intermediate results that are necessary for the main proofs.
Then, in \ref{app.num_pur_subsec}, we employ the results from \ref{app:subsec_simplified_1G1B} to prove Proposition \ref{prop:1G1B_stationary_distribution} and Theorem \ref{thm:average_fidelity_formula}.

\subsection{Simplified 1G1B}
\label{app:subsec_simplified_1G1B}
We now only view the 1G1B system as taking one of two states: $\emptyset$ (no entangled link in memory G), or $\neg \emptyset$ (link in memory G). The system then alternates between these two states. For an illustration, see Figure \ref{fig:1g1b_simplified}. 

\begin{figure}[h]
    \centering
    \includegraphics[width=0.9\linewidth]{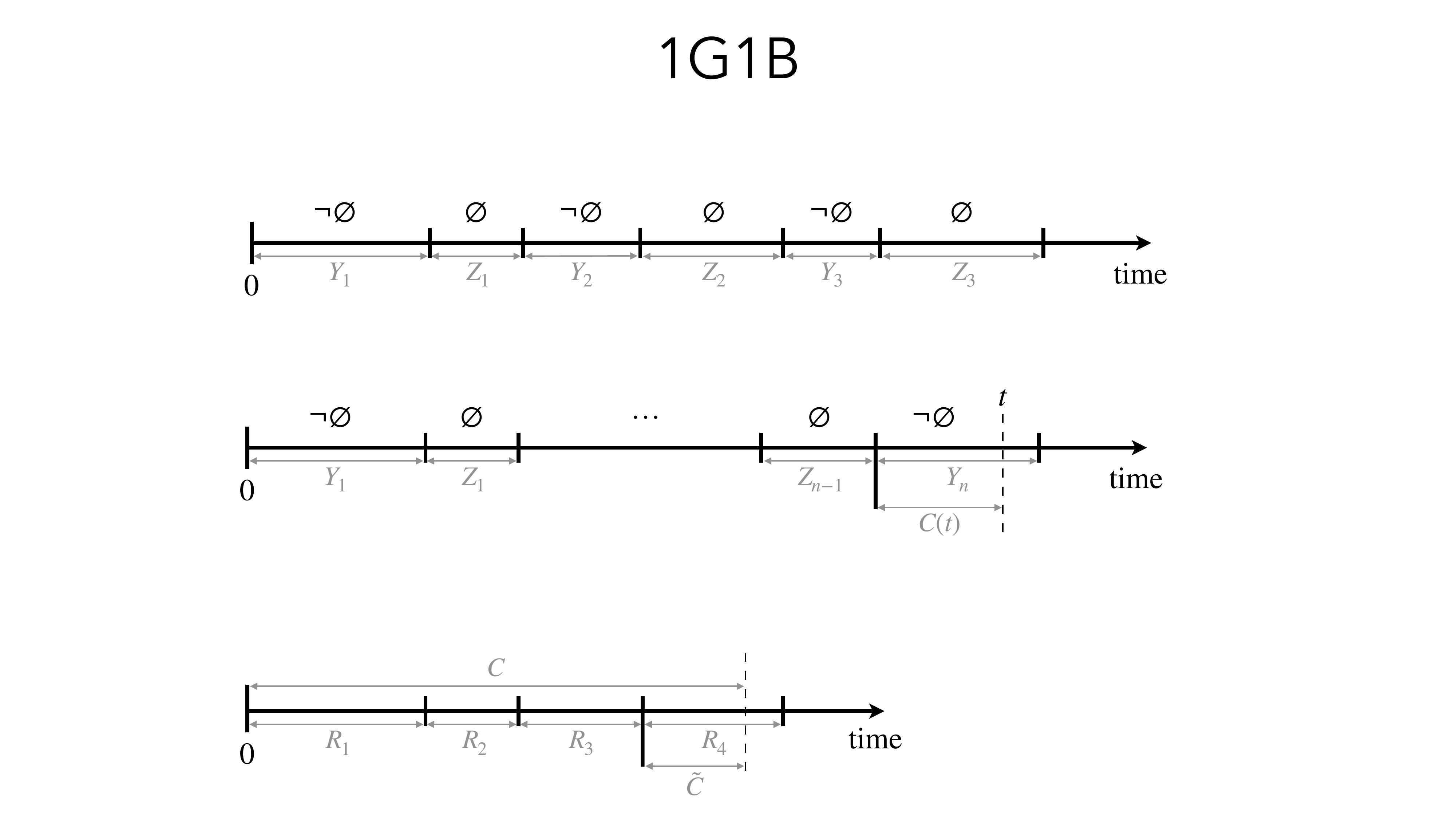}
    \caption{The simplified 1G1B process. The system alternates between the states $\neg \emptyset$ (link in memory G) and $\emptyset$ (no link in memory G). The system starts in $\neg \emptyset$.
    The times spent in $\neg \emptyset$ and $\emptyset$ are denoted by $Y_i$ and $Z_i$, respectively.}
    \label{fig:1g1b_simplified}
\end{figure}

More formally, the simplified 1G1B process is the following.

\begin{definition}[Simplified 1G1B]\label{def.simplified1G1B}
    Let $r(t)\in \{ \emptyset, \neg \emptyset\}$ denote the state of simplified 1G1B at time $t$. Suppose that $r(0)= \neg \emptyset$, i.e. the system starts when there is a link in memory F. Let $Y_1$ be the time until this first link is removed, and let $Z_1$ be the time for which the system is empty until a fresh link is produced again. Let $\{Y_i\}_{i\geq 1}$ be the times spent in $\neg \emptyset$ until the link was removed from memory G (due to consumption or failed purification), and $\{Z_i\}_{i\geq 1}$ be the times which the system spent in $\emptyset$ until a link was produced. Then, according to our model of 1G1B, the $Y_i$ are i.i.d. and exponentially distributed with rate $\beta \coloneqq \mu + \lambda q (1-p)$, and the $Z_i$ are i.i.d and exponentially distributed with rate $\lambda$.
\end{definition}

Recall that $\lambda$ is the rate of generation of new entangled links, $\mu$ is the rate of consumption of links in memory G, $q$ is the probability of immediately using new links for pumping, and $p$ is the probability of successful pumping.

We will write the distribution functions as $F_Y(t) = P(Y_1\leq t) = 1-e^{-\beta t}$, and $F_Z(t) = P(Z_1\leq t) = 1-e^{-\lambda t}$. The process $X_i \coloneqq Y_{i} + Z_i$ defines a \textit{renewal process}, which we introduce with the following definition. 
\begin{definition}
    A \textit{renewal process} $ \{ N = N(t): t\geq 0 \}$ is a process such that 
    \begin{equation}
        N(t) = \max \{n: A_n \leq t \}
    \end{equation}
    where $A_0 = 0$, $A_n = X_1+...+X_n$ for $n\geq 1$, and $X_i$ is a sequence of i.i.d. and strictly positive random variables. 
    \label{def:renewal_process}
\end{definition}
The value $A_n$ is referred to as the $n$th \textit{arrival time} of the process, and the values $X_i$ are known as the \textit{interarrival times}. From now on, we also use $A_0 = 0$, $A_n = X_1+....+X_n$ to denote the $n$th time at which a fresh link is produced, causing the system to move from $\emptyset$ into $\neg \emptyset$.

The \textit{renewal function} is central to renewal theory, which we define below. Throughout, we use the convention $\dd g(x) \equiv g'(x) \dd x$ for differentiable functions $g$.
\begin{definition}
    Let $N(t)$ be a renewal process. Then, the \textit{renewal function} is $m(t)\coloneqq \mathbb{E}[N(t)]$.
\end{definition}
We will derive formulae for the availability and average consumed fidelity using this mathematical framework. An important result that we will use in order to do this is the renewal theorem, which we state below. This result assumes that the $X_i$ are not \textit{arithmetic}. If $X_1$ is arithmetic, this essentially means that $X_1$ only takes values in a set $\{m k: m=0,\pm 1,... \}$, with $k>0$. For more details of arithmetic random variables, see Chapter 10 of \cite{grimmett2020}.
\begin{theorem}[Renewal Theorem/ Theorem 10.1.11 from \cite{grimmett2020})] Consider a renewal process as given in Definition \ref{def:renewal_process}. Let $F_X$ be the distribution function of the random variable $X_1$, where $X_1$ is not arithmetic. Let $H(t)$ be a bounded function. Consider solutions $f$ to the renewal-type equation
\begin{equation}
    f(t) = H(t) + \int_{0}^t f(t-x) \dd F_X(x). 
    \label{eqn:renewal_type_equation}
\end{equation}
Then, a solution is 
\begin{equation}\label{eq.renewal_type_solution}
    f(t) = H(t) + \int_{0}^t H(t-x) \dd m(x).
\end{equation}
 If $H$ is bounded on finite intervals
 then $f$ is bounded on finite intervals, and (\ref{eq.renewal_type_solution}) is the unique solution of (\ref{eqn:renewal_type_equation}) with this property.
\label{thm:renewal}
\end{theorem}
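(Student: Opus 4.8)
The statement is the existence and uniqueness of solutions to the renewal-type equation, so the plan is to exhibit the candidate solution explicitly and then show it is the only one that stays bounded on finite intervals. Write $F_n$ for the $n$-fold Stieltjes convolution of $F_X$ with itself (so $F_n(t) = P(A_n \le t)$, with $F_0 = \delta_0$ the point mass at the origin), and for a measure $\nu$ and a locally bounded function $g$ set $(\nu * g)(t) = \int_0^t g(t-x)\,\nu(\dd x)$. The structural fact I would establish first is the series representation of the renewal function, $m(t) = \mathbb{E}[N(t)] = \sum_{n\ge1} F_n(t)$, equivalently $\dd m = \sum_{n \ge 1} \dd F_n$ as measures. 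I would also record that $m(t) < \infty$ on every finite interval: this is the no-explosion property of renewal processes with strictly positive interarrival times. It follows by picking $a$ with $F_X(a) < 1$ (such $a$ exists by right-continuity, since $F_X(0^+) = P(X_1 \le 0) = 0$) together with the submultiplicativity $F_{m+n}(t) \le F_m(t)\,F_n(t)$ and the monotonicity $F_{n+1}(t) \le F_n(t)$, which make the series geometrically summable.

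For existence, the candidate $f^\ast(t) = H(t) + \int_0^t H(t-x)\,\dd m(x)$ can be rewritten via the series as $f^\ast = \sum_{n\ge0} F_n * H$. Substituting into the right-hand side of the renewal equation and using associativity of convolution, $F_X * F_n = F_{n+1}$, gives
\[
H + F_X * f^\ast = H + \sum_{n\ge0} F_{n+1} * H = \sum_{n\ge0} F_n * H = f^\ast,
\]
so $f^\ast$ solves the equation. The interchange of the sum with the convolution integral is justified by Tonelli together with $m(t) < \infty$ and the local boundedness of $H$, and the bound $|f^\ast(t)| \le \big(\sup_{[0,t]}|H|\big)\,\big(1 + m(t)\big)$ shows that $f^\ast$ is itself bounded on finite intervals.

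For uniqueness, suppose $f_1, f_2$ both solve the equation and are bounded on finite intervals, and set $g = f_1 - f_2$. The inhomogeneous term $H$ cancels, leaving $g = F_X * g$, and iterating gives $g = F_n * g$ for every $n$. Fixing $T$ and writing $M = \sup_{[0,T]} |g| < \infty$, I would bound, for $t \le T$,
\[
|g(t)| = \left| \int_0^t g(t-x)\,\dd F_n(x) \right| \le M\,F_n(T),
\]
and since $\sum_n F_n(T) = m(T) < \infty$ forces $F_n(T) \to 0$, it follows that $g \equiv 0$ on $[0,T]$; as $T$ is arbitrary, $f_1 = f_2$.

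The main obstacle is not any single step but the measure-theoretic bookkeeping: making the convolution algebra with the Stieltjes measures $\dd F_n$ and $\dd m$ precise, justifying the Tonelli interchange, and pinning down $m(t) < \infty$ on finite intervals. I would also flag that the non-arithmetic hypothesis on $X_1$ is not actually needed for this existence/uniqueness statement; it enters only in the asymptotic (Blackwell / key-renewal) half of the full Theorem 10.1.11 of \cite{grimmett2020}, which is not asserted here.
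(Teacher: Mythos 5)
The paper does not prove this statement at all: it is imported verbatim as Theorem 10.1.11 of \cite{grimmett2020} and used as a black box, so there is no in-paper argument to compare against. Your proof is the standard textbook existence-and-uniqueness argument for the renewal-type equation (essentially the one in the cited reference): the series representation $m = \sum_{n\ge 1} F_n$ with local finiteness, the telescoping identity $F_X * f^\ast = f^\ast - H$ for the candidate $f^\ast = \sum_{n\ge 0} F_n * H$, and uniqueness via $g = F_n * g$ together with $F_n(T)\to 0$. All of these steps are correct, and your closing remark is also right: the non-arithmetic hypothesis is irrelevant to existence and uniqueness and only matters for the asymptotic (key renewal / Blackwell) content, which this statement does not assert. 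The one place where your sketch leaves a small step implicit is the passage from ``there exists $a$ with $F_X(a)<1$'' to ``$F_k(t)<1$ for some $k$'' at an arbitrary $t$, which is what submultiplicativity needs in order to yield geometric decay of $F_{nk}(t)$; the one-line fix is to take $k$ with $ka>t$ and note that $A_k\le t$ forces some $X_i\le a$, whence $F_k(t)\le 1-(1-F_X(a))^k<1$. With that filled in, the argument is complete.
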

The renewal-type equation often arises when studying renewal processes, as we will see further on. The following result may be derived using Theorem \ref{thm:renewal}, and is useful when taking the infinite limit.
\begin{theorem}[Key renewal theorem/Theorem 11.2.7 from \cite{grimmett2020}]
    If $g:[0,\infty)\rightarrow [0,\infty)$ is such that 
 \begin{enumerate}[label=(\alph*)]
 \item $g(t) \geq 0 $ for all $t$,
 \item $\int_{0}^{\infty} g(t) \dd t <\infty$,
 \item $g$ is a non-increasing function,
 \end{enumerate}
 then $$\lim_{t\rightarrow\infty} \int_{0}^t g(t-x)\dd m(x) = \frac{1}{\mathbb{E}[X_1]} \int_{0}^{\infty} g(x) \dd x,$$
 whenever $X_1$ is not arithmetic.
 \label{thm:key_renewal}
\end{theorem}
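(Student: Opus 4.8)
The plan is to reduce the statement to its natural building block --- the indicator of an interval --- and then bootstrap up to a general $g$ satisfying (a)--(c) by monotone approximation. The key analytic input is Blackwell's renewal theorem: when $X_1$ is non-arithmetic, $m(t+h)-m(t)\to h/\mathbb{E}[X_1]$ as $t\to\infty$ for every fixed $h>0$. I would take this as the workhorse (it is the genuinely hard, non-arithmetic-dependent ingredient; see below), since it is strictly stronger than the renewal-type solution formula of Theorem \ref{thm:renewal} and is exactly what encodes the limiting behaviour.

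First I would verify the claim for $g=\mathbf{1}_{[0,h]}$. For such $g$, and for $t\ge h$,
\begin{equation*}
\int_0^t g(t-x)\,\dd m(x)=\int_{t-h}^{t}\dd m(x)=m(t)-m(t-h)\xrightarrow[t\to\infty]{}\frac{h}{\mathbb{E}[X_1]}=\frac{1}{\mathbb{E}[X_1]}\int_0^\infty g(x)\,\dd x,
\end{equation*}
which is precisely Blackwell's theorem. By linearity this extends immediately to any finite nonnegative step function built from interval indicators, and this is the computational core of the whole argument.

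Next I would handle a general $g$. Conditions (a)--(c) say exactly that $g$ is nonnegative, integrable and non-increasing, which makes it \emph{directly Riemann integrable}: fixing a mesh $\delta>0$, monotonicity lets me sandwich $g$ between the step functions $g^-_\delta\le g\le g^+_\delta$ that are constant on each $[n\delta,(n+1)\delta)$ and equal there to $g((n+1)\delta)$ and $g(n\delta)$ respectively. Since $g$ is non-increasing with $g\to 0$ (integrability plus monotonicity) and $g(0)<\infty$, the telescoping estimate $\int_0^\infty (g^+_\delta-g^-_\delta)\,\dd x=\delta\,g(0)\to 0$ controls the approximation error as $\delta\to 0$. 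Applying the step-function case to $g^\pm_\delta$ and using that the convolution is monotone in its integrand (as $m$ is non-decreasing) gives
\begin{equation*}
\frac{1}{\mathbb{E}[X_1]}\int_0^\infty g^-_\delta\,\dd x\le\liminf_{t\to\infty}\int_0^t g(t-x)\,\dd m(x)\le\limsup_{t\to\infty}\int_0^t g(t-x)\,\dd m(x)\le\frac{1}{\mathbb{E}[X_1]}\int_0^\infty g^+_\delta\,\dd x,
\end{equation*}
and letting $\delta\to 0$ squeezes both one-sided limits to $\tfrac{1}{\mathbb{E}[X_1]}\int_0^\infty g\,\dd x$.

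Two technical points need care, and I expect the second to be the main obstacle. The functions $g^\pm_\delta$ are \emph{infinite} sums of interval indicators, so the finite-step result does not apply verbatim; I would truncate $g$ at a large level $T$, apply the finite-step case to the truncation, and bound the tail using the elementary renewal theorem $m(t)/t\to 1/\mathbb{E}[X_1]$, which forces the tail contribution to be $O\!\big(\mathbb{E}[X_1]^{-1}\int_T^\infty g\big)$, hence negligible. The genuinely hard part, however, is Blackwell's theorem itself: its proof is where non-arithmeticity is indispensable, and the cleanest route is a \emph{coupling} argument in which the given renewal process is run alongside a stationary (equilibrium-delayed) copy, with non-arithmeticity guaranteeing that the two processes almost surely share a renewal epoch after a finite coupling time, beyond which their counting functions agree. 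Proving that this coupling time is finite almost surely is the crux on which the whole theorem rests.
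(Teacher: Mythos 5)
The paper does not prove Theorem \ref{thm:key_renewal}: it is imported verbatim as Theorem 11.2.7 of \cite{grimmett2020} and used as a black box in the proofs of Proposition \ref{prop:limit_p(t)} and Lemma \ref{lem:current_lifetime_limit}, so there is no in-paper argument to compare yours against. Your proposal is the standard textbook proof of the key renewal theorem --- verify the claim for $g=\mathbf{1}_{[0,h]}$, where it reduces exactly to Blackwell's theorem, extend to finite step functions by linearity, and then sandwich a nonnegative, integrable, non-increasing $g$ (conditions (a)--(c) make it directly Riemann integrable) between monotone step approximations whose $L^1$ gap $\delta\,g(0)$ vanishes --- and the structure is sound. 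One technical correction: to control the tail of the infinite step sum you invoke the elementary renewal theorem $m(t)/t\to 1/\mathbb{E}[X_1]$, but a statement about the average growth of $m$ does not by itself bound the individual increments $m(t-n\delta)-m(t-(n+1)\delta)$ uniformly in position, which is what the tail estimate actually requires. The right tool is the elementary regeneration bound $m(s+h)-m(s)\le m(h)+1$ for all $s$ (condition on the first renewal after time $s$), which gives a tail contribution of at most $(m(\delta)+1)\,\delta^{-1}\int_{T-\delta}^{\infty}g(x)\,\dd x$, hence negligible as $T\to\infty$. With that substitution the argument is complete modulo Blackwell's theorem itself, which you correctly identify as the hard, non-arithmeticity-dependent core and for which the coupling sketch you give is the standard route.
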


We are now partially equipped to show the formulae for the availability and average fidelity. Next, we show a set of intermediate results that we will need for the main proofs.

\begin{proposition}
    Let $p(t) = P(r(t)=\neg \emptyset)$ be the probability that a link is available at time $t$ in the simplified 1G1B process. Then,
    \begin{equation}
        \lim_{t\rightarrow \infty} p(t) = \frac{\mathbb{E}(Y_1)}{\mathbb{E}(Y_1)+\mathbb{E}(Z_1)}.
        \label{eqn:lim_p(t)}
    \end{equation}
    \label{prop:limit_p(t)}
\end{proposition}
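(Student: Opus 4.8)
The plan is to recognise the simplified 1G1B process of Definition \ref{def.simplified1G1B} as an \emph{alternating} renewal process and to derive a renewal-type equation for $p(t)$ of the form (\ref{eqn:renewal_type_equation}), so that Theorems \ref{thm:renewal} and \ref{thm:key_renewal} can be applied directly. Each cycle consists of an available phase of length $Y_i$ followed by an empty phase of length $Z_i$, and the process regenerates at every arrival time $A_n$ (the instants at which a fresh link is produced). The key structural fact is that $X_1 = Y_1 + Z_1$ has a continuous distribution, being a sum of two independent exponentials, and hence is non-arithmetic, which is exactly the hypothesis required by both renewal theorems.

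First I would condition on the first renewal time $X_1$. If $X_1 > t$ then no renewal has occurred by time $t$, and the system is in $\neg\emptyset$ precisely when $Y_1 > t$; since $X_1 \ge Y_1$, the event $\{Y_1 > t\}$ is already contained in $\{X_1 > t\}$, so this case contributes exactly $P(Y_1 > t)$. If instead $X_1 = x \le t$, the process restarts afresh at $x$ by the regeneration property, contributing $p(t-x)$ integrated against $\dd F_X(x)$. This gives the renewal-type equation
\[
 p(t) = P(Y_1 > t) + \int_0^t p(t-x)\,\dd F_X(x),
\]
which matches (\ref{eqn:renewal_type_equation}) with inhomogeneous term $H(t) = P(Y_1 > t) = e^{-\beta t}$.

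Next I would apply the Renewal Theorem (Theorem \ref{thm:renewal}): $H$ is bounded and $X_1$ is non-arithmetic, so the unique bounded solution is $p(t) = H(t) + \int_0^t H(t-x)\,\dd m(x)$. To pass to the limit $t\to\infty$ I would invoke the Key Renewal Theorem (Theorem \ref{thm:key_renewal}) with $g = H$, whose hypotheses hold immediately since $H(t)=e^{-\beta t}$ is non-negative, non-increasing, and satisfies $\int_0^\infty H(t)\,\dd t = 1/\beta < \infty$. Noting that $H(t)\to 0$ and using the identity $\int_0^\infty P(Y_1 > x)\,\dd x = \mathbb{E}(Y_1)$ for the non-negative variable $Y_1$, together with $\mathbb{E}(X_1) = \mathbb{E}(Y_1) + \mathbb{E}(Z_1)$, I obtain
\[
 \lim_{t\to\infty} p(t) = \frac{1}{\mathbb{E}(X_1)}\int_0^\infty H(x)\,\dd x = \frac{\mathbb{E}(Y_1)}{\mathbb{E}(Y_1)+\mathbb{E}(Z_1)},
\]
which is (\ref{eqn:lim_p(t)}).

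I expect the main obstacle to be the clean justification of the renewal-type equation rather than the analytic limit, which is a direct substitution into the two renewal theorems. In particular, the conditioning argument must correctly exploit the two-phase structure of a cycle and the inclusion $\{Y_1 > t\} \subseteq \{X_1 > t\}$, which is what collapses the inhomogeneous term to the simple form $P(Y_1 > t)$; making the regeneration step rigorous, so that the restarted process is independent of the first cycle and identically distributed, is the only place where care is genuinely needed. Verifying that $X_1$ is non-arithmetic is routine but worth stating explicitly, as it underpins the applicability of both Theorems \ref{thm:renewal} and \ref{thm:key_renewal}.
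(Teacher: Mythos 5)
Your proposal is correct and follows essentially the same route as the paper's proof: conditioning on the first renewal time $X_1$ to obtain the renewal-type equation $p(t) = P(Y_1>t) + \int_0^t p(t-x)\,\dd F_X(x)$, then applying Theorem \ref{thm:renewal} and the key renewal theorem (Theorem \ref{thm:key_renewal}) with $H(t)=1-F_Y(t)$ to pass to the limit. Your explicit remarks that $\{Y_1>t\}\subseteq\{X_1>t\}$ and that $X_1$ is non-arithmetic (being a sum of independent exponentials) are welcome additions of rigour, but do not change the argument.
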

\begin{proof}
    We proceed by conditioning on the value of $X_1$. Now,
    \begin{align}
        p(t) = P(r(t)=\neg \emptyset \cap X_1>t)+P(r(t)=\neg \emptyset \cap X_1<t).
    \label{eqn:prob_expansion}
    \end{align}
    Notice that the event $\{r(t)=\neg \emptyset \cap X_1>t \} $ occurs if and only if $Y_1 >t$. Further, if $x<t$, then 
    \begin{align}
        P(r(t)=\neg \emptyset |X_1=x) = p(t-x),
    \end{align}
    since the process starts afresh at time $x$. Then, (\ref{eqn:prob_expansion}) becomes
    \begin{equation}
        p(t) = 1-F_Y(t) +\int_{0}^t p(t-x) \dd F_X(x),
    \end{equation}
    where $\dd F_X(x) \equiv F'_X(x) \dd x$. We now see that this is of the form (\ref{eqn:renewal_type_equation}) with $H(t) = 1-F_Y(t)$, and so by Theorem \ref{thm:renewal}, 
    \begin{equation}
        p(t) = 1-F_Y(t)+\int_0^{t}\left(1-F_Y(t-x) \right)\dd m(x).
    \end{equation}
    Taking the infinite limit,
    \begin{equation}
        \lim_{t\rightarrow \infty } p(t) = 1 - 1 + \lim_{t\rightarrow \infty } \int_0^{t}(1-F_Y(t-x))\dd m(x).
    \end{equation}
    It can be seen that $H(t) = 1-F_Y(t)$ satisfies the conditions (a)-(c) required by Theorem \ref{thm:key_renewal}, so we may apply this Theorem to take the limit:
    \begin{align}
        \lim_{t\rightarrow \infty } p(t) &= \frac{1}{\mathbb{E}[X_1]}\int_0^{\infty} (1-F_Y(x))\dd x \\ &= \frac{1}{\mathbb{E}[X_1]}\int_0^{\infty} P(Y_1>x) \dd x = \frac{\mathbb{E}[Y_1]}{\mathbb{E}[X_1]}.
    \end{align}
    Finally, using $\mathbb{E}[X_1] = \mathbb{E}[Y_1+Z_1] = \mathbb{E}[Y_1]+ \mathbb{E}[Z_1]$ suffices to show (\ref{eqn:lim_p(t)}).
\end{proof}
Recall that the average fidelity of the system at a given time $t$ is dependent on the time spent in each purification level leading up to this point. Therefore, in order to understand the average fidelity we first of all look at the \textit{current lifetime} in this simplified setting. 

\begin{definition}[Current lifetime] Consider the simplified $1G1B$ system. Let $C(t)$ be the time spent so far in a state at time $t$. More formally,
\begin{equation}
    C(t) = 
    \begin{cases}
       t - A_{N(t)}, \text{   if } r(t) = \neg \emptyset, 
        \\ t - A_{N(t)} - Y_{N(t)+1}, \text{  
 if } r(t) = \emptyset.
    \end{cases}
\end{equation}
\label{def:current_lifetime}
\end{definition}

\begin{figure}[h]
    \centering
    \includegraphics[width=0.9\linewidth]{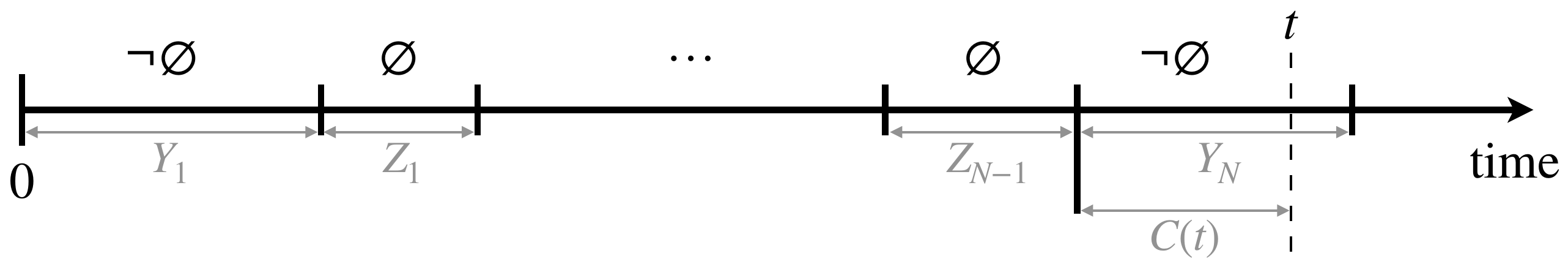}
    \caption{Current lifetime of the simplified 1G1B process. The random variable $C(t)$ denotes the time spent so far in the current state at time $t$. This is most interesting when $r(t)=\neg \emptyset$, because it tells us the age of a link in memory.}
    \label{fig:current_lifetime}
\end{figure}

The first case ($r(t) = \neg \emptyset$) is of most interest here, because it corresponds to when a link is in memory and is subject to decoherence. See Figure \ref{fig:current_lifetime} for an illustration of this concept. In the following Lemma, we characterise the distribution of $C(t)$, conditional on being in the state $\neg \emptyset$.
\begin{lemma}
Consider the simplified 1G1B system. The limiting distribution of $C(t)$ conditional on there being a link is given by 
\begin{equation}
    \lim_{t\rightarrow \infty} P\left(C(t)>x|r(t)=\neg \emptyset \right) = \frac{1}{\mathbb{E}[Y_1]} \int_{x}^{\infty} (1-F_Y(s)) \dd s,
\end{equation}
which is an exponential distribution with parameter $\beta$ when $Y_1\sim \text{Exp}(\beta)$.
\label{lem:current_lifetime_limit}
\end{lemma}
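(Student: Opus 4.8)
The plan is to compute the limiting conditional probability through the decomposition
\[
P\!\left(C(t)>x \mid r(t)=\neg\emptyset\right) = \frac{P\!\left(C(t)>x,\, r(t)=\neg\emptyset\right)}{P\!\left(r(t)=\neg\emptyset\right)},
\]
treating numerator and denominator separately and then taking the ratio. The denominator is exactly the quantity $p(t)$ of Proposition~\ref{prop:limit_p(t)}, whose limit is $\mathbb{E}[Y_1]/(\mathbb{E}[Y_1]+\mathbb{E}[Z_1]) = \mathbb{E}[Y_1]/\mathbb{E}[X_1]$. Hence all the real work lies in the numerator, which I would attack with the same renewal machinery used in Proposition~\ref{prop:limit_p(t)}.

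First I would set $g(t) \coloneqq P\!\left(C(t)>x,\, r(t)=\neg\emptyset\right)$ and derive a renewal-type equation by conditioning on the first renewal time $X_1 = Y_1+Z_1$ (the first instant at which a fresh link is produced after the initial cycle completes). On the event $\{X_1>t\}$ the system is still in its first cycle, so $N(t)=0$, $A_{N(t)}=0$, and $r(t)=\neg\emptyset$ holds iff $Y_1>t$, in which case $C(t)=t$; since $Y_1>t$ already forces $X_1>t$, this contributes $\mathbb{1}[t>x]\,P(Y_1>t)=\mathbb{1}[t>x](1-F_Y(t))$. On $\{X_1=s\le t\}$ the process regenerates at $s$ and contributes $g(t-s)$. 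This gives
\[
g(t) = H(t) + \int_0^t g(t-s)\,\dd F_X(s), \qquad H(t) \coloneqq \mathbb{1}[t>x]\,(1-F_Y(t)).
\]
As $H$ is bounded, Theorem~\ref{thm:renewal} yields the unique solution $g(t) = H(t) + \int_0^t H(t-s)\,\dd m(s)$.

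Next I would let $t\to\infty$: the boundary term $H(t)\to 0$, so everything reduces to the integral against $\dd m$. Here lies the main obstacle: the indicator $\mathbb{1}[t>x]$ produces an upward jump of $H$ at $t=x$, so $H$ is \emph{not} non-increasing and Theorem~\ref{thm:key_renewal} does not apply directly. I would circumvent this by writing $H=\phi-\psi$ with $\phi(t)=1-F_Y(t)$ and $\psi(t)=\mathbb{1}[t\le x]\,(1-F_Y(t))$. Each of $\phi,\psi$ is non-negative, non-increasing, and integrable (with $\int_0^\infty\phi\,\dd t=\mathbb{E}[Y_1]$ and $\int_0^\infty\psi\,\dd t=\int_0^x(1-F_Y)\,\dd t\le \mathbb{E}[Y_1]$), so Theorem~\ref{thm:key_renewal} applies to each, and by linearity
\[
\lim_{t\rightarrow\infty} g(t) = \frac{1}{\mathbb{E}[X_1]}\int_0^\infty H(s)\,\dd s = \frac{1}{\mathbb{E}[X_1]}\int_x^\infty (1-F_Y(s))\,\dd s.
\]

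Finally, dividing this numerator limit by the denominator limit $\mathbb{E}[Y_1]/\mathbb{E}[X_1]$ cancels the factor $\mathbb{E}[X_1]$ and produces the claimed expression. The exponential corollary then follows immediately by substituting $1-F_Y(s)=e^{-\beta s}$ and $\mathbb{E}[Y_1]=1/\beta$, which gives $\beta\int_x^\infty e^{-\beta s}\,\dd s = e^{-\beta x}$, i.e. an $\mathrm{Exp}(\beta)$ distribution. I expect the decomposition $H=\phi-\psi$ to be the only non-routine step, precisely because the paper's stated key renewal theorem is phrased for monotone integrands.
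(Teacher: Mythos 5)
Your proposal is correct and follows essentially the same route as the paper: the same decomposition into numerator and denominator, the same renewal-type equation for $G(t,x)=P\left(C(t)>x \cap r(t)=\neg \emptyset\right)$ with $H(t)=\mathds{1}_{\{t\geq x\}}(1-F_Y(t))$, the same invocation of Theorem \ref{thm:renewal} and then the key renewal theorem, and the same final division by the limit of $p(t)$. The only difference is the device used to apply Theorem \ref{thm:key_renewal} to the non-monotone $H$: you split $H$ as a difference of two non-negative, non-increasing, integrable functions, whereas the paper restricts the integration range to $[0,t-x]$ and shifts variables so that the theorem is applied to the monotone function $s\mapsto 1-F_Y(s+x)$; both are valid and yield the same limit.
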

\begin{proof}
 Writing
 \begin{equation}
     P(C(t)>x|r(t)=\neg \emptyset) = \frac{P(C(t)>x \cap r(t)=\neg \emptyset)}{P(r(t)=\neg \emptyset)},
 \end{equation}
 we see that the bottom of the fraction has already been dealt with in Proposition \ref{prop:limit_p(t)}. We therefore focus on 
 \begin{equation}
     G(t,x) \coloneqq P(C(t)>x \cap r(t)=\neg \emptyset).
 \end{equation}
 Conditioning on $X_1$, we see that 
 \begin{equation}
     G(t,x) = P(C(t)>x \cap r(t)=\neg \emptyset \cap X_1>t) + P(C(t)>x \cap r(t)=\neg \emptyset \cap X_1\leq t).
     \label{eqn:G(t)_expansion}
 \end{equation}
 Now, the event $\{C(t)>x \cap r(t)=\neg \emptyset \cap X_1>t \}$ occurs if and only if $Y_1>t>x$. Moreover, if $y<t$ then the process starts afresh from time $y$, and
 \begin{equation}
     P(C(t)>x \cap r(t)=\neg \emptyset | X_1=y) = G(t-y).
 \end{equation}
 Then, noting that $G(t,x)=0$ for $t<x$,  (\ref{eqn:G(t)_expansion}) becomes
 \begin{equation}
     G(t,x) = \mathds{1}_{\{t\geq x\}} (1-F_Y(t)) + \int_{0}^t G(t-y)\dd F_X(y),
 \end{equation}
 which is in the form of (\ref{eqn:renewal_type_equation}) with $H(t)=\mathds{1}_{\{t\geq x\}} (1-F_Y(t))$. Then, by Theorem \ref{thm:renewal}, $G(t,x)$ is given by
 \begin{equation}
     G(t,x) = \mathds{1}_{\{t\geq x\}} (1-F_Y(t)) + \int_{0}^t \mathds{1}_{\{t-y\geq x\}} (1-F_Y(t-y)) \dd m(y)
 \end{equation}
which has limit
\begin{align}
    \lim_{t\rightarrow \infty} G(t,x) &= 0 + \lim_{t\rightarrow \infty} \int_{0}^{t-x}(1-F_Y(t-y)) \dd m(y) \\ &= \lim_{s\rightarrow \infty} \int_{0}^s (1-F_Y(s+x-y)) \dd m(y),
\end{align}
letting $s=t-x$. Then, noting that $g(s) = 1-F_Y(s+x)$ satisfies conditions (a)-(c) of Theorem \ref{thm:key_renewal}, we may apply this to find
\begin{align}
    \lim_{t\rightarrow \infty} G(t,x) = \frac{1}{\mathbb{E}[X_1]}\int_0^{\infty}g(s) \dd s &= \frac{1}{\mathbb{E}[X_1]} \int_{0}^{\infty} (1-F_Y(s+x)) \dd s \\ &= \frac{1}{\mathbb{E}[X_1]} \int_{x}^{\infty} (1-F_Y(s)) \dd s.\label{eq.G_intermediate_step}
\end{align}
From Proposition \ref{prop:limit_p(t)}, we observe that $$\mathbb{E}[X_1] = \frac{\mathbb{E}[Y_1]}{\lim_{t\rightarrow\infty}P\left(r(t)=\neg \emptyset\right)}.$$
We can use this to rewrite (\ref{eq.G_intermediate_step}) as follows:
\begin{equation}
    \lim_{t\rightarrow \infty} P\left(C(t)>x|r(t)=\neg \emptyset \right) = \frac{1}{\mathbb{E}[Y_1]} \int_{x}^{\infty} (1-F_Y(s)) \dd s,
\end{equation}
which we notice is only dependent on the distribution of $Y_1$. In the case $Y_1\sim \text{Exp}(\beta)$, as considered in the 1G1B system, 
\begin{equation}
    \lim_{t\rightarrow \infty} P\left(C(t)>x|r(t)=\neg \emptyset \right) = \beta \int_{x}^{\infty} e^{-\beta s} \dd s = e^{-\beta x},
\end{equation}
and so conditional on there being a link, the current lifetime approaches an exponential distribution.
\end{proof}

We have now characterised the availability (Proposition \ref{prop:limit_p(t)}) and current lifetime (Lemma \ref{lem:current_lifetime_limit}) for the simplified 1G1B system. However, note that both Proposition \ref{prop:limit_p(t)} and Lemma \ref{lem:current_lifetime_limit} assumed that the system starts in the state $r(0)=\neg \emptyset$. This was necessary in order to satisfy all of the conditions (a)-(c) of Theorem \ref{thm:key_renewal}. The result below states that Theorem \ref{thm:key_renewal} still holds, even if the renewal process is \textit{delayed}, which means that the first arrival has a different distribution to the others. For more details of delayed renewal processes, see \cite{grimmett2020} or \cite{Mitov2014}.
\begin{definition}
    Let $\{ X_i\}_{i\geq 1}$ be independent positive random variables such that $\{X_i \}_{i\geq 2}$ have the same distribution. Let $A_0=0$, $A_n = \sum_{i=1}^n X_i$, and $N^\mathrm{d} = \max \{n:A_n\leq t \}$. Then, $N^\mathrm{d}(t)$ is a delayed renewal process.
\end{definition}
\begin{definition}
    Let $N^\mathrm{d}$ be a delayed renewal process. Then, $m^\mathrm{d}(t)\coloneqq \mathbb{E}[N^\mathrm{d}(t)] $ is the delayed renewal function.
\end{definition}
\begin{theorem}[Key renewal theorem for delayed renewal processes/Theorem 1.20 of \cite{Mitov2014}]
    Consider a delayed renewal process $N^\mathrm{d}(t)$. If $g : [0,\infty)\rightarrow [0,\infty)$ satisfies the same conditions (a)-(c) of Theorem \ref{thm:key_renewal}, then
    \begin{equation}
         \lim_{t\rightarrow \infty} \int_{0}^t g(t-x)\dd m^\mathrm{d}(x) = \frac{1}{\mathbb{E}[X_2]} \int_{0}^{\infty} g(x) \dd x.
    \end{equation}
\label{thm:key_renewal_delayed}
\end{theorem}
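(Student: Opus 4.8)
The plan is to reduce the delayed statement to the ordinary key renewal theorem (Theorem~\ref{thm:key_renewal}) by conditioning on the first arrival $X_1$. Write $F_1$ for the distribution of $X_1$ and $F$ for the common distribution of the interarrival times $X_i$ with $i\geq 2$, and let $m(t)=\mathbb{E}[N'(t)]$ denote the \emph{ordinary} renewal function built from the interarrival law $F$, so that the generic interarrival time of $N'$ has the law of $X_2$. First I would establish a decomposition of the delayed renewal function. Conditioning on the first arrival gives $N^{\mathrm{d}}(t)=\mathds{1}_{\{X_1\leq t\}}\bigl(1+N'(t-X_1)\bigr)$, where $N'$ is an ordinary renewal process with interarrival law $F$, independent of $X_1$; taking expectations yields
\begin{equation*}
    m^{\mathrm{d}}(t)=F_1(t)+\int_0^t m(t-x)\,\dd F_1(x),
\end{equation*}
i.e. as measures on $[0,\infty)$ one has $\dd m^{\mathrm{d}}=\dd F_1+\dd F_1 * \dd m$, where $*$ denotes convolution.

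Next I would translate this into a relation between the integrals of interest. Writing $f(s)=\int_0^s g(s-z)\,\dd m(z)$ for the ordinary renewal integral and using the measure-convolution identity above together with Fubini's theorem,
\begin{equation*}
    \int_0^t g(t-x)\,\dd m^{\mathrm{d}}(x) = \int_0^t g(t-x)\,\dd F_1(x) + \int_0^t f(t-y)\,\dd F_1(y),
\end{equation*}
where the inner integral against $\dd m$ has been recognised as $f(t-y)$. Since $g$ satisfies conditions (a)--(c), the ordinary key renewal theorem applies to $f$ and gives $f(s)\rightarrow L$ as $s\rightarrow\infty$, where $L=\tfrac{1}{\mathbb{E}[X_2]}\int_0^\infty g(x)\,\dd x$ (the factor $1/\mathbb{E}[X_2]$ arising because the generic interarrival of $N'$ has the law of $X_2$).

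I would then take $t\rightarrow\infty$ in each term. A non-negative, integrable, non-increasing $g$ must satisfy $g(s)\rightarrow 0$, so the first term $\int_0^t g(t-x)\,\dd F_1(x)=\mathbb{E}\bigl[g(t-X_1)\mathds{1}_{\{X_1\leq t\}}\bigr]\rightarrow 0$ by dominated convergence, dominating by the constant $g(0)$. For the second term, $\int_0^t f(t-y)\,\dd F_1(y)=\mathbb{E}\bigl[f(t-X_1)\mathds{1}_{\{X_1\leq t\}}\bigr]$; since $X_1<\infty$ almost surely we have $f(t-X_1)\rightarrow L$ pointwise, and dominated convergence gives the limit $L$. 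Adding the two limits yields the claim.

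The main obstacle is the dominated-convergence step for the second term, which requires $f$ to be bounded on all of $[0,\infty)$. I would obtain this by noting that $f(t)\leq g(0)\,m(t)\leq g(0)\,m(T)$ for $t\leq T$, so $f$ is bounded on each compact $[0,T]$ because $m$ is finite on compacts, while the convergence $f(s)\rightarrow L$ bounds $f$ on a terminal half-line; together these give a global bound. A secondary point is the justification of the measure-convolution manipulation $\dd F_1 * \dd m$ and the Fubini interchange producing $f(t-y)$, which is routine once the renewal measures are seen to be $\sigma$-finite on $[0,\infty)$ and the integrand is non-negative.
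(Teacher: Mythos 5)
The paper does not actually prove this statement: Theorem \ref{thm:key_renewal_delayed} is imported verbatim from Mitov and Omey (Theorem 1.20 of that reference) and used as a black box, so there is no internal proof to compare against. Your blind proof is correct and is the standard reduction of the delayed case to the ordinary key renewal theorem. The decomposition $N^{\mathrm{d}}(t)=\mathds{1}_{\{X_1\leq t\}}\bigl(1+N'(t-X_1)\bigr)$ and the resulting identity $m^{\mathrm{d}}=F_1+F_1*m$ are right; the Fubini step producing $\int_0^t f(t-y)\,\dd F_1(y)$ is legitimate because all integrands are non-negative and the renewal measure is finite on compacts; the observation that conditions (a)--(c) force $g(s)\rightarrow 0$ kills the first term; and your justification that $f$ is globally bounded (finite on compacts via $f(t)\leq g(0)\,m(t)$, eventually bounded because it converges) is exactly what is needed to run dominated convergence on the second term. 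Two small points worth making explicit. First, the statement as quoted in the paper omits the non-arithmetic hypothesis on the generic interarrival law, but your proof correctly needs it, since you invoke Theorem \ref{thm:key_renewal} for the embedded ordinary process; it is implicitly assumed (and satisfied in the paper's application, where the interarrival times are sums of exponentials). Second, your argument silently assumes $X_1<\infty$ almost surely so that $t-X_1\rightarrow\infty$ pointwise; this holds here because $X_1$ is a proper random variable, but it is the one place where a defective first arrival would break the reduction.
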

A consequence of Theorem \ref{thm:key_renewal_delayed} is that even for delayed renewal processes, the limiting distribution is the same as for the non-delayed case. Therefore, the results of Proposition \ref{prop:limit_p(t)} and Lemma \ref{lem:current_lifetime_limit} hold even when the distribution of $X_1$ is not the same as $\{X_i\}_{i\geq 2}$. In particular, they still hold when the process starts in $\emptyset$. This is summarised with the following corollary.

\begin{corollary}
    Consider the simplified 1G1B process, now altered to start in $r(0)=\emptyset$. Let $Z_0$ be the time for which the system is empty until the first fresh link is produced. Let $Y_1$ be the time in which this link is present in memory until it is removed again, and so on. Let the probability of finding a link at time $t$ be $p(t)=P(r(t)=\neg \emptyset)$. Then,
    \begin{equation}
        \lim_{t\rightarrow \infty} p(t) = \frac{\mathbb{E}[Y_1]}{\mathbb{E}[Y_1]+\mathbb{E}[Z_1]} = \frac{\lambda}{\lambda+\beta},
        \label{eqn:limit_p(t)_delayed}
    \end{equation}
    and the distribution of the current lifetime of a link satisfies
    \begin{equation}
        \lim_{t\rightarrow \infty} P\left(C(t)>x|r(t)=\neg \emptyset \right) = \frac{1}{\mathbb{E}[Y_1]} \int_{x}^{\infty} (1-F_Y(s)) \dd s
         = e^{-\beta x}.
    \end{equation}
\label{cor:limit_delayed_1G1B_simplified}
\end{corollary}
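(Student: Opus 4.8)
The plan is to recognise the altered process as a \emph{delayed} renewal process and then to transfer the two limits already established for the $\neg\emptyset$-start system. Keeping the link-production times as the renewal epochs, the first interarrival is now $X_1 = Z_0 \sim \mathrm{Exp}(\lambda)$, while every subsequent interarrival $X_i = Y_{i-1} + Z_{i-1}$ (for $i \geq 2$) is i.i.d.\ with the common cycle law, exactly the structure required by Theorem \ref{thm:key_renewal_delayed}. The crucial feature of that theorem is that its limit $\tfrac{1}{\mathbb{E}[X_2]}\int_0^\infty g\,\dd x$ depends only on the common cycle mean $\mathbb{E}[X_2] = \mathbb{E}[Y_1] + \mathbb{E}[Z_1]$ and is completely insensitive to the law of the initial delay $X_1 = Z_0$. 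Since this mean coincides with the $\mathbb{E}[X_1]$ appearing in the ordinary case, re-running the arguments of Proposition \ref{prop:limit_p(t)} and Lemma \ref{lem:current_lifetime_limit} with $m$ replaced by the delayed renewal function $m^{\mathrm d}$, and Theorem \ref{thm:key_renewal} replaced by Theorem \ref{thm:key_renewal_delayed}, must return the same two limits.

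Concretely, I would realise this by conditioning on the first production event. For the availability, on $\{Z_0 > t\}$ the memory is empty throughout $[0,t]$ and contributes nothing, whereas on $\{Z_0 = x \leq t\}$ the system restarts at time $x$ as the ordinary $\neg\emptyset$-start process, so that $P(r(t) = \neg\emptyset \mid Z_0 = x) = p(t-x)$ with $p$ the ordinary availability. This gives $p^{\mathrm d}(t) = \int_0^t p(t-x)\,\dd F_Z(x)$; letting $t\to\infty$ and inserting $p(s)\to \mathbb{E}[Y_1]/(\mathbb{E}[Y_1]+\mathbb{E}[Z_1])$ from Proposition \ref{prop:limit_p(t)}, the delay density integrates to one and the ordinary limit is recovered. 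The current-lifetime claim follows by the same reduction applied to Lemma \ref{lem:current_lifetime_limit}: after the first arrival the age of the stored link is governed by the ordinary process, so its conditional limiting law is unchanged. Substituting $\mathbb{E}[Y_1] = 1/\beta$ and $\mathbb{E}[Z_1] = 1/\lambda$ then yields $\lambda/(\lambda+\beta)$ and $e^{-\beta x}$, as claimed.

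I expect the main obstacle to be bookkeeping of the boundary term rather than any deep analytic difficulty: the earlier proofs cannot be copied verbatim because their driving term $H(t) = 1 - F_Y(t)$ encodes a link being \emph{present} during the initial segment, whereas here the memory is empty for the whole first interarrival and the corresponding term vanishes. Accounting for this delay explicitly is precisely what conditioning on the first production event (equivalently, passing to $m^{\mathrm d}$ and invoking Theorem \ref{thm:key_renewal_delayed}) achieves. The only remaining point is to justify interchanging limit and integral in $p^{\mathrm d}(t) = \int_0^t p(t-x)\,\dd F_Z(x)$, which is immediate by dominated convergence since $p$ and the conditional survival function are bounded by one and the delay density $\lambda e^{-\lambda x}$ is integrable.
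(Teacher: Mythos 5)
Your proposal is correct and matches the paper's own justification, which likewise identifies the $\emptyset$-start system as a delayed renewal process and invokes the key renewal theorem for delayed processes (whose limit depends only on $\mathbb{E}[X_2]=\mathbb{E}[Y_1]+\mathbb{E}[Z_1]$) to transfer the limits of Proposition \ref{prop:limit_p(t)} and Lemma \ref{lem:current_lifetime_limit} unchanged. Your explicit conditioning on the first production event, with dominated convergence to pass to the limit, is a valid and somewhat more elementary realisation of the same idea, and your observation that the driving term $H(t)=1-F_Y(t)$ must vanish over the initial empty segment is exactly the right bookkeeping point.
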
 
Recalling that $\beta = \mu + \lambda q (1-p)$, we see that the formula for the availability in Proposition \ref{prop:1G1B_stationary_distribution} is already shown by (\ref{eqn:limit_p(t)_delayed}).

\subsection{Availability and average consumed fidelity in 1G1B}
\label{app.num_pur_subsec}
Here, we compute the availability and the rest of the steady-state distribution of the 1G1B system (Proposition \ref{prop:1G1B_stationary_distribution}), as well as the average consumed fidelity (Theorem \ref{thm:average_fidelity_formula}).

In order to calculate the average fidelity, we not only need the time spent in $\neg \emptyset$, but also the times spent in each pumping level leading up to the current one. 

From 1G1B (Definition \ref{def:1G1B_system}), one may define a simplified 1G1B system as 
\begin{equation*}
    r(t) = \begin{cases}
        \neg \emptyset \text{  if  } s(t) \geq 0 \\ 
        \emptyset \text{  if  } s(t) = \emptyset.
    \end{cases}
\end{equation*}

For the characterisation of the fidelity of the link in memory at time $t$, $F(t)$, we are interested in the successful pumping attempts that occur in the the time interval $[A_{N(t)},A_{N(t)}+C(t))$, where $C(t)$ is the current lifetime (Definition \ref{def:current_lifetime}).
In 1G1B, the successful pumping attempts are a Poisson process with rate $\delta \coloneqq \lambda p q$. Since the rate is constant for all $t$, the number of successful pumping attempts within the interval $[A_{N(t)},A_{N(t)}+C(t))$ has the identical distribution as the number of successful pumping attempts in the time interval $[0,C(t))$.
From Corollary \ref{cor:limit_delayed_1G1B_simplified}, we see that $C(t)$ converges in distribution to $C\sim \text{Exp}(\beta)$. In the following Lemma, we characterise the number of successful pumping attempts that occur within the time $C$, and the time spent between each pair of consecutive pumping rounds. See Figure \ref{fig:num_pumping_rounds} for an illustration. An observation that we use below is that within the time interval $[0,C)$, the times at which pumping occurs form a separate renewal process, which is convenient for notation.

\begin{figure}
    \centering
    \includegraphics[width=0.9\linewidth]{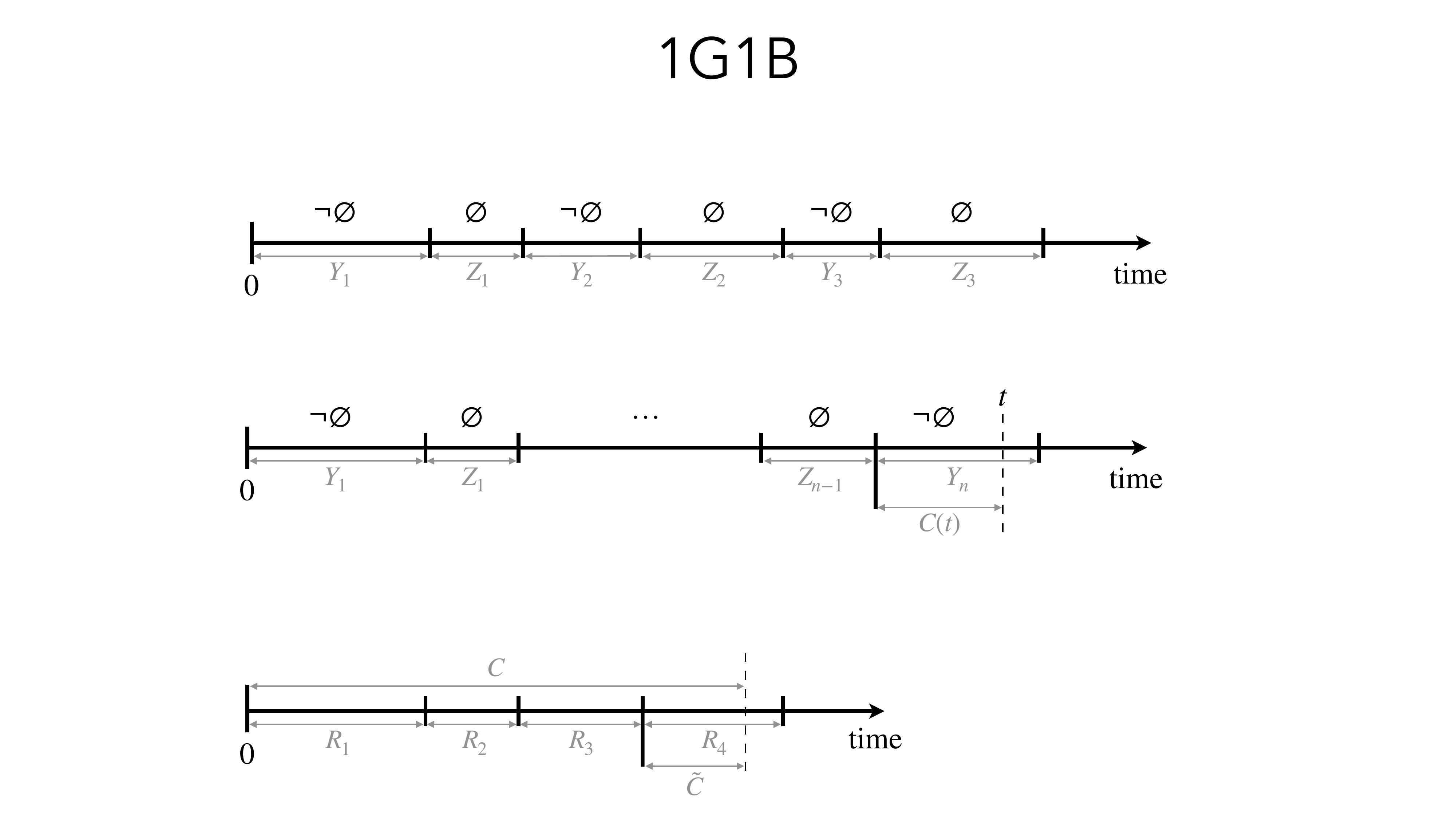}
    \caption{Number of pumping rounds. We are interested in the number of pumping rounds that have been carried out while a link is in memory. Here, $C$ is the (limiting) distribution of the current lifetime in memory (see Figure \ref{fig:current_lifetime}), and $R_i$ is the time between the $(i-1)$th and $i$th pumping round.}
    \label{fig:num_pumping_rounds}
\end{figure}

\begin{lemma}
Consider a renewal process $\tilde{N}(t)$ with arrival times $S_0=0$, $S_n = \sum_{i=1}^n R_i$, with $R_1\sim \text{Exp}(\delta)$. Let $C\sim \text{Exp}(\beta)$ be independent of the $R_i$. Let $M=N(C)$ be the number of arrivals that have occurred by time $C$. Let $\tilde{C}\coloneqq C - S_M$ be the current lifetime at time $C$. Then, 
    \begin{enumerate}
        \item The distribution of $M$ is given by 
    \begin{equation}
    P(M\geq m)=\left(\frac{\delta}{\beta + \delta}\right)^{m}, 
     \label{eqn:cumu_dist_M}
     \end{equation} 
     or equivalently 
     \begin{equation}
            P(M = m) = \left(\frac{\delta}{\beta + \delta}\right)^{m}\left(\frac{\beta}{\beta + \delta}\right).      \label{eqn:dist_M}
        \end{equation}
        \item Conditional on $M= m$, the random variables  $(R_1,...,R_m,\tilde{C})$ are mutually independent and identically distributed as $\text{Exp}(\beta+\delta)$.
     \end{enumerate}
\label{lem:num_purifications_in_limit}
\end{lemma}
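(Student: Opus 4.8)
The plan is to establish the two claims in turn, first pinning down the law of $M$ and then the conditional joint law of the interarrival times, using only elementary properties of independent exponentials.

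For part 1, I would note that $\{M \geq m\}$ is exactly the event that at least $m$ arrivals have occurred by time $C$, i.e. $\{S_m \leq C\}$. Since $S_m = R_1 + \cdots + R_m$ is a sum of independent $\text{Exp}(\delta)$ variables and is independent of $C \sim \text{Exp}(\beta)$, I would condition on $S_m$ and use $P(C \geq s) = e^{-\beta s}$ to get $P(M \geq m) = P(S_m \leq C) = \mathbb{E}[e^{-\beta S_m}]$. By independence this factorises as $(\mathbb{E}[e^{-\beta R_1}])^m$, and the Laplace transform of an exponential gives $\mathbb{E}[e^{-\beta R_1}] = \delta/(\delta+\beta)$, yielding (\ref{eqn:cumu_dist_M}). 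The point-mass formula (\ref{eqn:dist_M}) then follows from $P(M=m) = P(M\geq m) - P(M \geq m+1)$.

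For part 2, I would compute the joint density directly. The variables $R_1, \ldots, R_m, R_{m+1}, C$ are mutually independent with $R_i \sim \text{Exp}(\delta)$ and $C \sim \text{Exp}(\beta)$, so their joint density is $\delta^{m+1}\beta\, e^{-\delta(r_1+\cdots+r_{m+1})}e^{-\beta c}$. The event $\{M = m\}$ is $\{S_m \leq C < S_{m+1}\}$, which in terms of $\tilde c = c - s_m$ and the overshoot $r_{m+1}$ reads $\{\tilde c \geq 0,\ r_{m+1} > \tilde c\}$. Changing variables from $c$ to $\tilde c$ (Jacobian one) turns $e^{-\beta c}$ into $\prod_i e^{-\beta r_i}\cdot e^{-\beta \tilde c}$; integrating out $r_{m+1}$ over $(\tilde c, \infty)$ produces a factor $e^{-\delta \tilde c}$. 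Collecting terms, the density of $(R_1,\ldots,R_m,\tilde C)$ restricted to $\{M=m\}$ becomes $\delta^m \beta\, e^{-(\delta+\beta)(r_1+\cdots+r_m+\tilde c)}$ on the positive orthant. Dividing by $P(M=m)$ from part 1 gives the conditional density $(\delta+\beta)^{m+1}e^{-(\delta+\beta)(r_1+\cdots+r_m+\tilde c)}$, which factorises into $m+1$ densities of $\text{Exp}(\delta+\beta)$, establishing both the mutual independence and the common law.

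The main obstacle is the bookkeeping in part 2: recognising that $\{M=m\}$ forces an overshoot variable $R_{m+1}$ with $R_{m+1} > \tilde C$, carrying out the change of variables without dropping the constraint region, and integrating out $R_{m+1}$ cleanly. As a cross-check, and a more probabilistic alternative, I would observe that the whole statement follows from the memoryless property viewed as a race between the next arrival (rate $\delta$) and the budget running out (rate $\beta$): at each renewal epoch the residual budget is again $\text{Exp}(\beta)$ independent of the past, the time to the next event is $\min(\text{Exp}(\delta),\text{Exp}(\beta)) \sim \text{Exp}(\delta+\beta)$ independent of which event occurs, and an arrival wins with probability $\delta/(\delta+\beta)$. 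This immediately recovers the geometric law of $M$ and shows that $R_1,\ldots,R_m$ (the arrival-winning steps) together with $\tilde C$ (the final budget-exhausting step) are i.i.d. $\text{Exp}(\delta+\beta)$.
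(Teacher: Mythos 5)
Your proposal is correct, and it reaches both conclusions by a genuinely different route from the paper. For part 1, the paper proceeds by induction on $m$, using the memoryless property to factor $P(C>R_{m+1}+S_m)=P(C>R_{m+1})P(C>S_m)$; you instead write $P(M\geq m)=P(S_m\leq C)=\mathbb{E}[e^{-\beta S_m}]$ and factorise the Laplace transform, which collapses the induction into a one-line identity. For part 2, the paper works entirely at the level of survival functions, manipulating the events $\{R_i>x_i\}$, $\{C\geq x_{m+1}+\sum R_i\}$ and their complements via the identity $P(E_1\cap E_2\cap E_3)=P(E_1\cap E_2)-P(E_1\cap E_2\cap\neg E_3)$ together with repeated use of memorylessness; you instead compute the joint density of $(R_1,\dots,R_{m+1},C)$, change variables to $\tilde c=c-s_m$, integrate out the overshoot $R_{m+1}$ over $(\tilde c,\infty)$, and read off the product form $(\delta+\beta)^{m+1}e^{-(\delta+\beta)(r_1+\cdots+r_m+\tilde c)}$ after normalising by $P(M=m)$. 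Your density calculation is more mechanical and arguably more transparent (the factorisation into $m+1$ identical exponential densities is immediate, and the consistency check $\int(\text{density})=P(M=m)$ comes for free), while the paper's event-algebra argument stays at the level of distribution functions and never invokes densities. Your closing ``race between competing exponentials'' remark is the cleanest conceptual summary of why the result holds and is fully in the spirit of the paper's own comment that in a CTMC the holding time in a state is independent of the transition taken; either argument would serve as a complete proof.
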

\begin{proof}
\begin{enumerate}
    \item We proceed by induction. Letting $F_R\coloneqq P(R\leq x)$ We have
    \begin{align*} 
    P(M \geq 1) = P(C>R_1) &= \int_{0}^{\infty} P(C>R_1|R_1=x)\dd F_R(x)\\ &= \int_0^{\infty} e^{-\beta x}\cdot \delta e^{-\delta x}\dd x = \frac{\delta}{\delta + \beta},
\end{align*} 
where we have used $P(C>R_1|R_1=x) = P(C>x) = e^{-\beta x}$ and $R_1\sim \text{Exp}(\delta)$. Then, assuming (\ref{eqn:cumu_dist_M}), 
\begin{align*}
    P(M\geq m+1) &= P(C>S_{m+1}) \\ &= P(C>R_{m+1}+S_m) \\ &\stackrel{a}{=} P(C>R_{m+1})P(C>S_m), \\
    &= P(C>R_{1})P(M\geq m)
    \\ &\stackrel{b}{=} \left(\frac{\delta}{\beta + \delta}\right) \left(\frac{\delta}{\beta + \delta}\right)^{m} = \left(\frac{\delta}{\beta + \delta}\right)^{m+1}.
\end{align*}
In  step ($b$), we have used the inductive assumption. In step ($a$) we have made use of the memoryless property of the exponential distribution: Since $R_{m+1}$ and $S_{m}$ are positive and independent random variables, this has as a consequence
\begin{align}
    P(C>R_{m+1}+S_m) &=\int_{0}^{\infty} \int_{0}^{\infty} \dd F_R(r) \dd F_{S_m}(s) P(C>r+s) \nonumber \\ &=  \int_{0}^{\infty} \int_{0}^{\infty} \dd F_R(r) \dd F_{S_m}(s) P(C>r)P(C>s) \nonumber \\ &= P(C>R_{m+1})P(C>S_m).
    \label{eqn:memoryless_property}
\end{align}
Finally, (\ref{eqn:dist_M}) follows from 
\begin{align*}
    P(M = m) &= P(M\geq m) - P(M\geq m+1) \\ &= \left(\frac{\delta}{\beta + \delta}\right)^{m} -\left(\frac{\delta}{\beta + \delta}\right)^{m+1} \\ &= \left(\frac{\delta}{\beta + \delta}\right)^{m}\left(\frac{\beta}{\beta + \delta}\right).
\end{align*}
\item  We firstly note that for any events $E_1$, $E_2$, $E_3$, it holds that 
        \begin{equation}
            P(E_1\cap E_2 \cap E_3) = P(E_1 \cap E_2) - P(E_1 \cap E_2 \cap \neg E_3),
            \label{eqn:intersection_events}
        \end{equation}
        where $\neg E$ denotes the complement of the event $E$.
        Now, consider the events 
        \begin{equation*}
            E_1 = \Big\{ R_i > x_i \; \forall i = 1,\dots,m+1\Big\}, \;\; E_2 = \Big\{C \geq x_{m+1}+\sum_{i=1}^m R_i \Big\}, \;\;E_3= \Big\{ C  < \sum_{i=1}^{m+1} R_i \Big\}.
        \end{equation*}
        Now, 
        \begin{align*}
            E_1 \cap E_2 \cap E_3 &= \Big\{ R_1 > x_1,\dots,\: R_m>x_m, R_{m+1}>x_{m+1} \cap \sum_{i=1}^{m+1}R_i> C \geq x_{m+1}+\sum_{i=1}^m R_i \Big\} \\ &\stackrel{a}{=} \Big\{ R_1 > x_1,\dots,\: R_m>x_m, \tilde{C}>x_{m+1} \cap \sum_{i=1}^{m+1} R_i> C \geq \sum_{i=1}^m R_i \Big\} \\ &\stackrel{b}{=} \Big\{ R_1 > x_1,\dots,\: R_m>x_m, \tilde{C}>x_{m+1} \cap M=m \Big\},
        \end{align*}
        where in ($a$) we have used the definition of $\tilde{C}$, and in ($b$) we have used the definition of $M$. Then, by (\ref{eqn:intersection_events}), we see that 
        \begin{multline}
            P\left( R_1 > x_1,\dots,\: R_m>x_m, \tilde{C}>x_{m+1} \cap M=m \right)  \\ = P\left( R_1 > x_1,\dots,\: R_{m+1}>x_{m+1} \cap C \geq x_{m+1}+\sum_{i=1}^m R_i \right) \\ - P\left( R_1 > x_1,\dots,\: R_{m+1}>x_{m+1} \cap C \geq \sum_{i=1}^{m+1} R_i\right).
            \label{eqn:cor_step_2}
        \end{multline}
By the independence of the $R_i$, this is equivalent to
    \begin{multline}
        (\ref{eqn:cor_step_2}) 
     = \Bigg[ P\left(C\geq x_{m+1}+\sum_{i=1}^m R_i \Big|  R_1 > x_1 , ... , R_{m+1} > x_{m+1} \right) \\  -P\left(C\geq\sum_{i=1}^{m+1} R_i \Big| R_1>x_1,..., R_{m+1}>x_{m+1}\right) \Bigg]  \prod_{i=1}^{m+1}P\left(R_i > x_i\right),  
\label{eqn:current_lifetime_in_limit_step_2}
    \end{multline}
and we now use the memoryless property of the exponential distribution (see the argument leading to (\ref{eqn:memoryless_property})) to rewrite as
\begin{multline}
    (\ref{eqn:cor_step_2})  = \Bigg[ P\left(C\geq x_{m+1}\big| R_{m+1}>x_{m+1}\right) \prod_{i=1}^m P\left(C\geq R_i\big|R_i>x_i\right)
    \\  - \prod_{i=1}^{m+1} P\left(C\geq R_i\big|R_i>x_i\right) \Bigg]  \prod_{i=1}^{m+1}P\left(R_i > x_i\right), 
\end{multline}
which, using that $P(C_i\geq R_i\big|R_i>x_i)P(R_i>x_i) = P(C_i\geq R_i >x_i)$, becomes
\begin{align*}
    (\ref{eqn:cor_step_2}) &= \Bigg[ P\left(C\geq x_{m+1}\cap R_{m+1}>x_{m+1}\right)-P\left(C\geq R_{m+1}>x_{m+1})\right) \Bigg]  \prod_{i=1}^{m}P\left(C\geq R_i > x_i\right), \\ &=   P\left(R_{m+1}>C\geq x_{m+1}\right)   \prod_{i=1}^{m}P\left(C \geq  R_i > x_i\right),
\end{align*}
where we have again made use of (\ref{eqn:intersection_events}) to rewrite the factor on the left. Now, 
\begin{align*}
    P(C \geq R_1>x_1)
    &= \int_{x_1}^{\infty}  P(C\geq y) \dd F_R(y)
    \\ &= \int_{x_1}^{\infty}  e^{-\beta y}\cdot \delta e^{-\delta y} = \frac{\delta}{ \beta+\delta} e^{-(\beta+\delta)x_1},
\end{align*}
and by symmetry
\begin{align*}
    P(R_1\geq C>x_{m+1}) = \frac{\beta}{ \beta+\delta} e^{-(\beta+\delta)x_{m+1}}.
\end{align*}
We therefore see that 
\begin{align*}
    (\ref{eqn:cor_step_2}) &= \frac{\beta}{ \beta+\delta} e^{-(\beta+\delta)x_{m+1}} \cdot \prod_{i=1}^m \left[ \frac{\delta}{ \beta+\delta} e^{-(\beta+\delta)x_i}\right] \\ &= \frac{\beta}{\beta+\delta} \cdot \left( \frac{\delta}{ \beta+\delta}\right)^m \prod_{i=1}^{m+1} e^{-(\beta+\delta)x_i} = P(M=m) \prod_{i=1}^{m+1}  e^{-(\beta+\delta)x_i}.
\end{align*}
It therefore follows that 
\begin{align*}
    P\left( R_1 > x_1,\dots,\: R_m>x_m, \tilde{C}>x_{m+1} \big| M=m \right) = \prod_{i=1}^{m+1}  e^{-(\beta+\delta)x_i},
\end{align*}
which suffices to show the second result.
\end{enumerate}
\end{proof}
Recalling that $C(t)$ converges in distribution to $C$, we now adapt Lemma \ref{lem:num_purifications_in_limit} to apply to $C(t)$. In order to do this, we use the following result (for a proof, see Chapter 7 of  \cite{grimmett2020}).
\begin{theorem}[Continuous mapping theorem]
    Let $\{X_n\}$ be a sequence of random variables taking values in $\mathbb{R}^k$. If $X_n \rightarrow X$ in distribution as $n\rightarrow \infty$ and $g:\mathbb{R}^k \rightarrow \mathbb{R}^l$ is continuous, then $g(X_n) \rightarrow g(X)$ in distribution as $n\rightarrow \infty$.
    \label{thm:continuous_mapping}
\end{theorem}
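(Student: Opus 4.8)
The plan is to reduce the statement to the standard Portmanteau characterisation of weak convergence, which is the cleanest route when $g$ is continuous on all of $\mathbb{R}^k$ (as assumed here). Recall that $X_n \rightarrow X$ in distribution is equivalent to the assertion that $\mathbb{E}[h(X_n)] \rightarrow \mathbb{E}[h(X)]$ for every bounded continuous function $h : \mathbb{R}^k \rightarrow \mathbb{R}$. Our goal is the analogous statement for the images: namely $\mathbb{E}[f(g(X_n))] \rightarrow \mathbb{E}[f(g(X))]$ for every bounded continuous $f : \mathbb{R}^l \rightarrow \mathbb{R}$, which is exactly $g(X_n) \rightarrow g(X)$ in distribution.

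First I would fix an arbitrary bounded continuous test function $f : \mathbb{R}^l \rightarrow \mathbb{R}$ and set $h \coloneqq f \circ g$. Since $g : \mathbb{R}^k \rightarrow \mathbb{R}^l$ is continuous by hypothesis and $f$ is continuous, the composition $h$ is a continuous map $\mathbb{R}^k \rightarrow \mathbb{R}$; moreover $h$ is bounded, since its range is contained in that of the bounded function $f$. Hence $h$ is an admissible test function for the convergence $X_n \rightarrow X$. Applying the Portmanteau characterisation to $h$ then yields $\mathbb{E}[h(X_n)] \rightarrow \mathbb{E}[h(X)]$, which is precisely $\mathbb{E}[f(g(X_n))] \rightarrow \mathbb{E}[f(g(X))]$. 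As $f$ was an arbitrary bounded continuous function, this establishes the claim.

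The only point requiring any care is the choice of characterisation of weak convergence: the reduction above works cleanly precisely because $g$ is assumed continuous everywhere, so that $f \circ g$ inherits both continuity and boundedness. In the more general form of the theorem, where $g$ is merely continuous off a set of measure zero under the law of $X$, the composition $f \circ g$ need not be continuous, and one would instead invoke the Skorokhod representation theorem to realise $X_n \rightarrow X$ almost surely on a common probability space, apply continuity of $g$ pointwise to obtain $g(X_n) \rightarrow g(X)$ almost surely, and then deduce weak convergence. Under the stated hypotheses this extra machinery is unnecessary, so I anticipate no substantive obstacle; the entire content of the argument is the observation that pre-composing a bounded continuous $f$ with the continuous $g$ preserves both properties.
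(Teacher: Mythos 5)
Your proof is correct. Note that the paper does not prove this statement itself---it simply cites Chapter 7 of Grimmett and Stirzaker---so there is no in-paper argument to compare against; your reduction to the bounded-continuous-test-function characterisation of weak convergence (take $h = f \circ g$, observe it is bounded and continuous, and apply the characterisation in both directions) is the standard textbook proof, and it relies only on the equivalence that the paper itself states separately as Theorem~\ref{thm:convergence_expectation}. Your closing remark correctly identifies why no Skorokhod-representation machinery is needed under the stated everywhere-continuity hypothesis.
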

\begin{corollary}
    Suppose that $C(t)$ and $X$ are independent random variables, and $C(t)$ converges in distribution to $C$ as $t\rightarrow \infty$. Then, 
    \begin{equation}
        \lim_{t\rightarrow \infty} P\left( C(t)>X \right) = P(C>X).
        \label{eqn:prob_C(t)>X_converges}
    \end{equation}
    \label{cor:limit_sum_random_variables}
\end{corollary}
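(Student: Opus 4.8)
The plan is to exploit the independence of $C(t)$ and $X$ to collapse the two-dimensional limit into a one-dimensional statement that is controlled directly by the definition of convergence in distribution. First I would condition on $X$. Writing $\bar{F}_{C(t)}(x) \coloneqq P(C(t) > x)$, independence gives $P(C(t) > X \mid X = x) = P(C(t) > x) = \bar{F}_{C(t)}(x)$, so that
\begin{equation*}
    P(C(t) > X) = \mathbb{E}_X\!\left[ \bar{F}_{C(t)}(X) \right] = \int_0^\infty \bar{F}_{C(t)}(x)\, \mathrm{d}F_X(x).
\end{equation*}
The analogous identity for the limiting variable reads $P(C > X) = \mathbb{E}_X[\bar{F}_C(X)]$, where $\bar{F}_C(x) = P(C > x)$. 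It therefore suffices to prove that $\mathbb{E}_X[\bar{F}_{C(t)}(X)] \to \mathbb{E}_X[\bar{F}_C(X)]$ as $t \to \infty$.

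The convergence $C(t) \to C$ in distribution means precisely that $\bar{F}_{C(t)}(x) \to \bar{F}_C(x)$ at every continuity point $x$ of $\bar{F}_C$. In the 1G1B application $C \sim \mathrm{Exp}(\beta)$ (Corollary \ref{cor:limit_delayed_1G1B_simplified}), whose distribution function is continuous everywhere, so this pointwise convergence holds for all $x \geq 0$, and in particular for $F_X$-almost every $x$. Since $0 \leq \bar{F}_{C(t)}(x) \leq 1$ and $F_X$ is a probability measure, the constant function $1$ serves as an integrable dominating function, and the dominated convergence theorem yields
\begin{equation*}
    \lim_{t \to \infty} \int_0^\infty \bar{F}_{C(t)}(x)\, \mathrm{d}F_X(x) = \int_0^\infty \bar{F}_C(x)\, \mathrm{d}F_X(x),
\end{equation*}
which is exactly $\lim_{t\to\infty} P(C(t) > X) = P(C > X)$.

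The step I expect to require the most care is the one subtlety hidden above: convergence in distribution only guarantees $\bar{F}_{C(t)}(x) \to \bar{F}_C(x)$ at continuity points of $\bar{F}_C$, and if $C$ had an atom coinciding with an atom of $X$ the conclusion could genuinely fail (take deterministic $C(t) = 1 + 1/t \to 1$ and $X \equiv 1$, so that $P(C(t) > X) = 1$ for all $t$ but $P(C > X) = 0$). This is exactly why the continuity of the limit law is essential, and it is guaranteed here because $C$ is exponentially distributed. Equivalently, one may route the argument through the continuous mapping theorem (Theorem \ref{thm:continuous_mapping}): independence gives $(C(t), X) \to (C, X)$ jointly in distribution, and the map $g(c,x) = \mathds{1}_{\{c > x\}}$ is continuous off the diagonal $\{c = x\}$, a set of probability $P(C = X) = 0$ since $C$ is continuous, so $\mathds{1}_{\{C(t) > X\}} \to \mathds{1}_{\{C > X\}}$ in distribution and taking expectations of these bounded Bernoulli variables recovers the claim. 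Because this second route requires the version of Theorem \ref{thm:continuous_mapping} that tolerates discontinuities on a null set, I would favour the elementary conditioning argument, which leans only on the bare definition of convergence in distribution together with continuity of the exponential law.
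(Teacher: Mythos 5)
Your proof is correct, but it takes a genuinely different route from the paper's. The paper fixes a sequence $t_n \to \infty$, notes that independence gives joint convergence $(C_n, -X) \to (C, -X)$ in distribution, applies the continuous mapping theorem (Theorem \ref{thm:continuous_mapping}) to the sum $g(x,y) = x+y$ to get $C_n - X \to C - X$ in distribution, and then reads off $P(C_n - X > 0) \to P(C - X > 0)$. You instead condition on $X$ to write $P(C(t) > X) = \mathbb{E}_X[\bar{F}_{C(t)}(X)]$ and pass to the limit by dominated convergence, using that $\bar{F}_{C(t)} \to \bar{F}_C$ pointwise because the exponential law of $C$ has no atoms. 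Your route buys something real: the paper's final step silently requires $0$ to be a continuity point of the law of $C - X$ (equivalently $P(C = X) = 0$), a hypothesis that is not part of the corollary's statement and without which the claim is false, as your counterexample $C(t) = 1 + 1/t$, $X \equiv 1$ shows. You identify exactly where continuity of the limit law enters and why it is available in the 1G1B application; the paper's argument is shorter but leaves that gap implicit. Your secondary remark --- that the continuous-mapping route also works via $g(c,x) = \mathds{1}_{\{c>x\}}$ provided one invokes the version tolerating discontinuities on a $P$-null set --- is also accurate and correctly explains why you prefer the elementary conditioning argument.
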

\begin{proof}
    Consider a sequence of times $\{t_n \}_{n\geq 1}$ such that $0<t_1<t_2<...$ and $\lim_{n\rightarrow \infty}t_n = \infty $. Let $C_n\coloneqq C(t_n)$. Then, $ C_n \rightarrow C$ in distribution. Moreover, since $C_n$ and $X$ are independent for all $n$, the pair $(C_n,-X)\rightarrow (C,-X)$ in distribution. Now, the function $g:\mathbb{R}^2\rightarrow \mathbb{R}$, with $g(x,y)=x+y$ is continuous. Then, by Theorem \ref{thm:continuous_mapping}, $C_n-X \rightarrow C-X$ in distribution, and so 
    \begin{equation*}
        \lim_{n\rightarrow \infty} P(C_n-X>0) = \lim_{n\rightarrow \infty} P(C(t_n)-X>0) = P(C-X>0).
    \end{equation*}
    Since this is true for all such sequences $\{t_n \}$, the result follows.
\end{proof}
In the following corollary, we let the current lifetime be dependent on the parameter $u$ to avoid confusion with the time of the renewal process (which is denoted as $t$). 
\begin{corollary}
Consider a renewal process $N(t)$ with arrival times $S_0=0$, $S_n = \sum_{i=1}^n R_i$, with $R_1\sim \text{Exp}(\delta)$. Suppose that $C(u)$ converges in distribution to $C\sim \text{Exp}(\beta)$ as $u\rightarrow \infty$. Let $M(u)=N(C(u))$ be the number of arrivals that have occurred by time $C(u)$. Let $\tilde{C}(u) \coloneqq C(u) - S_{M(u)}$ be the current lifetime at time $C(u)$. Then, the results of Lemma \ref{lem:num_purifications_in_limit} still hold in the limit $u\rightarrow \infty$. In particular,
    \begin{enumerate}
        \item The limiting distribution of $M(u)$ is that of $M$,
        \begin{equation}
            \lim_{u\rightarrow \infty } P(M(u)\geq m) = P(M\geq m)=\left(\frac{\delta}{\beta + \delta}\right)^{m}.      
        \end{equation}
        \item Conditional on $M(u) = m$, the random variables  $(R_1,...,R_m,\tilde{C})$ converge in distribution to mutually independent and identically distributed  $\text{Exp}(\beta+\delta)$ as $u\rightarrow \infty$, i.e. 
        \begin{equation}
            \lim_{u \rightarrow \infty } P(X_1>x_1,...,X_m>x_m,\tilde{C}(u)>x_{m+1}|M(u)= m) = \prod_{i=1}^{m+1}  e^{-(\beta+\delta)x_i},
        \end{equation}
     \end{enumerate}
\label{cor:num_purifications_limit}
\end{corollary}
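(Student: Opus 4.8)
The plan is to transfer the two exact identities of Lemma~\ref{lem:num_purifications_in_limit} (established for $C\sim\mathrm{Exp}(\beta)$) to the setting where $C(u)$ only converges in distribution to $C$. The two features I would exploit repeatedly are that the interarrival times $R_i$ do not depend on $u$ and are independent of $C(u)$, and that all the variables in the limit have densities, so Corollary~\ref{cor:limit_sum_random_variables} applies to each comparison of $C(u)$ with a sum of the $R_i$.

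For the first claim, I would begin with the set identity $\{M(u)\geq m\}=\{C(u)\geq S_m\}$, valid because $S_n=\sum_{i=1}^n R_i$ is nondecreasing, so $N(C(u))\geq m$ exactly when the $m$th arrival has already occurred. Since $S_m$ is independent of $C(u)$ and $C$ is continuous, Corollary~\ref{cor:limit_sum_random_variables} with $X=S_m$ yields $\lim_{u\to\infty}P(M(u)\geq m)=P(C\geq S_m)=P(M\geq m)=(\delta/(\beta+\delta))^m$, the last equalities coming from Lemma~\ref{lem:num_purifications_in_limit}.

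For the second claim, I would reuse the purely set-theoretic decomposition from the proof of Lemma~\ref{lem:num_purifications_in_limit}, which never used the distribution of $C$ and hence holds verbatim with $C,M,\tilde C$ replaced by $C(u),M(u),\tilde C(u)$. Writing $E_1=\{R_i>x_i,\ i\le m+1\}$, $E_2(u)=\{C(u)\ge x_{m+1}+\sum_{i=1}^m R_i\}$ and $E_3(u)=\{C(u)<\sum_{i=1}^{m+1}R_i\}$, one has $\{R_1>x_1,\dots,R_m>x_m,\tilde C(u)>x_{m+1},M(u)=m\}=E_1\cap E_2(u)\cap E_3(u)$, together with $P(E_1\cap E_2(u)\cap E_3(u))=P(E_1\cap E_2(u))-P(E_1\cap E_2(u)\cap\neg E_3(u))$. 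Each term compares $C(u)$ to a sum of the independent $R_i$, and I would pass to the limit by conditioning on $(R_1,\dots,R_{m+1})$: writing each probability as $\mathbb{E}\big[\mathds{1}_{E_1}\,g_u(W)\big]$, where $g_u(w)=P(C(u)>w)\to P(C>w)=:g(w)$ pointwise (Corollary~\ref{cor:limit_sum_random_variables} at deterministic $X=w$) and $W$ is a sum of the $R_i$, bounded convergence gives $\lim_u\mathbb{E}[\mathds{1}_{E_1}g_u(W)]=\mathbb{E}[\mathds{1}_{E_1}g(W)]$. This recovers the limiting values $P(E_1\cap E_2)$ and $P(E_1\cap E_2\cap\neg E_3)$ computed in the lemma, so $\lim_u P(E_1\cap E_2(u)\cap E_3(u))=P(M=m)\prod_{i=1}^{m+1}e^{-(\beta+\delta)x_i}$. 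Dividing by $P(M(u)=m)=P(M(u)\ge m)-P(M(u)\ge m+1)\to P(M=m)>0$ (from the first claim) then produces the asserted conditional limit $\prod_{i=1}^{m+1}e^{-(\beta+\delta)x_i}$.

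The main obstacle is exactly the interchange of the limit with the nonlinear functionals $M(u)$ and $\tilde C(u)$ of $C(u)$: Corollary~\ref{cor:limit_sum_random_variables} only treats a single comparison $P(C(u)>X)$, whereas the events here couple $C(u)$ with the constraints $E_1$ on the $R_i$. The conditioning-on-the-$R_i$ argument combined with bounded convergence is precisely what upgrades Corollary~\ref{cor:limit_sum_random_variables} to these joint events. One should also take care with the observation that on $E_1$ we have $R_{m+1}>x_{m+1}$, hence $\sum_{i=1}^{m+1}R_i>x_{m+1}+\sum_{i=1}^m R_i$, which collapses $E_2(u)\cap\neg E_3(u)$ to $\neg E_3(u)$; and since $C(u)$ is compared to sums of the $R_i$ that possess densities, strict and non-strict inequalities agree in all the limiting probabilities, so the distinction may be ignored.
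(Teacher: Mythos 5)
Your proposal is correct and follows essentially the same route as the paper: part 1 via the identity $\{M(u)\geq m\}=\{C(u)\geq S_m\}$ and Corollary \ref{cor:limit_sum_random_variables}, and part 2 by reusing the set-theoretic decomposition from Lemma \ref{lem:num_purifications_in_limit} verbatim with $C(u)$ in place of $C$, passing to the limit in each comparison of $C(u)$ with a sum of the $R_i$, and dividing by $P(M(u)=m)\to P(M=m)$. Your conditioning-on-the-$R_i$ plus bounded-convergence step is just a more explicit justification of the limit interchange that the paper handles by applying Corollary \ref{cor:limit_sum_random_variables} directly to the conditional probabilities.
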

\begin{proof}
    \begin{enumerate}
        \item Making use of Corollary \ref{cor:limit_sum_random_variables}, we have 
        \begin{align*}
            \lim_{u\rightarrow \infty } P(M(u) \geq m) =\lim_{u\rightarrow \infty } P(C(u)>S_m) = P(C>S_m) = \left(\frac{\delta}{\beta + \delta}\right)^{m}.
        \end{align*}
        \item 
        One may use exactly the same arguments as were used to obtain (\ref{eqn:current_lifetime_in_limit_step_2}), only replacing $C$ with $C(u)$ and $M$ with $M(u)$, to show that
        \begin{multline*}
    P\left( R_1 > x_1,\dots,\: R_m>x_m, \tilde{C}(u)> x_{m+1} \cap M(u)=m \right) \\ = \Bigg[ P\left(C(u)\geq x_{m+1}+\sum_{i=1}^m R_i \Big|  R_1 > x_1 , ... , R_{m+1} > x_{m+1} \right) \\  -P\left(C(u)\geq \sum_{i=1}^{m+1} R_i \Big| R_1>x_1,..., R_{m+1}>x_{m+1}\right) \Bigg]  \prod_{i=1}^{m+1}P\left(R_1 > x_i\right).  
\end{multline*}
By Corollary \ref{cor:limit_sum_random_variables}, in the limit $u\rightarrow \infty$ this satisfies
\begin{multline*}
    \lim_{u\rightarrow \infty} P(R_1>x_1,....,R_m>x_m, \tilde{C}(u)>x_{m+1} 
    \; \cap \; M(u)= m) \\ = \Bigg[ P\left(C\geq x_{m+1}+\sum_{i=1}^m R_i \Big|  R_1 > x_1 , ... , R_{m+1} > x_{m+1} \right) \\  -P\left(C\geq \sum_{i=1}^{m+1} R_i \Big| R_1>x_1,..., R_{m+1}>x_{m+1}\right)\Bigg]\prod_{i=1}^{m+1}P\left(R_1 > x_i\right) = (\ref{eqn:current_lifetime_in_limit_step_2}).
\end{multline*}
It then follows that
\begin{align*}
    \lim_{u\rightarrow \infty} P(R_1>x_1,....,&R_m>x_m, \tilde{C}(u)>x_{m+1} 
    \Big| M(u)= m) \\ &= \lim_{u\rightarrow \infty} \frac{P(R_1>x_1,....,R_m>x_m, \tilde{C}(u)>x_{m+1} 
    \cap M(u)= m)}{P(M(u)=m)} \\ &= \frac{P(R_1>x_1,....,R_m>x_m, \tilde{C}>x_{m+1} 
    \cap M= m)}{P(M=m)} = \prod_{i=1}^{m+1}  e^{-(\beta+\delta)x_i},
\end{align*}
by Lemma \ref{lem:num_purifications_in_limit}. 
\end{enumerate}
\end{proof}
For the case when $C(u)$ is the current lifetime of simplified 1G1B, the random variable $(R_1,\dots,R_m,\tilde{C}(u))$ by definition has the same distribution as $\vec{T}(u)$.
Recall that $\vec{T}(u)$ contains the times spent in each purification level leading up to the current one at time $u$ in 1G1B (Definition \ref{def:T(t)}).
This leads to the following results.
\begin{corollary}
    Conditional on $s(t)=i$, $\vec{T}(t)$ converges in distribution to $(Q_0,\dots,Q_i)$ as $t\rightarrow \infty$, where the $Q_j$ are i.i.d. random variables with $Q_0\sim \mathrm{Exp}(\beta + \delta)$.
    \label{cor:limiting_dist_T(t)}
\end{corollary}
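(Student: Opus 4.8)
The plan is to recognise that Corollary \ref{cor:limiting_dist_T(t)} is essentially a translation of Corollary \ref{cor:num_purifications_limit} into the language of the purification levels of the full 1G1B process. The two ingredients I would assemble are: (i) the event $\{s(t)=i\}$ coincides with the event $\{M(t)=i\}$, where $M(t)$ counts the successful pumping rounds that have occurred within the current lifetime; and (ii) the vector $\vec{T}(t)$ of Definition \ref{def:T(t)} is, by construction, the vector $(R_1,\dots,R_{M(t)},\tilde{C}(t))$ of inter-pumping times together with the time elapsed since the last pumping. Once both identifications are in place, the claim follows by reading off part 2 of Corollary \ref{cor:num_purifications_limit}.

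First I would set up the embedded pumping process. While the system occupies $\neg\emptyset$ (equivalently $s(t)\geq 0$), successful pumpings arrive as a Poisson process of rate $\delta = \lambda p q$, independently of the consumption and failure events that govern departures from $\neg\emptyset$ (rate $\beta = \mu+\lambda q(1-p)$). Taking $C(t)$ to be the current lifetime of the simplified process (Definition \ref{def:current_lifetime}), Corollary \ref{cor:limit_delayed_1G1B_simplified} gives $C(t)\to C\sim\mathrm{Exp}(\beta)$ in distribution, with $C$ independent of the pumping stream. This is exactly the configuration of Corollary \ref{cor:num_purifications_limit}, with $M(t)=N(C(t))$ the number of pumpings accumulated so far, $R_j$ the time between the $(j-1)$th and $j$th pumping, and $\tilde{C}(t)=C(t)-S_{M(t)}$ the time since the most recent pumping.

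Next I would carry out the index matching. A link that has undergone exactly $i$ successful pumpings is, by Definition \ref{def:1G1B_system}, in purification level $i$; hence $\{s(t)=i\}=\{M(t)=i\}$. Moreover, the $j$th inter-pumping time $R_j$ is precisely the time $T_{j-1}(t)$ spent in the $(j-1)$th purification level, while $\tilde{C}(t)$ is the time $T_i(t)$ spent so far in the current level $i$; therefore, conditional on $s(t)=i$, the length-$(i+1)$ vectors $\vec{T}(t)$ and $(R_1,\dots,R_i,\tilde{C}(t))$ have the same distribution. Applying part 2 of Corollary \ref{cor:num_purifications_limit} with $m=i$ then yields that, conditional on $s(t)=i$, this vector converges in distribution as $t\to\infty$ to $(Q_0,\dots,Q_i)$ with the $Q_j$ i.i.d.\ $\mathrm{Exp}(\beta+\delta)$, which (using $\beta+\delta=\mu+\lambda q$) is the stated conclusion.

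The step requiring the most care is not analytic but notational: one must verify that conditioning on the purification level $s(t)=i$ is genuinely equivalent to conditioning on $M(t)=i$, and that the bookkeeping of which $R_j$ and $\tilde{C}(t)$ correspond to which $T_k(t)$ is correct, since an off-by-one error in the vector length would alter the statement. All the probabilistic content — the limiting independence and the common $\mathrm{Exp}(\beta+\delta)$ law — is already supplied by Corollary \ref{cor:num_purifications_limit}, so no further limit arguments or renewal-theoretic estimates are needed here.
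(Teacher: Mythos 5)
Your proposal is correct and follows essentially the same route as the paper: the authors likewise obtain this corollary by identifying $\vec{T}(t)$ with $(R_1,\dots,R_{M(t)},\tilde{C}(t))$ for the embedded Poisson pumping process of rate $\delta=\lambda pq$ inside the current lifetime $C(t)\to C\sim\mathrm{Exp}(\beta)$, and then reading off part 2 of Corollary \ref{cor:num_purifications_limit}. Your explicit index matching ($R_j=T_{j-1}$, $\tilde{C}(t)=T_i(t)$) is exactly the identification the paper states just before the corollary.
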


We now continue with the formulae for the performance metrics. The availability in the 1G1B system was given in Proposition \ref{prop:1G1B_stationary_distribution} and the average consumed fidelity was given in Theorem \ref{thm:average_fidelity_formula}. Next, we prove both of them.
\begin{proof}[Proof of Proposition \ref{prop:1G1B_stationary_distribution}]
From Corollary \ref{cor:limit_delayed_1G1B_simplified}, we see that 
\begin{align*}
    A = \lim_{t\rightarrow \infty} P(s(t) = \neg \emptyset) = \frac{\lambda}{\lambda+\mu+\lambda q (1-p)}.
\end{align*}
Further, for $i\geq 0$
\begin{align*}
    P(s(t)=i) &= P(s(t)=i|s(t)\neq\emptyset)\cdot P(s(t)\neq \emptyset).  
\end{align*}
Letting $C(t)$ denote the current lifetime of simplified 1G1B at time $t$, and $M(t)$ denote the number of purifications that have occurred within this time, by Corollary \ref{cor:num_purifications_limit} it follows that
\begin{align*}
    P(s(t)=i) = P(M(t)=i)\cdot P(s(t)\neq \emptyset) \rightarrow P(M=i)\cdot A 
\end{align*}
as $t\rightarrow \infty$. Recalling the distribution of $M$ as found in Lemma \ref{lem:num_purifications_in_limit}, we obtain 
\begin{align*}
\lim_{t\rightarrow \infty} P(s(t)=i) &= \left(\frac{\lambda q p}{\mu + \lambda q }\right)^i \cdot \frac{\mu + \lambda q (1-p)}{\mu + \lambda q } \cdot A \\ &= \frac{\lambda^{i+1}q^ip^i}{(\mu + \lambda q)^{i+1}}\cdot \frac{\mu + \lambda q (1-p)}{\lambda+\mu+\lambda q(1-p)}.
\end{align*}
We note that this result can also be derived with the global balance equations of a CTMC. Here, we chose to use the derivation with renewal theory since it offers a more general formula for the availability (see (\ref{eqn:limit_p(t)_delayed})) and ties in more neatly with the derivation of the formula for the average consumed fidelity, as we will see below.
\end{proof}
The following proposition will be helpful in the proof of Theorem \ref{thm:average_fidelity_formula} (formula for average consumed fidelity).
\begin{proposition} 
    Let $\{p_i(t)\}_{i\geq 0}$ and $\{ e_i(t) \}_{i\geq 0}$ be such that for all $i$,  $\lim_{t\rightarrow \infty} p_i(t)=\pi_i$ and $\lim_{t\rightarrow \infty } e_i(t) = c_i$. Suppose also that for all $t$, $0 \leq e_i(t) \leq 1$, $0 \leq p_i(t) \leq 1$ and $\sum_{i=0} p_i(t) = 1$. Then
    \begin{equation}
        \lim_{t\rightarrow \infty } \sum_{i=0}^{\infty} e_i(t)p_i(t)  = \sum_{i=0}^{\infty} c_i \pi_i.
        \label{eqn:sum_convergence}
    \end{equation}
    \label{prop:convergence_sum}
    \end{proposition}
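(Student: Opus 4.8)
The plan is to prove the interchange of the $t\to\infty$ limit with the infinite sum by an $\varepsilon$--$N$ truncation: split the series into a finite head, where termwise convergence is immediate, and two tails, which I control using the boundedness $e_i(t),c_i\le 1$ together with a no-escape-of-mass property of the distributions $\{p_i(t)\}_i$. Throughout I use that both $\{e_i(t)\}$ and $\{c_i\}$ take values in $[0,1]$ and that $\{p_i(t)\}_i$ is a probability distribution for every fixed $t$.

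The key fact I will record first is that $\sum_{i\ge0}\pi_i=1$, i.e.\ no probability mass escapes to infinity in the limit. This does not follow from the stated hypotheses alone (Fatou only yields $\sum_i\pi_i\le1$), which is where the real content sits; in the 1G1B setting it holds because the limits $\pi_i$ are precisely the geometric probabilities of Lemma~\ref{lem:num_purifications_in_limit} and Corollary~\ref{cor:num_purifications_limit}, which sum to one. Equivalently, one checks the tightness statement $\lim_{N\to\infty}\limsup_{t\to\infty}\sum_{i>N}p_i(t)=0$: for fixed $N$ the finite head $\sum_{i\le N}p_i(t)\to\sum_{i\le N}\pi_i$, so $\limsup_{t\to\infty}\sum_{i>N}p_i(t)=1-\sum_{i\le N}\pi_i$, which tends to $0$ as $N\to\infty$.

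Now fix $\varepsilon>0$ and choose $N$ with $\sum_{i>N}\pi_i<\varepsilon$; by the previous paragraph this also forces $\sum_{i>N}p_i(t)<2\varepsilon$ for all large $t$. Then
\[
\Big|\sum_{i=0}^{\infty}e_i(t)p_i(t)-\sum_{i=0}^{\infty}c_i\pi_i\Big|
\le\Big|\sum_{i=0}^{N}\big(e_i(t)p_i(t)-c_i\pi_i\big)\Big|
+\sum_{i>N}e_i(t)p_i(t)+\sum_{i>N}c_i\pi_i.
\]
For the finite head, each summand converges since $e_i(t)p_i(t)\to c_i\pi_i$ (a product of two bounded convergent sequences), so the head tends to $0$ and is $<\varepsilon$ for $t$ large. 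For the two tails I bound $e_i(t)\le1$ and $c_i\le1$ to get $\sum_{i>N}e_i(t)p_i(t)\le\sum_{i>N}p_i(t)<2\varepsilon$ and $\sum_{i>N}c_i\pi_i\le\sum_{i>N}\pi_i<\varepsilon$. Collecting the three pieces gives a bound $<4\varepsilon$ for all sufficiently large $t$, and since $\varepsilon$ is arbitrary the limit equals $\sum_i c_i\pi_i$.

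The only genuine obstacle is the control of $\sum_{i>N}e_i(t)p_i(t)$: this is where tightness ($\sum_i\pi_i=1$) is essential, and everything else is routine. An alternative, slightly slicker packaging avoids the explicit truncation via Scheff\'e's lemma: since $p_i(t)\to\pi_i$ pointwise with all terms nonnegative and $\sum_i p_i(t)=1=\sum_i\pi_i$, one has $\sum_i|p_i(t)-\pi_i|\to0$; then $|\sum_i e_i(t)(p_i(t)-\pi_i)|\le\sum_i|p_i(t)-\pi_i|\to0$, while $\sum_i(e_i(t)-c_i)\pi_i\to0$ by dominated convergence (dominated by the summable $\pi_i$), and the two estimates combine to give the claim.
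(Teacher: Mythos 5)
Your proof is correct, and the underlying mechanism is the same as the paper's: an $\epsilon$--$N$ truncation in which a finite head converges termwise and the tails are killed using boundedness of $e_i,c_i$ together with the fact that no probability mass escapes to infinity. The organisation differs slightly — you split by index (head plus two tails), whereas the paper first splits the difference term-by-term into $\sum_i e_i(t)\lvert p_i(t)-\pi_i\rvert$ and $\sum_i \lvert e_i(t)-c_i\rvert\,\pi_i$ and then truncates each; your Scheff\'e-lemma packaging of the first piece is essentially what the paper's explicit $\epsilon$-chase for $\sum_i\lvert p_i(t)-\pi_i\rvert\to 0$ amounts to. The most valuable part of your write-up is the point you flag at the outset: $\sum_i\pi_i=1$ does \emph{not} follow from the stated hypotheses. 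The paper asserts ``since $\sum_{i}p_i(t)=1$, it follows that $\sum_i\pi_i=1$,'' but pointwise convergence of probability mass functions only yields $\sum_i\pi_i\le 1$; taking $p_i(t)=\mathds{1}_{\{i=\lfloor t\rfloor\}}$ and $e_i(t)\equiv 1$ gives $\pi_i=c_i\pi_i=0$ for all $i$ while $\sum_i e_i(t)p_i(t)=1$, so the proposition as literally stated is false without a tightness assumption. You are right that in the 1G1B application the $\pi_i$ are the explicit geometric probabilities of Lemma~\ref{lem:num_purifications_in_limit} and do sum to one, so the result holds where it is invoked; strictly, $\sum_i\pi_i=1$ should be added to the hypotheses (or verified as you do), and your proof supplies exactly the missing justification.
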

        \begin{proof}[Proof of proposition \ref{prop:convergence_sum}] To show (\ref{eqn:sum_convergence}), it suffices to show that for any $\epsilon >0$, there exists a $T$ such that for all $t>T$,
    \begin{equation}
        \Bigg| \sum_{i=0}^{\infty} e_i(t)p_i(t) - \sum_{i=0}^{\infty} c_i \pi_i \Bigg| < \epsilon .
    \end{equation}

We firstly bound the sum using the triangle inequality,
\begin{align}
    \Bigg| \sum_{i=0}^{\infty} e_i(t) p_i(t) - \sum_{i=0}^{\infty} c_i \pi_i \Bigg| &= \Bigg| \sum_{i=0}^{\infty} e_i(t) (p_i(t) - \pi_i) + (e_i(t)-c_i)\pi_i \Bigg| \nonumber \\ &\leq \underbrace{\sum_{i=0}^{\infty} e_i(t) |p_i(t)-\pi_i|}_{(A)} + \underbrace{\sum_{i=0}^{\infty} |e_i(t)-c_i|\pi_i}_{(B)} \label{eqn:split_sum_triangle_ineq} .
\end{align}

We then show that $(A)\rightarrow 0$ and $(B)\rightarrow 0$ as $t\rightarrow\infty$. We firstly show that
\begin{equation}
    \lim_{t\rightarrow \infty} \sum_{i=0}^{\infty} \big| p_i(t) - \pi_i\big| = 0.
    \label{eqn:convergence_1}
\end{equation}
Note that, since $\sum_{i=0}^{\infty}p_i(t) = 1$, it follows that $\sum_{i=0}^{\infty} \pi_i = 1$. Then, choose $N$ such that 
    \begin{equation*}
        \sum_{i=0}^N \pi_i > 1-\frac{\epsilon}{2}
    \end{equation*}
and choose $T_1$ such that
\begin{equation*}
    \sum_{i=0}^{N} \big|p_i(t) - \pi_i \big| < \frac{\epsilon}{2} \text{, } \forall t>T_1
\end{equation*}
which is possible since the sum is finite. Then $\forall t>T_1$,
\begin{align}
    \Bigg| 1-\sum_{i=0}^{N}p_i(t) \Bigg| &= \Bigg| 1-\sum_{i=0}^{N}\left( \pi_i - (\pi_i-p_i(t))\right)\Bigg| \nonumber \\ &< 
 \Bigg| 1 - \sum_{i=0}^{N}\pi_i \Bigg| +  \sum_{i=0}^N |\pi_i - p_i(t) | < \frac{\epsilon}{2}+\frac{\epsilon}{2} = \epsilon. \label{eqn:sum_p_i(t)_from_N}
\end{align}
Now, choose $T_2$ such that $\forall t>T_2$,
\begin{equation*}
    |p_i(t) - \pi_i|<\frac{\epsilon}{N} \text{, }\forall i = 0,...,N
\end{equation*}
and let $T = \max \{T_1,T_2 \}$. Then, $\forall t>T,$ 
\begin{align*}
    \sum_{i=0}^{\infty}|p_i(t)-\pi_i| &= \sum_{i=0}^N |p_i(t) - \pi_i| + \sum_{i>N} |p_i(t) - \pi_i| \\ &< N\cdot \frac{\epsilon}{N} + \sum_{i>N}p_i(t) + \sum_{i>N}\pi_i \\ &< \epsilon + \epsilon + \frac{\epsilon}{2},
\end{align*}
from (\ref{eqn:sum_p_i(t)_from_N}). This suffices to show (\ref{eqn:convergence_1}). Combined with the fact that the $e_i$ are bounded, it follows that $(A)\rightarrow 0$. We now show that $(B)\rightarrow 0$, i.e. 
\begin{equation}
   \lim_{t\rightarrow \infty} \sum_{i=0}^{\infty} |e_i(t)-c_i|\pi_i = 0.
   \label{eqn:convergence_2}
\end{equation}
To show this, let $\epsilon>0$. Choose $N$ such that $\sum_{i=0}^{N}\pi_i>1-\epsilon $. Choose $T$ such that 
\begin{equation*}
    \sum_{i=0}^N|e_i(t)-c_i|< \epsilon, \; \forall \; t>T.
\end{equation*}
This is possible since the LHS is a finite sum. Then,  
\begin{align*}
    \sum_{i=0}^{\infty} |e_i(t)-c_i|\pi_i &= \sum_{i=0}^N |e_i(t)-c_i|\pi_i + \sum_{i>N} |e_i(t)-c_i|\pi_i 
    \\ &< \left(\sum_{i=0}^N |e_i(t)-c_i| \right) \cdot\left( \sum_{i=0}^N \pi_i \right) +  \sum_{i>N} \pi_i \\ &< \epsilon (1-\epsilon) + \epsilon \; \forall \; t>T,
\end{align*}
which shows (\ref{eqn:convergence_2}).
\end{proof}  
Combining these results ($(A)\rightarrow 0$ and $(B)\rightarrow 0)$ in (\ref{eqn:split_sum_triangle_ineq})) suffices to show Proposition \ref{prop:convergence_sum}.
We are now ready to prove Theorem \ref{thm:average_fidelity_formula} (formula for average consumed fidelity).
\begin{proof}[Proof of Theorem \ref{thm:average_fidelity_formula}]
We firstly expand out the average consumed fidelity (Definition \ref{def.avg_cons_fid}) as a sum by conditioning on the value of $s(t)$,
\begin{align}
    \overline F \coloneqq \mathbb{E}[F(t)|s(t)\neq \emptyset] &= \sum_{i=0}^{\infty} \mathbb{E}[F(t)|s(t)=i ] P\!\left(s(t)=i|s(t)\neq \emptyset \right) \nonumber \\ &= \frac{1}{P\left(s(t)\neq \emptyset \right)}\sum_{i=0}^{\infty} \mathbb{E}[F(t)|s(t)=i ] P\! \left(s(t)=i \right).
    \label{eqn:average_fidelity_expand_as_sum}
\end{align}
Recall that we are interested in the limit $t\rightarrow \infty$ of the above. Note that from Proposition \ref{prop:1G1B_stationary_distribution}, we know the limiting values of $P\left(s(t)\neq \emptyset \right)$ and $ P\left(s(t)=i \right)$. We now claim that
\begin{equation}
     \lim_{t\rightarrow \infty } \mathbb{E}[F(t)|s(t)=i ] = \mathbb{E} \left[F^{(i)}\left(Q_0,Q_1,...,Q_i \right) \right]
     \label{eqn:limit_conditional_fidelity_functions}
\end{equation}
where $Q_0,Q_1,\dots,Q_i$ are i.i.d. random variables with $Q_0\sim \text{Exp}(\mu+\lambda q)$, and $F^{(i)}$ is given in Definition \ref{def:composite_fidelity_fn}. We use the following result:
\begin{theorem}[Theorem 7.2.19 of \cite{grimmett2020}]
    Let $X_n$ be a sequence of random variables. Then, $X_n\rightarrow X$ in distribution if and only if  $\mathbb{E}[g(X_n)]\rightarrow \mathbb{E}[g(X)]$ for all bounded continuous functions $g$.
    \label{thm:convergence_expectation}
\end{theorem}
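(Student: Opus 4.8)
The statement is the portmanteau characterisation of convergence in distribution, so the plan is to prove the two implications separately, taking convergence in distribution to mean $F_{X_n}(x) \to F_X(x)$ at every continuity point $x$ of the limiting distribution function $F_X$. The direction actually invoked in the proof of Theorem \ref{thm:average_fidelity_formula} is that convergence in distribution implies $\mathbb{E}[g(X_n)] \to \mathbb{E}[g(X)]$ for bounded continuous $g$ (applied with $g = F^{(i)}$, which is bounded and continuous, to the vectors $\vec{T}(t) \to (Q_0,\dots,Q_i)$), but since the cited result is an equivalence I would establish both directions.

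For the easier implication, suppose $\mathbb{E}[g(X_n)] \to \mathbb{E}[g(X)]$ for every bounded continuous $g$, and fix a continuity point $x$ of $F_X$. For each $\delta > 0$ I would build two piecewise-linear (hence bounded continuous) functions $g^-_\delta, g^+_\delta$ sandwiching the indicator of $(-\infty, x]$: take $g^-_\delta$ equal to $1$ on $(-\infty, x-\delta]$ and $0$ on $[x, \infty)$, and $g^+_\delta$ equal to $1$ on $(-\infty, x]$ and $0$ on $[x+\delta, \infty)$, both interpolating linearly. Then $\mathds{1}_{(-\infty,x-\delta]} \le g^-_\delta \le \mathds{1}_{(-\infty,x]} \le g^+_\delta \le \mathds{1}_{(-\infty,x+\delta]}$, so $\mathbb{E}[g^-_\delta(X_n)] \le F_{X_n}(x) \le \mathbb{E}[g^+_\delta(X_n)]$. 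Passing to $\liminf$ and $\limsup$ and using the hypothesis yields $F_X(x-\delta) \le \liminf_n F_{X_n}(x) \le \limsup_n F_{X_n}(x) \le F_X(x+\delta)$; letting $\delta \to 0$ and invoking continuity of $F_X$ at $x$ squeezes $F_{X_n}(x) \to F_X(x)$.

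For the implication we actually need, suppose $X_n \to X$ in distribution and let $g$ be bounded and continuous with $|g| \le M$. The cleanest route is the Skorokhod representation theorem: construct, on a single probability space, variables $Y_n \stackrel{d}{=} X_n$ and $Y \stackrel{d}{=} X$ with $Y_n \to Y$ almost surely. Continuity of $g$ then gives $g(Y_n) \to g(Y)$ almost surely, boundedness lets me apply the bounded convergence theorem to obtain $\mathbb{E}[g(Y_n)] \to \mathbb{E}[g(Y)]$, and equality in distribution gives $\mathbb{E}[g(X_n)] = \mathbb{E}[g(Y_n)]$ and $\mathbb{E}[g(X)] = \mathbb{E}[g(Y)]$, which closes the argument; the same coupling works verbatim in $\mathbb{R}^k$, which is the form the application requires. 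If one prefers to avoid Skorokhod, the identical conclusion follows from a direct $\epsilon$-argument: pick continuity points $a < b$ of $F_X$ with $P(X \notin [a,b])$ and, for large $n$, $P(X_n \notin [a,b])$ small to control the tails, then use uniform continuity of $g$ on $[a,b]$ to approximate it by a step function whose jumps sit at continuity points of $F_X$, and let the CDF convergence at those points carry the approximation through.

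The main obstacle is precisely this second direction: upgrading pointwise CDF convergence at continuity points to convergence of $\mathbb{E}[g(X_n)]$ for an arbitrary bounded continuous $g$ is where the real content lies, since it requires either the nontrivial Skorokhod coupling or a delicate simultaneous control of the tails and of the oscillation of $g$ on compact sets. The first direction, by contrast, is a routine sandwiching estimate.
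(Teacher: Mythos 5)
This statement is not proved in the paper at all: it is quoted as Theorem 7.2.19 of Grimmett and Stirzaker and invoked as a black box in the proof of Theorem \ref{thm:average_fidelity_formula}, so there is no internal argument to compare yours against. Your proof is the standard portmanteau argument and is correct: the sandwiching of $\mathds{1}_{(-\infty,x]}$ between piecewise-linear bounded continuous functions gives the direction from convergence of expectations to convergence of distribution functions at continuity points, and the Skorokhod representation combined with the bounded convergence theorem (or the direct truncation-and-step-function argument you sketch) gives the converse. Two minor remarks. First, the direction the paper actually needs is the one you prove second (convergence in distribution implies convergence of $\mathbb{E}[g(X_n)]$ for bounded continuous $g$), applied with $g=F^{(i)}$ to the random vectors $\vec{T}(t)$; you correctly observe that the Skorokhod route carries over to $\mathbb{R}^k$. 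Second, your CDF-sandwiching direction as written is one-dimensional; in $\mathbb{R}^k$ the test sets are orthants rather than half-lines and one must work at continuity points of the joint distribution function, which requires a slightly more careful construction of the approximating functions --- but since that is not the implication the paper uses, this does not affect the application.
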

Recall that conditional on $s(t)=i$, we have $ F(t) = F^{(i)}\left(\vec{T}(t)\right)$ (from Definition \ref{def:F(t)}). As mentioned in Section \ref{sec:1G1B_system}, $F^{(i)}$ is a continuous and bounded function. Therefore,  (\ref{eqn:limit_conditional_fidelity_functions}) follows by combining Theorem \ref{thm:convergence_expectation} and Corollary \ref{cor:limiting_dist_T(t)}.

From Proposition \ref{prop:convergence_sum}, we therefore see that
\begin{align*}
    \lim_{t\rightarrow \infty }\mathbb{E}[F(t)|s(t)\neq \emptyset] &= \lim_{t\rightarrow \infty} \frac{1}{P\left(s(t)\neq \emptyset \right)}\sum_{i=0}^{\infty} \mathbb{E}[F(t)|s(t)=i ] P\left(s(t)=i \right) \\ &=\frac{1}{A}  \sum_{i=0}^{\infty} \lim_{t\rightarrow \infty}\mathbb{E}[F(t)|s(t)=i ] \lim_{t\rightarrow \infty}P\left(s(t)=i \right) \\ &= \frac{1}{A}\sum_{i=0}^{\infty} c_i \pi_i,
\end{align*}
where $c_i = \mathbb{E} \left[F^{(i)}\left(Q_0,Q_1,...,Q_i \right) \right]$ and $\pi_i = \lim_{t\rightarrow \infty} P(s(t)=i)$.
\end{proof}

\clearpage 

\section{Average consumed fidelity with a linear jump function}
In this Appendix we focus on linear jump functions. In \ref{app.range_a_b_linear_jump}, we provide bounds for the coefficients of a linear jump function.
In \ref{app:average_fidelity_derivation_linear_jump}, we first prove Proposition \ref{prop:formula_fidelity_functions_linear_sym_jump}.
Then, we use that Proposition to derive the average consumed fidelity in a 1G1B system that uses a pumping protocol with a linear jump function, which we denote by $\overline F_{\scriptscriptstyle \mathrm{linear}}$ (i.e., we show Lemma \ref{lem:formula_average_fidelity_linear_jump}).
We also show that $\overline F_{\scriptscriptstyle \mathrm{linear}}$ is monotonic in the probability of pumping $q$ and the probability of successful pumping $p$ (Proposition \ref{lem:properties_average_fidelity_linear_jump}).
\reb{Lastly, in \ref{app.noise_threshold}, we discuss in which situations $\overline F_{\scriptscriptstyle \mathrm{linear}}$ is monotonically increasing in $q$, and we compute the noise threshold (\ref{eq.noise_threshold}) discussed in Section \ref{sec:operating_regimes} (above this threshold, any purification is better than no purification).}

\subsection{Bounds on the parameters of a linear jump function} \label{app.range_a_b_linear_jump}
\begin{proposition}
    Consider a jump function that is linear with the fidelity of one of the input states, i.e.,
\begin{equation}
    J(F,\rho) = a(\rho)F + b(\rho),
\end{equation}
where $F$ is the fidelity of one of the input states and $\rho$ is the second input state.
Then, the coefficients $a(\rho)$ and $b(\rho)$
must satisfy $$0\leq a(\rho) \leq 1 \;\;\;\;\mathrm{and}\;\;\;\; \frac{1}{4}\big(1-a(\rho) \big) \leq b(\rho) \leq 1-a(\rho).$$
\label{prop:range_coeffs_linear_jump}
\end{proposition}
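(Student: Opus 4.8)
The plan is to read off all four inequalities from the behaviour of the affine map $J(F)=a F+b$ at the two endpoints of the physically relevant range of the good-memory fidelity, which I take to be $F\in[\tfrac14,1]$: a Werner state is a legitimate density matrix exactly for $F\le 1$, and in the 1G1B system the stored fidelity never falls below the maximally mixed value $\tfrac14$, since both depolarising noise (Definition~\ref{def:depol_noise}) and successful pumping keep it in this interval. The three substantive facts I would feed in are: (i) $J(F)$ is itself a fidelity, so $J(F)\le 1$; (ii) the output fidelity of a purification step is never below that of the maximally mixed state, $J(F)\ge\tfrac14$; and (iii) $J$ is non-decreasing in the input fidelity, i.e.\ a higher-quality stored link cannot produce a lower-quality output upon success.

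With these in hand the computation is routine. Evaluating at $F=1$, fact (i) gives $J(1)=a+b\le 1$, which is the claimed upper bound $b\le 1-a$. Evaluating at $F=\tfrac14$, fact (ii)---which binds at the smallest admissible fidelity once we know $J$ is non-decreasing---gives $J(\tfrac14)=\tfrac{a}{4}+b\ge\tfrac14$, i.e.\ $b\ge\tfrac14(1-a)$, the lower bound on $b$. Subtracting these two endpoint inequalities pins $a$ from above: since $J(1)-J(\tfrac14)=\tfrac34 a$ while $J(1)\le 1$ and $J(\tfrac14)\ge\tfrac14$, we obtain $\tfrac34 a\le\tfrac34$, hence $a\le 1$. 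Finally, fact (iii) is precisely the statement $a\ge 0$, completing $0\le a\le 1$. (Note that the domain must be $[\tfrac14,1]$ rather than $[0,1]$: requiring $J\ge\tfrac14$ on all of $[0,1]$ would force the stronger $b\ge\tfrac14$ at $F=0$, which is incompatible with the stated bound $b\ge\tfrac14(1-a)$.)

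The main obstacle is justifying (ii) and (iii) rigorously rather than assuming them, because the purely formal requirement ``$J(F)$ is a valid fidelity in $[0,1]$'' is too weak: on its own it only yields $b\ge 0$ and $a\ge -1$, not $a\ge 0$ nor the sharp lower bound $J(\tfrac14)\ge\tfrac14$. For instance, a map that merely relabels the Bell states---so that the $\ket{\phi^+}$-weight of the output equals the $\ket{\psi^+}$-weight of the input---has jump function $J(F)=\tfrac13(1-F)$, which is decreasing, so monotonicity cannot follow from positivity alone. I would close this gap by appealing to the meaning of a genuine purification protocol built from the general form of Appendix~\ref{app:jump_function_general_form}: using the freedom to apply local Bell-state relabellings (single-qubit Pauli corrections, which are free operations and do not change the success probability), one arranges for the target component $\ket{\phi^+}$ to carry the largest weight of the output, which both forces $J(F)\ge\tfrac14$ and makes $J$ non-decreasing on $[\tfrac14,1]$. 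Once (ii) and (iii) are secured, the endpoint argument above delivers the four bounds immediately.
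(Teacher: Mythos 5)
Your proof is correct and follows essentially the same route as the paper's: the paper likewise imposes $J\leq 1$ and $J\geq 1/4$ to obtain $b\leq 1-a$ and $b\geq (1-a)/4$, deduces $a\leq 1$ from the compatibility of these two bounds, and takes $a\geq 0$ from the requirement that the jump function be non-decreasing in $F$. Your additional discussion of why $J\geq\tfrac14$ and monotonicity are desiderata rather than automatic consequences of positivity is more careful than the paper, which simply asserts them, but it does not change the argument.
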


\begin{proof}
    First, we require $J(F,\rho)\leq 1$, which is equivalent to $b \leq 1-a$.
    We also require $J(F,\rho)\geq 1/4$, which leads to $b\geq (1-a)/4$.
    By imposing that the upper bound on $b$ has to be larger than the lower bound, we find that $a\leq 1$.
Finally, since we want jump functions that increase with increasing $F$, we want $a\geq0$.
\end{proof}

\subsection{Derivation of $\overline F_{\scriptscriptstyle \mathrm{linear}}$}
\label{app:average_fidelity_derivation_linear_jump}
\begin{proof}[Proof of Proposition \ref{prop:formula_fidelity_functions_linear_sym_jump}]

Here, we consider a 1G1B system with $J(F,\rho_\mathrm{new}) = aF+b$ and $F^{(0)}(t_0)=D_{t_0}(F_\mathrm{new})$, where $F_\mathrm{new}$ is the fidelity of the state $\rho_\mathrm{new}$.
Our goal is to find an analytical solution for the fidelity of the entangled link after $i$ consecutive successful purifications, $F^{(i)}(t_0,...,t_{i-1},t_i)$. The time passed between purification $j$ and $j+1$ is given by $t_j$. After the $i$-th purification the system spent time $t_i$ without any transitions (i.e., no purification or consumption events).
We show in this proof that $F^{(i)}$ is given by
    \begin{equation}\label{eq.Fi_appendix}
        F^{(i)}(t_0,...,t_{i-1},t_i) = \frac{1}{4} + \sum_{j=0}^i m_j^{(i)} e^{-\Gamma(t_j+t_{j+1}...+t_i)}
    \end{equation}
    where the constants $m^{(i)}_j$ are given by $m_0^{(0)}=F_\mathrm{new}-\frac{1}{4}$, and 
    \begin{equation*}
    m_j^{(i)} = 
    \begin{cases}
    a^{i-j}
    \left( \frac{a}{4} + b - \frac{1}{4} \right), \text{ if } j>0, \\ 
    a^i \left(F_\mathrm{new} - \frac{1}{4}\right)\text{ if } j=0.
    \end{cases} 
    \end{equation*}
    for $i>0$.

We proceed by induction. For $i=0$, we have 
\begin{gather}
    F^{(0)}(t_0) = D_{t_0}(F_{\text{new}}) = e^{-\Gamma t_0}\left(F_{\text{new}} - \frac{1}{4}\right) + \frac{1}{4},
\end{gather}
from which we see that $m_0^{(0)}=F_{\text{new}}-\frac{1}{4}$.
If we assume that (\ref{eq.Fi_appendix}) is true for some $i$, using the recursive relation from (\ref{eqn:composite_fidelity_fn}) we can show that (\ref{eq.Fi_appendix}) is also true for $i+1$:
\begin{align*}
    F^{(i+1)}(t_0,...,t_i) &= D_{t_{i+1}}\left( J(F^{(i)},\rho)\right) \\ &= D_{t_{i+1}}\left(aF^{(i)}+b \right) \\  &= e^{-\Gamma t_{i+1}}\left( aF^{(i)} +b - \frac{1}{4}\right) +\frac{1}{4} \\ &= \frac{1}{4} + \left(\frac{a}{4}+b-\frac{1}{4}\right)e^{-\Gamma t_{i+1}} + \sum_{j=0}^i am_j^{(i)}e^{-\Gamma (t_j+...+t_i + t_{i+1})},
\end{align*}
from which it follows that
\begin{align*}
    m_{j}^{(i+1)} &= am_j^{(i)} \;\; (0 \leq j \leq i)\\
    m_{i+1}^{(i+1)} &= \frac{a}{4}+b-\frac{1}{4}
\end{align*}
Then, by the inductive assumption, $m_{0}^{(i+1)} = a^{i+1}\left(F_{\text{new}} - \frac{1}{4}\right)$, and $m_{j}^{(i+1)} = a^{i+1 - j}\left(\frac{a}{4}+b-\frac{1}{4}\right)$ for $j>0$. 
\end{proof}

\begin{proof}[Proof of Lemma \ref{lem:formula_average_fidelity_linear_jump}]
Here we consider a 1G1B system with $J(F,\rho_\mathrm{new})=aF+b$ and $F^{(0)}(t_0)=D_{t_0}(F_\mathrm{new})$, where $F_\mathrm{new}$ is the fidelity of the state $\rho_\mathrm{new}$. Our goal is to find a closed-form solution for the average fidelity after $i\geq 0$ purification rounds, $c_i$, and for the average consumed fidelity, $\overline F_{\scriptscriptstyle \mathrm{linear}}$.

We defined $c_i$ as the average value of $F^{(i)}$ (see (\ref{eqn:formula_conditional_average_fidelity})).
Using the expression for $F^{(i)}$ from Proposition \ref{prop:formula_fidelity_functions_linear_sym_jump} (also given in (\ref{eq.Fi_appendix})), we can evaluate $c_i$ as follows
\begin{equation*}
\begin{split}
    c_i :=&
    \int_{0}^{\infty} \dd t_i f_{\alpha}(t_i) \: ... \int_0^{\infty} \dd t_0 f_{\alpha}(t_0)
    F^{(i)}(t_0,...,t_{i-1},t_i)\\
    =& \int_{0}^{\infty} \dd t_i f_{\alpha}(t_i) \: ... \int_0^{\infty} \dd t_0 f_{\alpha}(t_0) \left[\frac{1}{4} + \sum_{j=0}^i m_j^{(i)} e^{-\Gamma(t_j+...+t_{i-1}+t_i)} \right] \\
    =& \frac{1}{4} + \sum_{j=0}^i m_j^{(i)} \left( \frac{\alpha}{\alpha+\Gamma}\right)^{i-j+1} \\
    =& \frac{1}{4} +  \left(F_\mathrm{new}-\frac{1}{4}\right)\cdot a^i  \gamma^{i+1} + \gamma \left( \frac{a}{4} + b - \frac{1}{4} \right)\sum_{j=1}^{i} a^{i-j} \gamma^{i-j},
\end{split}
\end{equation*}
where $\alpha = \mu+\lambda q$, $f_\alpha(t_i) = \alpha e^{-\alpha t_i}$ (since the times $t_i$ are exponentially distributed with rate $\alpha$), $\gamma = \alpha/(\alpha+\Gamma)$.
Using the fact that this is a geometric series, we may now obtain a closed-form solution for $c_i$:
\begin{equation}
c_i = \frac{1}{4} +  \left(F_\mathrm{new}-\frac{1}{4}\right)\cdot a^i  \gamma^{i+1} + \gamma \left( \frac{a}{4} + b - \frac{1}{4} \right)  \frac{1-a^i \gamma^i}{1-a\gamma}.
\label{eqn:conditional_average_fidelity_linear_jump_appendix}
\end{equation}

The final formula for the average fidelity may then be computed with the results of Proposition \ref{prop:1G1B_stationary_distribution} and Theorem \ref{thm:average_fidelity_formula} as 
\begin{equation}
\begin{split}
    \overline F_{\scriptscriptstyle \mathrm{linear}} &= \lim_{t\rightarrow \infty}\mathbb{E}\left(F(t)| s(t) \neq \emptyset\right) = \frac{1}{1-\pi_{\emptyset}}\sum_{i=0}^\infty c_i \pi_i\\
    &= \frac{1}{4} + \frac{\gamma}{1-a\gamma} \cdot \left( \frac{a}{4} + b - \frac{1}{4} \right) +  \frac{\gamma}{(1-\pi_{\emptyset})} \cdot\left( F_\mathrm{new}-\frac{1}{4} - \frac{\frac{a}{4} + b - \frac{1}{4}}{1-a\gamma}\right) \sum_{i=0}^{\infty} \pi_i (a\gamma)^i,
\label{eqn:average_fidelity_evaluation_linear_jump}
\end{split}
\end{equation}
where the constant terms are no longer in the sum since 
\begin{equation*}
   \frac{1}{1-\pi_{\emptyset}} \sum_{i=0}^{\infty} \pi_i = 1,
\end{equation*}
by the normalisation of the steady state distribution. Recalling the distribution of $\pi$ from Proposition \ref{prop:1G1B_stationary_distribution}, we may evaluate the sum as a geometric series,
\begin{gather*}
    \sum_{i=0}^{\infty} \pi_i(a\gamma)^i = \frac{\lambda}{\mu + \lambda q}\sum_{i=0}^{\infty} \left( \frac{\lambda q p a \gamma}{\mu+\lambda q}\right)^i\pi_{\emptyset} \\ = \frac{\lambda}{\mu + \lambda q} \cdot \frac{1}{1-\frac{\lambda q p a \gamma}{\mu+ \lambda q}}\pi_{\emptyset} \\
    = \frac{\lambda}{\mu + \lambda q - \lambda q p a \gamma} \pi_{\emptyset}.
\end{gather*}
We may now substitute this into (\ref{eqn:average_fidelity_evaluation_linear_jump}) to obtain a closed-form solution for the average fidelity,
\begin{equation}
\begin{split}
\overline F_{\scriptscriptstyle \mathrm{linear}} &= \frac{1}{4} + \frac{\gamma}{1-a\gamma} \cdot \left( \frac{a}{4} + b - \frac{1}{4} \right) +  \gamma \cdot\left( F_\mathrm{new}-\frac{1}{4} - \frac{\frac{a}{4} + b - \frac{1}{4}}{1-a\gamma}\right) \frac{\lambda}{\mu + \lambda q - \lambda q p a\gamma} \frac{\pi_{\emptyset}}{1-\pi_{\emptyset}} \\ 
&= \frac{1}{4} + \frac{\gamma}{1-a\gamma} \cdot \left( \frac{a}{4} + b - \frac{1}{4} \right) +  \gamma \cdot\left( F_\mathrm{new}-\frac{1}{4} - \frac{\frac{a}{4} + b - \frac{1}{4}}{1-a \gamma}\right) \frac{\mu + \lambda q(1-p)}{\mu + \lambda q - \lambda q p a \gamma}\\
&= \frac{\frac{1}{4}\Gamma + b \lambda q p + F_\mathrm{new} \Big(\mu + \lambda q (1 - p)\Big)}{\Gamma + \mu + \lambda q (1-pa)},
\end{split}
\end{equation}
which completes the closed-form solutions for our two performance metrics in this set-up (in the last step we used Mathematica to simplify the expression).
\end{proof}
\begin{proof}[Proof of Proposition \ref{lem:properties_average_fidelity_linear_jump}]
To show $(a)$, we compute the partial derivative of the average consumed fidelity with respect to $q$:
\begin{equation}\label{eq.partial_q}
        \frac{\partial \overline F_{\scriptscriptstyle \mathrm{linear}}}{\partial q}
        = \lambda \frac{\Gamma \big( 4F_\mathrm{new}(1-p) + (4b+a)p -1 \big) 
        + 4\mu p \big( b-F_\mathrm{new}(1-a) \big)}
        {4\big(\Gamma + \mu + \lambda q (1-ap)\big)^2}.
\end{equation}
Since the sign of the derivative does not depend on $q$, we conclude that $\overline F_{\scriptscriptstyle \mathrm{linear}}$ is monotonic in $q$.

To show $(b)$, we proceed similarly:
\begin{equation}
        \frac{\partial \overline F_{\scriptscriptstyle \mathrm{linear}}}{\partial p}
        = \lambda q \frac{4(b-F_\mathrm{new})(\Gamma+\mu+\lambda q) + a \big( \Gamma+4F_\mathrm{new}(\mu+\lambda q) \big)}
        {4\big(\Gamma + \mu + \lambda q (1-ap)\big)^2}.
\end{equation}
Since the sign of this derivative does not depend on $p$, we conclude that $\overline F_{\scriptscriptstyle \mathrm{linear}}$ is monotonic in $p$.

\end{proof}

\subsection{Noise threshold}\label{app.noise_threshold}
\reb{In the previous Section, we showed that $\overline F_{\scriptscriptstyle \mathrm{linear}}$ is monotonic in $q$ and $p$ (Proposition \ref{lem:properties_average_fidelity_linear_jump}).
Nevertheless, note} that $\overline F_{\scriptscriptstyle \mathrm{linear}}$ can be monotonically increasing or decreasing in $q$ and in $p$ depending on the values of the other parameters.
For a pumping protocol with a good enough jump function, $\overline F_{\scriptscriptstyle \mathrm{linear}}$ becomes increasing in $q$. A sufficient condition is for the jump function to satisfy $b\geq F_\mathrm{new}(1-a)$, as we show next. The partial derivative with respect to $q$ \reb{from (\ref{eq.partial_q})}  can be written as follows:
\begin{equation}
        \frac{\partial \overline F_{\scriptscriptstyle \mathrm{linear}}}{\partial q}
        = \frac{\lambda}{x^2} \big( \Gamma y 
        + 4\mu p z \big),
\end{equation}
where
$x = 2(\Gamma + \mu + \lambda q (1-ap))$,
$y = 4F_\mathrm{new}(1-p) + (4b+a)p -1$,
and $z = b-F_\mathrm{new}(1-a)$.
Using the fact that $b \geq (1-a)/4$, we find that $y\geq0$.
A sufficient condition for the partial derivative to be positive is that $z \geq 0$, i.e., if $b \geq F_\mathrm{new}(1-a)$, then the average consumed fidelity is monotonically increasing in $q$.
Moreover, we can conclude that, if the noise is above certain threshold ($\Gamma > -4\mu p z/y$), the derivative is positive and the pumping is always beneficial, even if it succeeds with a very small probability.

\clearpage 

\section{Bounds for the performance of bilocal Clifford protocols}\label{app.linear_bounds}
In this Appendix, we find bounds to the output fidelity and the probability of success of 2-to-1 purification protocols.
In particular, we show Lemma \ref{lem:bilocal_clifford_bounds}, where upper and lower bounds on the jump function and the success probability of any bilocal Clifford protocol, taking as input a Werner state $\rho_\mathrm{\scriptscriptstyle W}$ and a Bell-diagonal state $\rho_\mathrm{\scriptscriptstyle BD}$. We define the fidelity of a state $\rho$ as $F(\rho,\ket{\phi^+}) = \bra{\phi^+}\rho\ket{\phi^+}$, where $\ket{\phi^+} = (\ket{00}+\ket{11})/\sqrt{2}$ is one of the Bell states. We find the bounds for a system with the following restrictions. 
\begin{itemize}
    \item We consider 2-to-1 purification protocols, i.e., protocols that take two bipartite entangled states as input and output a single bipartite state. This allows us to use these bounds directly for the analysis of the 1G1B system.
    \item We restrict the pumping protocols to bilocal Clifford protocols \cite{Dehaene2003, Jansen2022}, which are a well-known type of purification scheme. We provide more details about this type of protocol in \ref{app.bicliff_subsec}.
    \item We assume that one of the input states is a Werner state (in the 1G1B system, this is the state in the good memory, which suffers from depolarising noise) and the other input state is Bell-diagonal (in the 1G1B system, this is the state generated via heralded entanglement generation and placed in the bad memory).
    Mathematically, the input states can be written, respectively, as
    \begin{equation*}
        \rho_\mathrm{\scriptscriptstyle W} = F \ketbra{\phi^+} + \frac{1-F}{3} \ketbra{\psi^+} + \frac{1-F}{3} \ketbra{\psi^-} + \frac{1-F}{3} \ketbra{\phi^-},
    \end{equation*}
    \begin{equation*}
        \rho_\mathrm{\scriptscriptstyle BD} = F_\mathrm{\scriptscriptstyle BD} \ketbra{\phi^+} + \lambda_1 \ketbra{\psi^+} + \lambda_2 \ketbra{\psi^-} + \lambda_3 \ketbra{\phi^-},
    \end{equation*}
    with $F, F_\mathrm{\scriptscriptstyle BD}, \lambda_1, \lambda_2, \lambda_3 \in [0,1]$ subjected to the normalization constraint $F_\mathrm{\scriptscriptstyle BD} + \lambda_1 + \lambda_2 + \lambda_3 = 1$, and with the Bell states defined as
    \begin{equation*}
        \ket{\phi^+} = \frac{\ket{00}+\ket{11}}{\sqrt{2}}, \;
        \ket{\psi^+} = \frac{\ket{01}+\ket{10}}{\sqrt{2}}, \;
        \ket{\psi^-} = \frac{\ket{01}-\ket{10}}{\sqrt{2}}, \;
        \ket{\phi^-} = \frac{\ket{00}-\ket{11}}{\sqrt{2}}.
    \end{equation*}
    Note that any bipartite state can be brought to Bell-diagonal form while preserving the fidelity by means of twirling (adding extra noise) \cite{Bennett1996a,Horodecki1999}.
    \item We only consider newly generated states with fidelity to some Bell state larger than $1/2$, i.e., we assume $F_\mathrm{\scriptscriptstyle BD} > 1/2$ (note that $F_\mathrm{\scriptscriptstyle BD}>1/2$ is equivalent to $\lambda_i>1/2$ for some $i$, since the states are equivalent upon some Pauli corrections). As shown in \ref{subsec.additional_proofs}, this is a necessary and sufficient condition for the existence of entanglement (otherwise, the state is not useful for purification).
    \item We assume the Werner state has fidelity $F> 1/4$, since the good memory is initially occupied with a state with fidelity larger than $1/2$, and this fidelity can decay at most to $1/4$ due to depolarising noise (see Definition \ref{def:depol_noise}).
\end{itemize}

In \ref{app.bicliff_subsec}, we provide a formal definition of bilocal Clifford protocols.
Then, in \ref{app.linearbounds_subsec}, we prove Lemma \ref{lem:bilocal_clifford_bounds}, where bounds are found for the jump function and success probability of bilocal Clifford protocols in a system with the above restrictions.

\subsection{Bilocal Clifford protocols}\label{app.bicliff_subsec}
Bilocal Clifford protocols \cite{Dehaene2003, Jansen2022} take $n$ bipartite states as input and outputs a single bipartite state.
They consist of the following steps:
\begin{enumerate}
    \item $C^T \otimes C^\dagger$ is applied to the state, where $C$ is some Clifford circuit. A Clifford circuit consists of Hadamard gates, phase gates S, and CNOTs \cite{Gottesman1998a, Gottesman1998}. If the state is held by two separate parties, one of them applies $C^T$ and the other one applies $C^\dagger$.
    \item All of the qubit pairs except one are measured (in a 2-to-1 protocol, one qubit pair is measured and the other one is kept).
    \item Depending on the parity of the measurement outcomes, success or failure is declared. Local unitaries may be performed after a success.
\end{enumerate}

One of the main advantages of bilocal Clifford protocols is that they are relatively simple to execute in practice, since they involve a basic set of gates. Additionally, any stabilizer code $C$ can be mapped to a bilocal Clifford circuit that applies $C^T \otimes C^\dagger$, allowing the analysis of bilocal Clifford circuits from a quantum error-correction perspective \cite{Goodenough2023}.
This type of protocol also includes well-known purification protocols, such as DEJMPS \cite{Deutsch1996}.

\subsection{Linear bounds to the performance of bilocal Clifford protocols}\label{app.linearbounds_subsec}
In this Appendix, we prove Lemma \ref{lem:bilocal_clifford_bounds}, where bounds on the jump function and success probability of every bilocal Clifford protocol are found. Consider pumping two states of the form
\begin{align}
        \rho_{\mathrm{\scriptscriptstyle W}} &= F \ket{\Phi^+}\bra{\Phi^+} + \frac{1-F}{3} \left( \ket{\Psi^+}\bra{\Psi^+} +\ket{\Psi^-}\bra{\Psi^-} +\ket{\Phi^-}\bra{\Psi^-}\right) \label{eqn:werner_state}\\ \rho_{\mathrm{\scriptscriptstyle BD}} &= F_{\mathrm{\scriptscriptstyle BD}} \ket{\Phi^+}\bra{\Phi^+} + \lambda_1 \ket{\Psi^+}\bra{\Psi^+} +\lambda_2 \ket{\Psi^-}\bra{\Psi^-} +\lambda_3 \ket{\Phi^-}\bra{\Psi^-}. \label{eqn:bell_diag_state}
\end{align}
Using the methods from \cite{Jansen2022}, we can find the analytical expressions for the output fidelity and success probability for every bilocal Clifford protocol. The restriction to bilocal Clifford protocols and Bell-diagonal states allows us to do this enumeration of analytical functions efficiently \cite{Jansen2022, Goodenough2023}. There are only seven protocols that provide a unique combination of $J$ and $p$, as shown in Table \ref{tab:7_protocols}. We refer to the $i$-th jump function and success probability as $J_i(F,\rho_{\mathrm{\scriptscriptstyle BD}})$ and $p_i(F,\rho_{\mathrm{\scriptscriptstyle BD}})$, for $i=1,...,7$.

\begin{table}[h!]
    \centering
    \caption{The jump function and success probability for all 2-1 bilocal Clifford protocols, with input states given in (\ref{eqn:werner_state}) and (\ref{eqn:bell_diag_state}) .
    }\label{tab:7_protocols}
    \vspace{-2mm} 
\begin{tabular}{|c|c|c|} 
 \hline
 Protocol & Jump function & Success probability \\ \hline
 1 & $\frac{(4\lambda_1 + 3 \lambda_2 + 3 \lambda_3 - 3)F - \lambda_1}{(4\lambda_2+4\lambda_3-2)F - \lambda_2 - \lambda_3 - 1}$ & $ \frac{2}{3}  ( 1 - 2\lambda_2 -2\lambda_3 )F+ \frac{1}{3} (1+\lambda_2+\lambda_3)$\\ \hline
 2 & $\frac{(3\lambda_1 + 4 \lambda_2 + 3 \lambda_3 -3)F - \lambda_2}{(4\lambda_1+4\lambda_3-2)F - \lambda_1 - \lambda_3 -1}$ & $ \frac{2}{3}  ( 1 - 2\lambda_3 -2\lambda_1 )F+ \frac{1}{3} (1+\lambda_3+\lambda_1)$ \\ \hline
 3 & $\frac{(3\lambda_1 + 3 \lambda_2 + 4 \lambda_3 -3)F - \lambda_3}{(4\lambda_1+4\lambda_2-2)F - \lambda_1 - \lambda_2 -1}$ & $ \frac{2}{3}  ( 1 - 2\lambda_1 -2\lambda_2 )F+ \frac{1}{3} (1+\lambda_1+\lambda_2)$\\ \hline
 4 & $F$ & $F_\mathrm{\scriptscriptstyle BD} + \lambda_1$ \\ \hline
 5 & $F$ & $F_\mathrm{\scriptscriptstyle BD} + \lambda_2$\\ \hline
 6 & $F$ & $F_\mathrm{\scriptscriptstyle BD} + \lambda_3$\\ \hline
 7 & $F_{\text{BD}}$ & $\frac{2}{3} F + \frac{1}{3}$ \\ \hline
\end{tabular}
\end{table}

We see that for these particular input states, $J_4$, $J_5$ and $J_6$ produce no change in the fidelity of $\rho_{\mathrm{\scriptscriptstyle W}}$. They also have a non-unity success probability. It would therefore be advantageous to simply perform no action instead of attempting Protocols $4$-$6$. Similarly, $J_7$ assumes the fidelity of the Bell-diagonal state, which is the same change as performing replacement. Since replacement can be achieved with probability one, it does not make sense to perform Protocol 7. Therefore, the only remaining `non-trivial' protocols are Protocols 1-3. In the following, we therefore find bounds for the jump function of Protocols 1-3. Notice that there is symmetry in the $\lambda_i$: $J_2$ and $p_2$ can be obtained by permuting $(\lambda_1,\lambda_2,\lambda_3)$ in $J_1$ and $p_1$, and similarly for $J_3$ and $p_3$. 

In the following, we show Lemma \ref{lem:bilocal_clifford_bounds}.

\begin{figure}[t]
  \centering
  \includegraphics[width=0.7\linewidth]{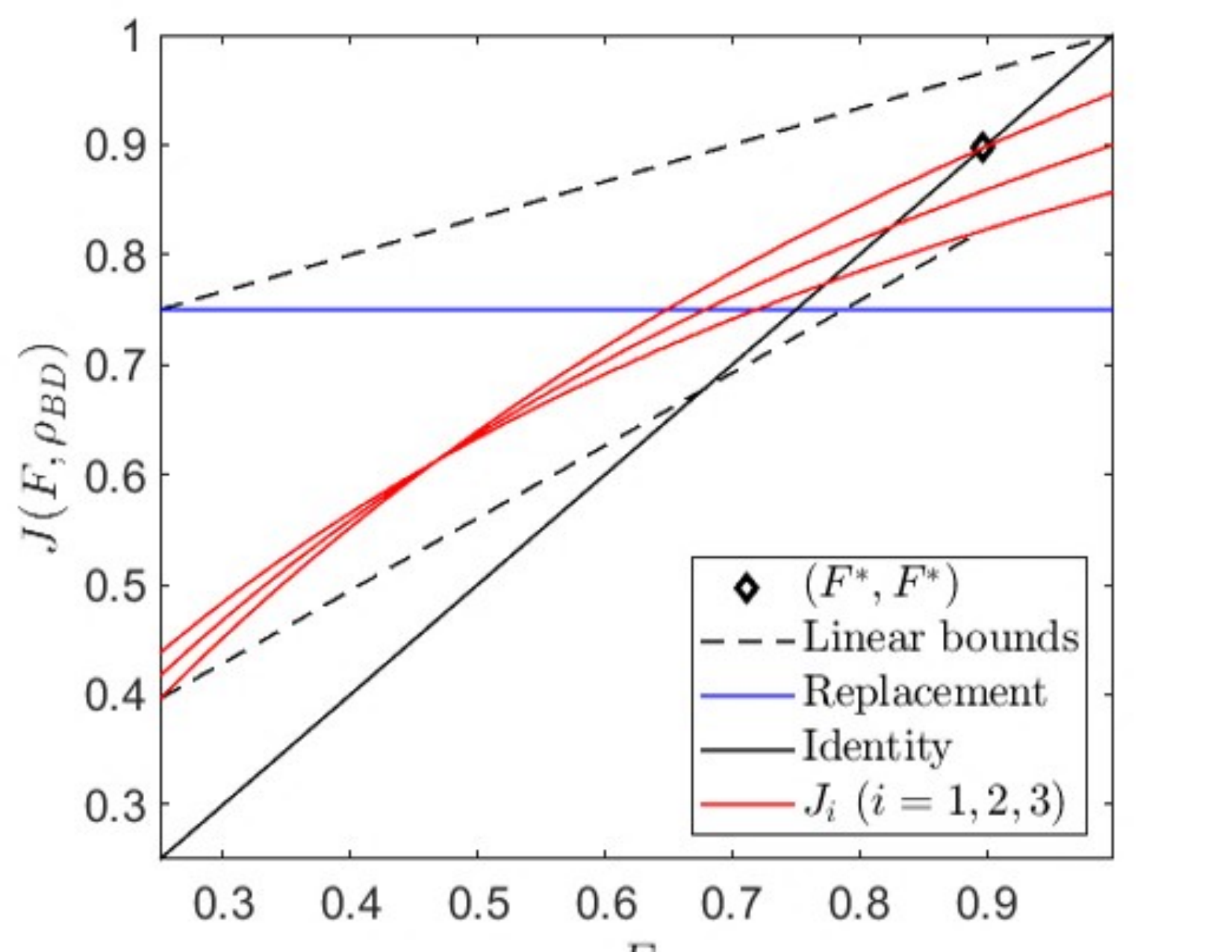}
  \caption{Linear bounds for the jump function of bilocal Clifford protocols (black dashed lines). The jump functions shown are $J_1$-$J_3$ (red lines), $J_4$-$J_6$ (identity operation, black line), and $J_7$ (probabilistic replacement, blue line). $F^*$ is the highest fidelity achievable by pumping a low-fidelity Werner state with the fixed Bell-diagonal state $\rho_{\mathrm{\scriptscriptstyle BD}}$. The lower bound holds in the range $[1/4,F^*]$. The upper bound holds in the range $[1/4,1]$. Here, $F_{\mathrm{\scriptscriptstyle BD}}=0.75$ and $\rho_{\mathrm{\scriptscriptstyle BD}} = (0.75,0.125,0.833,0.0417)$.}
  \label{fig:linear_bounds}
\end{figure}

\begin{proof}[Proof of Lemma \ref{lem:bilocal_clifford_bounds}]
We firstly show the linear lower bound (i.e. the formulae given in (\ref{eq.lower_bounds_ab})). We assume that $\lambda_1 \geq \lambda_2 \geq \lambda_3$. Then, by symmetry in the $\lambda_i$, one may retrieve the bound by setting $ \lambda_{\min} = \lambda_{3}$ and $\lambda_{\max} =\lambda_1$. In order to show this bound, we make use of the following collection of results. It is important to note that when showing all of the following results, $\rho_{\mathrm{\scriptscriptstyle BD}}$ is fixed. 
\begin{enumerate}
 \item \textbf{Proposition \ref{prop:intersection_jump_functions}, Corollory \ref{cor:best_jump_fn}, Proposition \ref{prop.Fstar}} -- the formula for $F^*$ is derived (Equation \ref{eqn:def_F_star_maintext}). This is the maximum achievable fidelity achievable in the 1G1B system, with fixed Bell-diagonal input state $\rho_{\mathrm{\scriptscriptstyle BD}}$. Therefore, at any given time $t$, the fidelity $F(t)$ of the stored link in the 1G1B system (see Definition \ref{def:F(t)}) satisfies $F(t)\leq F^*$.
 \item At $F=F^*$, Protocol 3 provides the best output fidelity,  $$ J_1\left(F^*,\rho_{\mathrm{\scriptscriptstyle BD}}\right) \leq J_2\left(F^*,\rho_{\mathrm{\scriptscriptstyle BD}}\right) \leq J_3\left(F^*,\rho_{\mathrm{\scriptscriptstyle BD}}\right)$$ (\textbf{Proposition \ref{prop:intersection_jump_functions} and Corollary \ref{cor:best_jump_fn}}). 
     \item At $F=1/4$, Protocol 1 provides the best output fidelity, i.e. $$J_3\left(F^*,\rho_{\mathrm{\scriptscriptstyle BD}}\right) \leq J_2\left(F^*,\rho_{\mathrm{\scriptscriptstyle BD}}\right) \leq J_1\left(F^*,\rho_{\mathrm{\scriptscriptstyle BD}}\right), $$ since  $J_i(1/4,\rho_{\mathrm{\scriptscriptstyle BD}}) = (F_{\mathrm{\scriptscriptstyle BD}}+\lambda_i)/2$.
    \item For $i=1,2,3$, $J_i(F,\rho_{\mathrm{\scriptscriptstyle BD}})$ is a concave function of $F$ (\textbf{Proposition \ref{prop:J_i_concave_increasing}}).
\end{enumerate}
In particular, the third result means that any straight line taken between two points on $J_i$ must lie below the curve itself. The linear lower bound is the linear function connecting the points 
\begin{equation}
    \left(F^*,J_{1}(F^*,\rho_{\mathrm{\scriptscriptstyle BD}})\right), \;\;\; \left(\frac{1}{4},J_{3}\left(\frac{1}{4},\rho_{\mathrm{\scriptscriptstyle BD}}\right)\right),
\label{eqn:points_to_interpolate_lower_bound}
\end{equation} which is given by $$J_{\mathrm{LB}}(F,\rho_{\mathrm{\scriptscriptstyle BD}}) = \left(\frac{J_{1}(F^*,\rho_{\mathrm{\scriptscriptstyle BD}})-\frac{F_{\mathrm{\scriptscriptstyle BD}}+\lambda_3}{2}}{F^*-\frac{1}{4}}\right)\left(F-\frac{1}{4}\right)+\frac{F_{\mathrm{\scriptscriptstyle BD}}+\lambda_3}{2},$$ where we have used the fact that $J_i(1/4,\rho_{\mathrm{\scriptscriptstyle BD}}) = (F_{\mathrm{\scriptscriptstyle BD}}+\lambda_i)/2$. Letting $\lambda_{\max}=\lambda_1 $ and $\lambda_{\min}=\lambda_3$, this may be rearranged into the form $$J_{\mathrm{LB}}(F,\rho_{\mathrm{new}}) = a_{\mathrm{l}}F+b_{\mathrm{l}}, $$ with $a_{\mathrm{l}}$ and $b_{\mathrm{l}}$ given in Lemma \ref{lem:bilocal_clifford_bounds} (in (\ref{eq.lower_bounds_ab})). When choosing the points in (\ref{eqn:points_to_interpolate_lower_bound}), we are joining the line corresponding to the lowest of the $J_i$ for both $F = 1/4$ and $F=F^*$.  By the concavity property, this is therefore a lower bound for all of the $J_i$ in the region $[1/4,F^*]$. See Figure \ref{fig:linear_bounds} for an illustration of this lower bound.

We now show the upper bound. We choose this to be the linear function connecting the points $(1/4, F_{\mathrm{\scriptscriptstyle BD}}) $ and $ (1,1)$, which is given by $$ J_{\mathrm{\scriptscriptstyle UB}}(F,\rho_{\mathrm{new}}) = \left( \frac{1-F_{\mathrm{\scriptscriptstyle BD}}}{1-\frac{1}{4}}\right)\left(F-\frac{1}{4}\right)+F_{\mathrm{\scriptscriptstyle BD}},$$
and may be rearranged into the form $$J_{\mathrm{UB}}(F,\rho_{\mathrm{new}}) = a_{\mathrm{u}}F+b_{\mathrm{u}}, $$
with $a_{\mathrm{u}}$ and $b_{\mathrm{u}}$ given in Lemma \ref{lem:bilocal_clifford_bounds} (in (\ref{eq.upper_bounds_ab})). We show that this is an upper bound with the following steps. Again, for ease of notation, we exploit the symmetry in $\lambda_i$ and assume that $\lambda_1 \geq \lambda_2 \geq \lambda_3$.
\begin{enumerate}
\item In the domain $F>0$, the jump functions $J_1$, $J_2$ and $J_3$ intersect at the same point $F_{\mathrm{int}}$. Moreover, for $i=1,2,3$, $J_i(F_{\mathrm{int}},\rho_{\mathrm{\scriptscriptstyle BD}}) = \sqrt{\frac{F_{\mathrm{\scriptscriptstyle BD}}}{2}} <F_{\mathrm{\scriptscriptstyle BD}}$. (\textbf{Proposition \ref{prop:intersection_jump_functions}}). 
\item In the domain $F\in [F_{\mathrm{int}},1]$, the jump function outputting the highest-fidelity outcome out of protocols 1-3 is $J_3$ (\textbf{Corollary \ref{cor:best_jump_fn}}). 
\item For $i=1,2,3$, $J_i$ is an increasing and concave function of $F$ (\textbf{Proposition \ref{prop:J_i_concave_increasing}}).
\item Consider the tangent to $J_3$ at $F=1$. This lies below $J_{\mathrm{UB}}$ in the range $F\in [1/4,1]$ (\textbf{Proposition \ref{prop:tangent_below}}).
\end{enumerate}
By result (3) from the above list (concavity), we see that the tangent to $J_3$ at $F=1$ upper bounds $J_3$ for all $F$. By result (2) from the above list, this also upper bounds $J_1$ and $J_2$ in the range $F\in [F_{\mathrm{int}},1]$. Therefore, by result (4), $J_{\mathrm{\scriptscriptstyle UB}}$ upper bounds $J_1$, $J_2$ and $J_3$ in the range $F\in [F_{\mathrm{int}},1]$. Moreover, for $F<F_{\mathrm{int}}$, by results (1) and (3), $J_i(F,\rho_{\mathrm{\scriptscriptstyle BD}})\leq \sqrt{\frac{F_{\mathrm{\scriptscriptstyle BD}}}{2}}<F_{\mathrm{\scriptscriptstyle BD}} \leq J_{\mathrm{\scriptscriptstyle UB}}(F,\rho_{\mathrm{\scriptscriptstyle BD}})$, by the definition of $J_{\mathrm{\scriptscriptstyle UB}}$ ($J_{\mathrm{\scriptscriptstyle UB}}$ runs through the point $(1/4,F_{\mathrm{\scriptscriptstyle BD}})$ and is increasing). This suffices to show that the upper bound holds.

Finally, we show the bounds for $p_i$. Recalling that $F_{\mathrm{\scriptscriptstyle BD}}+\lambda_1+\lambda_2+\lambda_3 = 1$, we have 
\begin{align*}
    \pdv F p_1(F,\rho_{\mathrm{\scriptscriptstyle BD}}) = \frac{2}{3}(1-\lambda_2-\lambda_3) &= \frac{2}{3}(2F_{\mathrm{\scriptscriptstyle BD}}+2\lambda_1 - 1) \\ &\geq \frac{2}{3}(2F_{\mathrm{\scriptscriptstyle BD}} - 1) >0.
\end{align*}
Therefore, $p_1(F,\rho_{\mathrm{\scriptscriptstyle BD}})$ is an increasing function of $F$. By symmetry, $p_2$ and $p_3$ are also increasing functions of $F$. Since the fidelity $F(t)$ of the 1G1B system always lies in the region $F(t)\in [1/4,F^*]$, it follows that at any point in time, the success probability $p$ of purification may be bounded with
\begin{equation*}
    p_i\left(\frac{1}{4},\rho_{\mathrm{\scriptscriptstyle BD}}\right) \leq p \leq p_i\left(F^*,\rho_{\mathrm{\scriptscriptstyle BD}}\right).
\end{equation*}
\end{proof}

Below are the collection of results that were used to show the bounds on the jump functions.
\begin{proposition}
\label{prop:intersection_jump_functions}
    In the domain $F>0$, jump functions 1-3 intersect exactly once at the same point $F_{\text{int}}$, such that $J_i(F_{\text{int}},\rho_{\mathrm{new}}) = \sqrt{\frac{F_{\mathrm{\scriptscriptstyle BD}}}{2}} < F_{\mathrm{\scriptscriptstyle BD}}$. 
\end{proposition}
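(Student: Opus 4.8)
The plan is to exploit the normalization $F_{\mathrm{\scriptscriptstyle BD}} + \lambda_1 + \lambda_2 + \lambda_3 = 1$ to collapse all three jump functions into a single symmetric template. Substituting $\lambda_j + \lambda_k = 1 - F_{\mathrm{\scriptscriptstyle BD}} - \lambda_i$ (for $\{i,j,k\}=\{1,2,3\}$) into the entries of Table \ref{tab:7_protocols} rewrites each of $J_1,J_2,J_3$ as
\[ J_i(F,\rho_{\mathrm{\scriptscriptstyle BD}}) = \frac{N(\lambda_i)}{D(\lambda_i)}, \qquad N(\lambda) = (F-1)\lambda - 3F_{\mathrm{\scriptscriptstyle BD}}F, \quad D(\lambda) = (1-4F)\lambda + E, \]
with $E = 2(F-1) - F_{\mathrm{\scriptscriptstyle BD}}(4F-1)$, so that the three protocols differ only through the slot occupied by $\lambda_i$. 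This common form is the workhorse for everything that follows.

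Next I would form the pairwise difference over a common denominator. Since $N$ and $D$ are both affine in $\lambda$, writing their coefficient pairs as $(n_1,n_0)=(F-1,-3F_{\mathrm{\scriptscriptstyle BD}}F)$ and $(d_1,d_0)=(1-4F,E)$, the numerator of $J_i-J_j$ telescopes to $(\lambda_i-\lambda_j)\,(n_1 d_0 - n_0 d_1)$. Setting $g(F)\coloneqq n_1 d_0 - n_0 d_1$, a short simplification (factoring $(4F-1)[(F-1)+3F]=(4F-1)^2$ out of the $F_{\mathrm{\scriptscriptstyle BD}}$ terms) yields the clean identity $g(F) = 2(F-1)^2 - F_{\mathrm{\scriptscriptstyle BD}}(4F-1)^2$. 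Consequently, away from the degenerate case in which all $\lambda_i$ coincide (where $\rho_{\mathrm{\scriptscriptstyle BD}}$ is itself Werner and the three protocols are identical), the jump functions are pairwise equal precisely at the zeros of $g$, so all three meet at a common point exactly where $g$ vanishes.

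I would then analyze the quadratic $g$. The equation $g=0$ is equivalent to $\sqrt{2}\,\lvert F-1\rvert = \sqrt{F_{\mathrm{\scriptscriptstyle BD}}}\,\lvert 4F-1\rvert$; the branch $F>1/4$ gives the root $F_{\mathrm{int}} = (\sqrt{2}+\sqrt{F_{\mathrm{\scriptscriptstyle BD}}})/(\sqrt{2}+4\sqrt{F_{\mathrm{\scriptscriptstyle BD}}})$, which one checks directly lies in $(1/4,1)$, whereas the other root equals $(\sqrt{2}-\sqrt{F_{\mathrm{\scriptscriptstyle BD}}})/(\sqrt{2}-4\sqrt{F_{\mathrm{\scriptscriptstyle BD}}})$ and is strictly negative whenever $F_{\mathrm{\scriptscriptstyle BD}}>1/8$ (positive numerator, negative denominator). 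Since $F_{\mathrm{\scriptscriptstyle BD}}>1/2$ by hypothesis, exactly one root lies in $F>0$, which gives the uniqueness claim. To evaluate the common value, I would use that $g(F_{\mathrm{int}})=0$ forces $(n_1,n_0)$ and $(d_1,d_0)$ to be proportional, so $N(\lambda)=c\,D(\lambda)$ identically in $\lambda$ with $c = n_1/d_1 = (F-1)/(1-4F)$, whence $J_i(F_{\mathrm{int}})=c$ for every $i$. Combining $c=(F-1)/(1-4F)$ with $g=0$, i.e.\ $(F-1)^2/(4F-1)^2 = F_{\mathrm{\scriptscriptstyle BD}}/2$, and using $F_{\mathrm{int}}\in(1/4,1)$ to fix the sign, gives $c=\sqrt{F_{\mathrm{\scriptscriptstyle BD}}/2}$. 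Finally $\sqrt{F_{\mathrm{\scriptscriptstyle BD}}/2}<F_{\mathrm{\scriptscriptstyle BD}}$ is equivalent to $F_{\mathrm{\scriptscriptstyle BD}}>1/2$, which holds by assumption.

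I expect the main effort to be the algebraic reduction producing the factorization $g(F)=2(F-1)^2 - F_{\mathrm{\scriptscriptstyle BD}}(4F-1)^2$, since it requires carefully tracking the substitutions from Table \ref{tab:7_protocols}. Two bookkeeping points also need care: confirming that the second root is negative in the whole admissible regime (so that only $F_{\mathrm{int}}$ survives in $F>0$), and checking the denominators $D(\lambda_i)$ at $F_{\mathrm{int}}$ — if some $D(\lambda_i)$ vanished it would, by the proportionality, force $N(\lambda_i)=0$ as well, giving a removable $0/0$ singularity whose limiting value is still $c=\sqrt{F_{\mathrm{\scriptscriptstyle BD}}/2}$, so the conclusion is unaffected but this should be noted explicitly.
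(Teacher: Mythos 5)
Your proposal is correct and follows essentially the same route as the paper's proof: reduce each $J_i$ to the form $\bigl((F-1)\lambda_i - 3F_{\mathrm{\scriptscriptstyle BD}}F\bigr)/\bigl((1-4F)\lambda_i + E\bigr)$, observe that the pairwise differences factor as $(\lambda_i-\lambda_j)$ times a quadratic in $F$ depending only on $F_{\mathrm{\scriptscriptstyle BD}}$ (your $g(F)=2(F-1)^2-F_{\mathrm{\scriptscriptstyle BD}}(4F-1)^2$ is exactly the paper's $(2-16F_{\mathrm{\scriptscriptstyle BD}})F^2+(8F_{\mathrm{\scriptscriptstyle BD}}-4)F+2-F_{\mathrm{\scriptscriptstyle BD}}$, and your $F_{\mathrm{int}}$ agrees with theirs after rationalising), and discard the negative root. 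The one place you genuinely improve on the paper is the common value: where the paper verifies $J_i(F_{\mathrm{int}})=\sqrt{F_{\mathrm{\scriptscriptstyle BD}}/2}$ ``e.g.\ using software such as Mathematica,'' your proportionality argument $g=0\Rightarrow N=cD$ with $c=(1-F)/(4F-1)$ and $c^2=F_{\mathrm{\scriptscriptstyle BD}}/2$ gives this analytically; your closing caveat about vanishing denominators is moot in any case, since $D(\lambda_i)=-3p_i$ and the success probability is bounded below by $1/2$.
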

\begin{proof}
    We firstly compute the intersection point of jump functions 1 and 2. This occurs at the $F$ value which satisfies
    \begin{align*}
        \frac{(4\lambda_1 + 3 \lambda_2 + 3 \lambda_3 - 3)F - \lambda_1}{(4\lambda_2+4\lambda_3-2)F - \lambda_2 - \lambda_3 - 1} = \frac{(3\lambda_1 + 4 \lambda_2 + 3 \lambda_3 -3)F - \lambda_2}{(4\lambda_1+4\lambda_3-2)F - \lambda_1 - \lambda_3 -1},
    \end{align*}
    or alternatively, recalling that $F_{\mathrm{\scriptscriptstyle BD}}+\lambda_1+\lambda_2+\lambda_3=1$,
    \begin{align*}
        \frac{(\lambda_1-3F_{\mathrm{\scriptscriptstyle BD}})F - \lambda_1}{(2-4F_{\mathrm{\scriptscriptstyle BD}}-4\lambda_1)F - 2 + F_{\mathrm{\scriptscriptstyle BD}}+\lambda_1} = (1\leftrightarrow 2),
    \end{align*}
    where to obtain the RHS we exchange labels 1 and 2 of the LHS. This is equivalent to
    \begin{align*}
        \left((\lambda_1-3F_{\mathrm{\scriptscriptstyle BD}})F - \lambda_1 \right) \left( (2-4F_{\mathrm{\scriptscriptstyle BD}}-4\lambda_2)F - 2 + F_{\mathrm{\scriptscriptstyle BD}}+\lambda_2 \right) - (1\leftrightarrow 2) = 0,
    \end{align*}
    which simplifies to 
    \begin{equation}
        (\lambda_1-\lambda_2)\left((2-16 F_{\mathrm{\scriptscriptstyle BD}})F^2+(8F_{\mathrm{\scriptscriptstyle BD}}-4)F + 2-F_{\mathrm{\scriptscriptstyle BD}} \right) = 0.
    \end{equation}
    Then, if $\lambda_1\neq \lambda_2$, the points of intersection depend only on $F_{\mathrm{\scriptscriptstyle BD}}$ and therefore are symmetric in $\lambda_1$, $\lambda_2$ and $\lambda_3$. The points of intersection are given by 
    \begin{equation}
        F = \frac{4F_{\mathrm{\scriptscriptstyle BD}}-2 \pm 3 \sqrt{2 F_{\mathrm{\scriptscriptstyle BD}}}}{2(8F_{\mathrm{\scriptscriptstyle BD}}-1)}
    \end{equation}
    and recalling that $F_{\mathrm{\scriptscriptstyle BD}}\in (1/2,1]$, the solution lying in the domain of interest ($F>0$) is 
    \begin{equation*}
        F_{\text{int}} = \frac{4F_{\mathrm{\scriptscriptstyle BD}}-2 + 3 \sqrt{2 F_{\mathrm{\scriptscriptstyle BD}}}}{2(8F_{\mathrm{\scriptscriptstyle BD}}-1)}.
    \end{equation*}
    Then, since $F_{\text{int}}$ is symmetric in $\lambda_1$, $\lambda_2$ and $\lambda_3$, all jump functions intersect at the point $F_{\text{int}}$. One may also show that $$J_i(F_{\text{int}},\rho_{\mathrm{new}}) = \sqrt{\frac{F_{\mathrm{\scriptscriptstyle BD}}}{2}}, $$ e.g. using software such as Mathematica. Since $F_{\mathrm{\scriptscriptstyle BD}}<1/2$, we have $$ \sqrt{\frac{1}{2}} < \sqrt{F_{\mathrm{\scriptscriptstyle BD}}} \Leftrightarrow \sqrt{\frac{F_{\mathrm{\scriptscriptstyle BD}}}{2}} < F_{\mathrm{\scriptscriptstyle BD}}. $$
\end{proof}
We now continue with the following corollary.
\begin{corollary}
    Suppose that $\lambda_1\geq \lambda_2\geq\lambda_3$. Then, for $F\geq F_{\text{int}}$,
    \begin{equation}  J_3(F,\rho_{\mathrm{\scriptscriptstyle BD}}) \geq J_2(F,\rho_{\mathrm{\scriptscriptstyle BD}}) \geq J_1(F,\rho_{\mathrm{\scriptscriptstyle BD}}),\label{eqn:jump_inequality}
    \end{equation}
    \label{cor:best_jump_fn}
\end{corollary}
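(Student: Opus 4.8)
The plan is to exploit Proposition \ref{prop:intersection_jump_functions} directly: since $J_1,J_2,J_3$ cross pairwise exactly once on $\{F>0\}$, at the common point $F_{\text{int}}$, each difference $J_i(F,\rho_{\mathrm{\scriptscriptstyle BD}})-J_j(F,\rho_{\mathrm{\scriptscriptstyle BD}})$ keeps a constant sign on the interval $(F_{\text{int}},\infty)$. It then suffices to determine that sign at one convenient value of $F$ larger than $F_{\text{int}}$, and the full ordering on $[F_{\text{int}},\infty)$ follows by continuity (with equality at $F_{\text{int}}$).

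First I would check that none of $J_1,J_2,J_3$ has a pole for $F\geq 0$, so that each difference $J_i-J_j$ is genuinely continuous and the single-crossing statement of Proposition \ref{prop:intersection_jump_functions} really does translate into constant sign on each side of $F_{\text{int}}$. This follows from the standing assumption $F_{\mathrm{\scriptscriptstyle BD}}>1/2$: together with the normalisation $F_{\mathrm{\scriptscriptstyle BD}}+\lambda_1+\lambda_2+\lambda_3=1$ it forces $\lambda_i+\lambda_j<1/2$ for every pair, so each denominator, e.g. $(4\lambda_2+4\lambda_3-2)F-(\lambda_2+\lambda_3+1)$ for $J_1$, has strictly negative constant term and strictly negative slope and hence stays strictly negative on $[0,\infty)$.

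Next I would pin down the sign by evaluating at $F=1$, noting that $F_{\text{int}}<1/2<1$ from the closed form obtained in the proof of Proposition \ref{prop:intersection_jump_functions}. A short computation using the normalisation constraint gives $J_i(1,\rho_{\mathrm{\scriptscriptstyle BD}})=F_{\mathrm{\scriptscriptstyle BD}}/(F_{\mathrm{\scriptscriptstyle BD}}+\lambda_i)$, which is strictly decreasing in $\lambda_i$; since $\lambda_1\geq\lambda_2\geq\lambda_3$ this yields $J_1(1,\rho_{\mathrm{\scriptscriptstyle BD}})\leq J_2(1,\rho_{\mathrm{\scriptscriptstyle BD}})\leq J_3(1,\rho_{\mathrm{\scriptscriptstyle BD}})$. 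Combining this with the constant-sign conclusion on $(F_{\text{int}},\infty)$ gives $J_3\geq J_2\geq J_1$ throughout $F\geq F_{\text{int}}$, which is exactly \eqref{eqn:jump_inequality}.

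The argument is mostly bookkeeping once Proposition \ref{prop:intersection_jump_functions} is available; the one point that genuinely needs care is the absence of poles, because without it a vanishing denominator could flip the sign of $J_i-J_j$ somewhere in $(F_{\text{int}},\infty)$ and invalidate the constant-sign step. As a consistency check one can also evaluate at $F=1/4<F_{\text{int}}$, where $J_i(1/4,\rho_{\mathrm{\scriptscriptstyle BD}})=(F_{\mathrm{\scriptscriptstyle BD}}+\lambda_i)/2$ gives the \emph{reversed} ordering $J_1\geq J_2\geq J_3$, confirming that the ordering indeed changes across $F_{\text{int}}$ and matching the claim used at $F=1/4$ in the proof of Lemma \ref{lem:bilocal_clifford_bounds}.
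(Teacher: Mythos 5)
Your proposal is correct and follows essentially the same route as the paper: both invoke Proposition \ref{prop:intersection_jump_functions} to conclude that the ordering of $J_1,J_2,J_3$ cannot change on $(F_{\text{int}},\infty)$, and then fix that ordering by a single evaluation — you at $F=1$ via $J_i(1,\rho_{\mathrm{\scriptscriptstyle BD}})=F_{\mathrm{\scriptscriptstyle BD}}/(F_{\mathrm{\scriptscriptstyle BD}}+\lambda_i)$, the paper via the limit $F\to\infty$, both of which are decreasing in $\lambda_i$. Your explicit check that the denominators have no zeros on $[0,\infty)$ (using $F_{\mathrm{\scriptscriptstyle BD}}>1/2$) is a small point of care the paper leaves implicit, but the argument is the same.
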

\begin{proof}
    From Proposition \ref{prop:intersection_jump_functions}, $J_1$, $J_2$ and $J_3$ will not intersect again for $F>F_{\text{int}}$. Therefore, their ordering remains the same for all $F>F_{\text{int}}$. The jump function outputting the largest fidelity in this range will therefore also have the largest limit as $F\rightarrow \infty$. We see that 
    \begin{equation*}
    \lim_{F\rightarrow \infty} J_i(F,\rho_{\mathrm{\scriptscriptstyle BD}}) = \frac{ 3F_{\mathrm{\scriptscriptstyle BD}}-\lambda_i}{4F_{\mathrm{\scriptscriptstyle BD}}+\lambda_i-2},
    \end{equation*}
    which is a decreasing function of $\lambda_i$. Therefore, $\lambda_3 = \min \{ \lambda_1, \lambda_2, \lambda_3\}$ gives the largest limit, and $J_1$ satisfies (\ref{eqn:jump_inequality}). 
\end{proof}
From Proposition \ref{prop:intersection_jump_functions} and Corollary \ref{cor:best_jump_fn}, we know which of $J_1$, $J_2$ and $J_3$ provide the best fidelity for $F\in [1/2,1]$. With the following proposition, we see that for some lower fidelities, it is better to replace with the bad link rather than choose to pump.
\begin{proposition}\label{prop.Fstar}
The largest fidelity obtainable by pumping a low-fidelity Werner state with $\rho_{\mathrm{\scriptscriptstyle BD}}$ and bilocal Clifford protocols is
\begin{equation}
    F^* = \frac{2F_{\mathrm{\scriptscriptstyle BD}}-1+\sqrt{(2F_{\mathrm{\scriptscriptstyle BD}}-1)^2 + 2 \lambda_\mathrm{min}(2F_{\mathrm{\scriptscriptstyle BD}} -1 + 2 \lambda_\mathrm{min})}}{2(2F_{\mathrm{\scriptscriptstyle BD}}-1 + 2 \lambda_\mathrm{min})},
    \label{eqn:def_F_star}
\end{equation}
where $\lambda_\mathrm{min} = \min\{\lambda_1,\lambda_2,\lambda_3\}$. 
\end{proposition}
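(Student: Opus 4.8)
The plan is to recognise $F^*$ as the stable fixed point of the most effective jump function. Within the 1G1B model the stored state is re-twirled to Werner form after each round, so repeatedly pumping it with the fixed Bell-diagonal state $\rho_{\mathrm{\scriptscriptstyle BD}}$ iterates the scalar map $F \mapsto J_i(F,\rho_{\mathrm{\scriptscriptstyle BD}})$, and the supremum of the fidelities reachable this way is governed by the fixed points of the best available jump function; decoherence can only lower the fidelity, so the no-decoherence fixed point is an upper bound on $F(t)$. By Corollary \ref{cor:best_jump_fn} (with the labelling $\lambda_{\min}=\lambda_3$), for $F\geq F_{\mathrm{int}}$ the protocol producing the highest output fidelity is Protocol 3, so I would compute the fixed point of $J_3$.

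First I would carry out that computation. Substituting $\lambda_1+\lambda_2 = 1-F_{\mathrm{\scriptscriptstyle BD}}-\lambda_3$ into the expression for $J_3$ in Table \ref{tab:7_protocols} simplifies it to
\[
J_3(F,\rho_{\mathrm{\scriptscriptstyle BD}}) = \frac{(\lambda_3-3F_{\mathrm{\scriptscriptstyle BD}})F-\lambda_3}{(2-4F_{\mathrm{\scriptscriptstyle BD}}-4\lambda_3)F + F_{\mathrm{\scriptscriptstyle BD}}+\lambda_3-2}.
\]
Setting $J_3(F)=F$ and clearing the denominator collapses the cross terms to the quadratic
\[
2(2F_{\mathrm{\scriptscriptstyle BD}}-1+2\lambda_3)\,F^2 - 2(2F_{\mathrm{\scriptscriptstyle BD}}-1)\,F - \lambda_3 = 0.
\]
Since $F_{\mathrm{\scriptscriptstyle BD}}>1/2$, the product of the roots $-\lambda_3/[2(2F_{\mathrm{\scriptscriptstyle BD}}-1+2\lambda_3)]$ is negative, so exactly one root is positive; taking that root and writing $\lambda_3=\lambda_{\min}$ reproduces precisely (\ref{eqn:def_F_star}).

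Next I would argue that $F^*$ is genuinely the maximum and is never exceeded. Writing $g(F)=J_3(F)-F$, this is concave (Proposition \ref{prop:J_i_concave_increasing}) with its only positive zero at $F^*$, so $g(F)>0$ on $[1/4,F^*)$ and $g(F)<0$ for $F>F^*$; combined with the monotonicity of $J_3$, which gives $J_3(F)<J_3(F^*)=F^*$ for $F<F^*$, this shows that iterating $J_3$ from any fidelity below $F^*$ increases monotonically and converges up to $F^*$ without overshooting. To rule out beating $F^*$ with a different protocol I would invoke Corollary \ref{cor:best_jump_fn} once more: for $F\geq F_{\mathrm{int}}$ all of $J_1,J_2,J_3$ lie below $J_3$, so once the fidelity lies in $[F_{\mathrm{int}},F^*]$ no protocol can push it past $F^*$, and for $F>F^*$ every nontrivial protocol strictly decreases it.

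The main obstacle I anticipate is the bookkeeping needed to show that $F^*$ dominates \emph{globally} rather than merely being a fixed point of $J_3$: I must verify that $F^*\geq F_{\mathrm{int}}$, so that $J_3$ really is optimal in a neighbourhood of the fixed point, and that the low-fidelity regime $F<F_{\mathrm{int}}$ — where Protocol 1 may dominate and where replacement outputs $F_{\mathrm{\scriptscriptstyle BD}}$ — cannot vault the fidelity above $F^*$. Both follow from Proposition \ref{prop:intersection_jump_functions}, since at $F_{\mathrm{int}}$ all three jump functions pass through the common value $\sqrt{F_{\mathrm{\scriptscriptstyle BD}}/2}<F_{\mathrm{\scriptscriptstyle BD}}\leq F^*$ and are increasing (the inequality $F_{\mathrm{\scriptscriptstyle BD}}\leq F^*$ being checked by evaluating the quadratic above at $F=F_{\mathrm{\scriptscriptstyle BD}}$ and noting it is negative). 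Making this patching of the low- and high-fidelity regimes watertight is the only delicate part; the algebra of the fixed point itself is routine.
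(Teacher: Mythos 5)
Your proof takes essentially the same route as the paper's: solve the fixed-point equation $J_i(F^*)=F^*$ for the protocol with $\lambda_i=\lambda_{\min}$, which yields the quadratic $(2-4F_{\mathrm{\scriptscriptstyle BD}}-4\lambda_{\min})F^2+(4F_{\mathrm{\scriptscriptstyle BD}}-2)F+\lambda_{\min}=0$, and select its unique positive root. The paper's own proof is in fact terser — it stops after computing the roots and invoking Corollary \ref{cor:best_jump_fn} — so the additional steps you supply (monotone convergence of the iteration to the fixed point without overshoot, the check that $F_{\mathrm{\scriptscriptstyle BD}}\le F^*$, and the patching of the regime $F<F_{\mathrm{int}}$ where a different protocol dominates) are correct and simply make explicit the justification the paper leaves implicit.
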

\begin{proof}
    Consider applying pumping protocol $i\in\{1,2,3\}$. This stops improving the Werner state fidelity at the value of $F$ such that 
    \begin{align*}
        & F^* = J_i(F^*,\rho_{\mathrm{\scriptscriptstyle BD}}) \\ &\Leftrightarrow F^* =  \frac{(\lambda_i-3F_{\mathrm{\scriptscriptstyle BD}})F^* - \lambda_i}{(2-4F_{\mathrm{\scriptscriptstyle BD}}-4\lambda_i)F^* - 2 + F_{\mathrm{\scriptscriptstyle BD}}+\lambda_i} \\ &\Leftrightarrow 0 = (2-4F_{\mathrm{\scriptscriptstyle BD}}-4\lambda_i)F^2 +(4F_{\mathrm{\scriptscriptstyle BD}}-2)F+\lambda_i,  \\ 
    \end{align*}
    which has solutions 
    \begin{equation*}
        F = \frac{2F_{\mathrm{\scriptscriptstyle BD}}-1 \pm \sqrt{(2F_{\mathrm{\scriptscriptstyle BD}}-1)^2+2\lambda_i(2F_{\mathrm{\scriptscriptstyle BD}}-1+2\lambda_i)}}{2(2F_{\mathrm{\scriptscriptstyle BD}}-1 +2\lambda_i)},
    \end{equation*}
    one of which is positive and one negative. Recalling that for $F>\frac{1}{2}$, the jump function taking the largest value is $J_i$ with $\lambda_i = \lambda_{\mathrm{min}}$, means that the maximum fidelity achievable is (\ref{eqn:def_F_star}).
\end{proof}
\begin{proposition}
\label{prop:J_i_concave_increasing}
    For any $\rho_{\mathrm{\scriptscriptstyle BD}}$ with $F_{\mathrm{\scriptscriptstyle BD}}>1/2$, , for $i=1,2,3$ $J_i(F,\rho_{\mathrm{\scriptscriptstyle BD}})$ is a strictly concave and increasing function of $F$. 
\end{proposition}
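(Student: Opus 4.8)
The plan is to treat each $J_i$ as a Möbius (linear fractional) transformation of $F$ and exploit the fact that the first and second derivatives of such a function have a sign that is constant in $F$, up to the sign of the denominator. First I would rewrite $J_i$ in the reduced form already obtained in the proof of Proposition~\ref{prop:intersection_jump_functions}: using the normalization $F_{\mathrm{\scriptscriptstyle BD}} + \lambda_1 + \lambda_2 + \lambda_3 = 1$,
$$J_i(F,\rho_{\mathrm{\scriptscriptstyle BD}}) = \frac{aF+b}{cF+d}, \quad a = \lambda_i - 3F_{\mathrm{\scriptscriptstyle BD}}, \;\; b = -\lambda_i, \;\; c = 2 - 4F_{\mathrm{\scriptscriptstyle BD}} - 4\lambda_i, \;\; d = F_{\mathrm{\scriptscriptstyle BD}} + \lambda_i - 2,$$
for $i=1,2,3$ (the cases $i=2,3$ follow from the symmetry in the $\lambda_i$, so it suffices to argue for $i=1$).

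Next I would invoke the standard identities $J_i'(F) = (ad - bc)/(cF+d)^2$ and $J_i''(F) = -2c(ad-bc)/(cF+d)^3$, so that both monotonicity and concavity reduce to sign conditions on the three quantities $ad - bc$, $c$, and $cF+d$. A direct expansion, which I would carry out once, gives
$$ad - bc = 3\big[\,2F_{\mathrm{\scriptscriptstyle BD}} - (F_{\mathrm{\scriptscriptstyle BD}} + \lambda_i)^2\,\big].$$

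I would then establish the three sign facts. For $ad - bc > 0$, which yields $J_i' > 0$ and hence strict monotonicity, the key inequality is $F_{\mathrm{\scriptscriptstyle BD}} + \lambda_i \le 1 < \sqrt{2F_{\mathrm{\scriptscriptstyle BD}}}$: the first bound is the normalization constraint and the second follows from the hypothesis $F_{\mathrm{\scriptscriptstyle BD}} > 1/2$. Squaring gives $(F_{\mathrm{\scriptscriptstyle BD}} + \lambda_i)^2 < 2F_{\mathrm{\scriptscriptstyle BD}}$, so $ad - bc > 0$. For concavity, I would note that $F_{\mathrm{\scriptscriptstyle BD}} > 1/2$ forces $c = 2 - 4F_{\mathrm{\scriptscriptstyle BD}} - 4\lambda_i < 0$, whence $-2c > 0$; combined with $ad - bc > 0$, the sign of $J_i''$ is opposite to that of $(cF+d)^3$. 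Evaluating $cF+d$ at $F = 1/4$ yields exactly $-3/2$, and since $c < 0$ the affine map $F \mapsto cF+d$ is decreasing, so $cF + d \le -3/2 < 0$ for every $F \ge 1/4$, which is the entire admissible range for a decohered Werner state. Thus $(cF+d)^3 < 0$ and $J_i'' < 0$, giving strict concavity.

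The main obstacle is the concavity step, specifically confirming that the pole of the Möbius function lies outside the admissible fidelity interval: I must verify that $cF + d$ never vanishes on $[1/4,1]$, which is precisely what the estimate $cF + d \le -3/2$ for $F \ge 1/4$ secures. Once the denominator is pinned to be strictly negative throughout, both conclusions follow uniformly in $F$, and no further case analysis beyond the symmetry reduction to $i=1$ is required.
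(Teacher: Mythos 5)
Your proposal is correct and follows essentially the same route as the paper's proof: both write $J_i$ as a M\"obius transformation, use the standard formulas $J_i'=(ad-bc)/(cF+d)^2$ and $J_i''=-2c(ad-bc)/(cF+d)^3$, and reduce everything to sign checks on $ad-bc$, $c$, and the denominator. The only (immaterial) differences are that you negate numerator and denominator relative to the paper's convention, which flips the signs of $c$ and $cF+d$ but leaves $ad-bc$ and the conclusion unchanged, and that you compute $ad-bc=3\bigl[2F_{\mathrm{\scriptscriptstyle BD}}-(F_{\mathrm{\scriptscriptstyle BD}}+\lambda_i)^2\bigr]$ exactly rather than via the paper's cruder term-by-term bounds.
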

\begin{proof}
    We differentiate $J_i$. Firstly, consider derivatives of functions of the form $$y=\frac{ax+b}{cx+d}.$$ This may be rewritten as $$ y= \frac{a}{c} +\frac{b-\frac{ad}{c}}{cx+d}.$$ Then, 
    \begin{equation}
         \dv{y}{x} = \frac{ad-bc}{(cx+d)^2}, \;\;\;\;\; \dv[2]{y}{x} = -2c\frac{ad-bc}{(cx+d)^3}.
         \label{eqn:deriv_rational_function}
    \end{equation}
    To check the sign of these functions, we must therefore check the sign of $ad-bc$. Recalling that $J_i$ may be rewritten as 
    \begin{equation*}
        J_i(F,\rho_{\mathrm{\scriptscriptstyle BD}}) = \frac{(3F_{\mathrm{\scriptscriptstyle BD}}-\lambda_i)F + \lambda_i}{(4F_{\mathrm{\scriptscriptstyle BD}}+4\lambda_i-2)F +2 - F_{\mathrm{\scriptscriptstyle BD}}-\lambda_i},
    \end{equation*}
  in this case,
  \begin{align*}
      a &=  3F_{\mathrm{\scriptscriptstyle BD}}-\lambda_i >\frac{3}{2} - \frac{1}{2} = 1 \\
      b &= \lambda_i <\frac{1}{2} \\ 
      c &= 4(F_{\mathrm{\scriptscriptstyle BD}}+\lambda_i)-2 \leq 4\cdot 1-2 = 2 \\ 
      d &= 2 - (F_{\mathrm{\scriptscriptstyle BD}}+\lambda_i) \geq 2-1 = 1
  \end{align*}
  and it follows that $ad-bc > 1\cdot 1 - 2\cdot 1/2 = 0$. Then, since 
  \begin{equation*}
      c = 4F_{\mathrm{\scriptscriptstyle BD}}+4\lambda_i-2 >4\cdot \frac{1}{2}+4\lambda_i-2 = 4\lambda_i\geq 0,
  \end{equation*}
  it follows from (\ref{eqn:deriv_rational_function}) that 
  \begin{equation*}
      \pdv F J_i(F,\rho_{\mathrm{\scriptscriptstyle BD}}) >0, \;\;\;\;\; \pdv[2] F J_i(F,\rho_{\mathrm{\scriptscriptstyle BD}}) < 0.
  \end{equation*}
  Therefore, $J_i$ is a strictly concave and increasing function of $F$. 
\end{proof}

\begin{proposition}
     Suppose that $\lambda_1\geq \lambda_2\geq \lambda_3$. Consider the tangent to $J_3(F,\rho_{\mathrm{\scriptscriptstyle BD}})$ at $F=1$. Denote this by $J_{\mathrm{\scriptscriptstyle tan}}(F,\rho_{\mathrm{\scriptscriptstyle BD}})$. Then, this lies below $J_{\mathrm{\scriptscriptstyle UB}}$ for all $F\in [1/4,1]$, i.e. $$ J_{\mathrm{\scriptscriptstyle tan}}(F,\rho_{\mathrm{\scriptscriptstyle BD}}) \leq J_{\mathrm{\scriptscriptstyle UB}}(F,\rho_{\mathrm{\scriptscriptstyle BD}}), $$where  $$J_{\mathrm{\scriptscriptstyle UB}}(F,\rho_{\mathrm{\scriptscriptstyle BD}}) = \frac{4(1-F_{\mathrm{\scriptscriptstyle BD}})}{3} F+ \frac{4F_{\mathrm{\scriptscriptstyle BD}}-1}{3} $$ is the linear upper bound from Lemma \ref{lem:bilocal_clifford_bounds}.
     \label{prop:tangent_below}
\end{proposition}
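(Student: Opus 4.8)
The plan is to exploit the fact that both $J_{\mathrm{\scriptscriptstyle tan}}$ and $J_{\mathrm{\scriptscriptstyle UB}}$ are affine functions of $F$, so that their difference $\Delta(F) := J_{\mathrm{\scriptscriptstyle UB}}(F,\rho_{\mathrm{\scriptscriptstyle BD}}) - J_{\mathrm{\scriptscriptstyle tan}}(F,\rho_{\mathrm{\scriptscriptstyle BD}})$ is itself affine in $F$. Since an affine function is nonnegative on an interval precisely when it is nonnegative at both endpoints, it suffices to verify $\Delta(1)\geq 0$ and $\Delta(\tfrac14)\geq 0$. This reduces the whole proposition to two one-point checks and avoids any reasoning about the interior of $[1/4,1]$.

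The first step is to write the tangent line explicitly. Using the rational form of $J_3$ from the proof of Proposition \ref{prop:J_i_concave_increasing}, namely
\begin{equation*}
J_3(F,\rho_{\mathrm{\scriptscriptstyle BD}}) = \frac{(3F_{\mathrm{\scriptscriptstyle BD}}-\lambda_3)F + \lambda_3}{(4F_{\mathrm{\scriptscriptstyle BD}}+4\lambda_3-2)F + 2 - F_{\mathrm{\scriptscriptstyle BD}} - \lambda_3},
\end{equation*}
together with the derivative formula (\ref{eqn:deriv_rational_function}), I would compute $J_3(1)$ and $J_3'(1)$. Writing $s := F_{\mathrm{\scriptscriptstyle BD}}+\lambda_3$ for brevity, the denominator at $F=1$ equals $3s$ and the numerator $ad-bc$ of the derivative simplifies to $3(2F_{\mathrm{\scriptscriptstyle BD}}-s^2)$, giving $J_3(1)=F_{\mathrm{\scriptscriptstyle BD}}/s$ and $J_3'(1)=(2F_{\mathrm{\scriptscriptstyle BD}}-s^2)/(3s^2)$. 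Hence $J_{\mathrm{\scriptscriptstyle tan}}(F,\rho_{\mathrm{\scriptscriptstyle BD}}) = F_{\mathrm{\scriptscriptstyle BD}}/s + \tfrac{2F_{\mathrm{\scriptscriptstyle BD}}-s^2}{3s^2}(F-1)$. The $F=1$ check is then immediate: $J_{\mathrm{\scriptscriptstyle UB}}(1,\rho_{\mathrm{\scriptscriptstyle BD}})=1$ while $J_{\mathrm{\scriptscriptstyle tan}}(1,\rho_{\mathrm{\scriptscriptstyle BD}})=F_{\mathrm{\scriptscriptstyle BD}}/s$, so $\Delta(1)=1-F_{\mathrm{\scriptscriptstyle BD}}/s\geq 0$ because $s=F_{\mathrm{\scriptscriptstyle BD}}+\lambda_3\geq F_{\mathrm{\scriptscriptstyle BD}}$.

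The main work, and the only real obstacle, is the $F=1/4$ endpoint. There $J_{\mathrm{\scriptscriptstyle UB}}(\tfrac14,\rho_{\mathrm{\scriptscriptstyle BD}})=F_{\mathrm{\scriptscriptstyle BD}}$, and substituting $F=\tfrac14$ into $J_{\mathrm{\scriptscriptstyle tan}}$ and clearing the positive factor $4s^2$ shows that $\Delta(\tfrac14)\geq 0$ is equivalent to $g(s)\leq 0$, where $g(s):=(1-4F_{\mathrm{\scriptscriptstyle BD}})s^2 + 4F_{\mathrm{\scriptscriptstyle BD}}\,s - 2F_{\mathrm{\scriptscriptstyle BD}}$. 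Rather than track the admissible range of $s$ (which would force attention to the vertex of this concave quadratic), the slick route is to note that, since $F_{\mathrm{\scriptscriptstyle BD}}>1/2$, the leading coefficient $1-4F_{\mathrm{\scriptscriptstyle BD}}$ is negative and the discriminant $8F_{\mathrm{\scriptscriptstyle BD}}(1-2F_{\mathrm{\scriptscriptstyle BD}})$ is negative as well. Therefore $g$ has no real roots and opens downward, so $g(s)<0$ for every real $s$, in particular for $s=F_{\mathrm{\scriptscriptstyle BD}}+\lambda_3$. Combining the two endpoint inequalities with the affineness of $\Delta$ yields $\Delta(F)\geq 0$ on all of $[1/4,1]$, which is the claim. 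I expect the algebraic simplification of $J_3(1)$, $J_3'(1)$, and the reduction to $g(s)\leq 0$ to be the only nontrivial bookkeeping; the discriminant observation then closes the argument without invoking the bounds on $\lambda_3$ at all.
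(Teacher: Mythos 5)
Your proposal is correct, and it follows the same overall skeleton as the paper's proof: compute $J_3(1)$ and $J_3'(1)$ to write down the tangent explicitly, observe that the difference of two affine functions is affine, and reduce the claim to the two endpoint checks at $F=1$ and $F=1/4$. The $F=1$ check is identical in both arguments. Where you diverge is in the only genuinely hard step, the $F=1/4$ inequality. The paper keeps the difference in the form $h(x)=F_{\mathrm{\scriptscriptstyle BD}}-\tfrac14+\tfrac{F_{\mathrm{\scriptscriptstyle BD}}}{2x^2}-\tfrac{F_{\mathrm{\scriptscriptstyle BD}}}{x}$ with $x=F_{\mathrm{\scriptscriptstyle BD}}+\lambda_3$, derives the admissible range $x\in[F_{\mathrm{\scriptscriptstyle BD}},(1+2F_{\mathrm{\scriptscriptstyle BD}})/3]$ from $\lambda_3=\min_i\lambda_i$, shows $h$ is decreasing there, and then verifies positivity at the right endpoint by a further algebraic manipulation. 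You instead clear the denominator $4s^2$ to obtain the quadratic $g(s)=(1-4F_{\mathrm{\scriptscriptstyle BD}})s^2+4F_{\mathrm{\scriptscriptstyle BD}}s-2F_{\mathrm{\scriptscriptstyle BD}}$ and note that, for $F_{\mathrm{\scriptscriptstyle BD}}>1/2$, its discriminant $8F_{\mathrm{\scriptscriptstyle BD}}(1-2F_{\mathrm{\scriptscriptstyle BD}})$ is negative while the leading coefficient is negative, so $g<0$ on all of $\mathbb{R}$. I have checked the algebra ($ad-bc=3(2F_{\mathrm{\scriptscriptstyle BD}}-s^2)$, $4s^2\Delta(\tfrac14)=-g(s)$, and the discriminant) and it is right. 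Your route buys a shorter and more robust argument: it needs no information about $\lambda_3$ beyond normalisation and no monotonicity analysis, and it establishes the inequality for every real $s$ rather than only over the admissible range. The paper's route is more pedestrian but makes explicit where the structural constraints on $\rho_{\mathrm{\scriptscriptstyle BD}}$ enter; both rely on the standing assumption $F_{\mathrm{\scriptscriptstyle BD}}>1/2$.
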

\begin{proof}
    We firstly compute the formula for the tangent to $J_i$ at $F=1$. Recalling the formula  (\ref{eqn:deriv_rational_function}), this has gradient $$\pdv{F}J_3(F,\rho_{\mathrm{\scriptscriptstyle BD}}) \big|_{F = 1} = \frac{ad-bc}{(c+d)^2} = \frac{6F_{\mathrm{\scriptscriptstyle BD}}-3(F_\mathrm{\scriptscriptstyle BD}+\lambda_3)^2}{\left(3(F_{\mathrm{\scriptscriptstyle BD}}+\lambda_3)\right)^2} = \frac{2 F_{\mathrm{\scriptscriptstyle BD}}}{3(F_{\mathrm{\scriptscriptstyle BD}}+\lambda_3)^2}-\frac{1}{3}. $$
Since the tangent runs through the point $(1,J_3(1,\rho_{\mathrm{\scriptscriptstyle BD}}))$, it has formula 
\begin{equation*}
    J_{\mathrm{tan}}(F,\rho_{\mathrm{\scriptscriptstyle BD}}) = \left(\frac{2 F_{\mathrm{\scriptscriptstyle BD}}}{3(F_{\mathrm{\scriptscriptstyle BD}}+\lambda_3)^2}-\frac{1}{3}\right)(F-1)+\frac{F_{\mathrm{\scriptscriptstyle BD}}}{F_{\mathrm{\scriptscriptstyle BD}}+\lambda_3},
\end{equation*}
where we have used $J_i(1,\rho_{\mathrm{\scriptscriptstyle BD}}) = F_{\mathrm{\scriptscriptstyle BD}}/(F_{\mathrm{\scriptscriptstyle BD}}+\lambda_i)$. We note that at $F=1$, $$J_{\mathrm{\scriptscriptstyle UB}}(1,\rho_{\mathrm{\scriptscriptstyle BD}}) = 1 \geq  \frac{F_{\mathrm{\scriptscriptstyle BD}}}{F_{\mathrm{\scriptscriptstyle BD}}+\lambda_3} = J_{\mathrm{\scriptscriptstyle tan}}(1,\rho_{\mathrm{\scriptscriptstyle BD}}).$$ Therefore, to show the proposition, it suffices to show that 
\begin{equation}
    J_{\mathrm{\scriptscriptstyle UB}}\left(\frac{1}{4},\rho_{\mathrm{\scriptscriptstyle BD}}\right) \geq J_{\mathrm{\scriptscriptstyle tan}}\left(\frac{1}{4},\rho_{\mathrm{\scriptscriptstyle BD}}\right),
    \label{eqn:inequality_tangent_F=1/4}
\end{equation}
since both $J_{\mathrm{\scriptscriptstyle UB}}$ and $J_{\mathrm{\scriptscriptstyle tan}}$ are linear in $F$ and therefore intersect at most once. Now,
\begin{align*}
    J_{\mathrm{\scriptscriptstyle UB}}\left(\frac{1}{4},\rho_{\mathrm{\scriptscriptstyle BD}}\right) -  J_{\mathrm{\scriptscriptstyle tan}}\left(\frac{1}{4},\rho_{\mathrm{\scriptscriptstyle BD}}\right) &= F_{\mathrm{\scriptscriptstyle BD}} - \left(\frac{2 F_{\mathrm{\scriptscriptstyle BD}}}{3(F_{\mathrm{\scriptscriptstyle BD}}+\lambda_3)^2}-\frac{1}{3}\right)\left(-\frac{3}{4}\right)-\frac{F_{\mathrm{\scriptscriptstyle BD}}}{F_{\mathrm{\scriptscriptstyle BD}}+\lambda_3} \\ &= F_{\mathrm{\scriptscriptstyle BD}} - \frac{1}{4} +  \frac{F_{\mathrm{\scriptscriptstyle BD}}}{2(F_{\mathrm{\scriptscriptstyle BD}}+\lambda_3)^2} -\frac{F_{\mathrm{\scriptscriptstyle BD}}}{F_{\mathrm{\scriptscriptstyle BD}}+\lambda_3}.
\end{align*}
Now, let $x\coloneqq F_{\mathrm{\scriptscriptstyle BD}}+\lambda_3$, and $$h(x) \coloneqq F_{\mathrm{\scriptscriptstyle BD}} - \frac{1}{4} +  \frac{F_{\mathrm{\scriptscriptstyle BD}}}{2 x^2} -\frac{F_{\mathrm{\scriptscriptstyle BD}}}{x}.$$ By the assumption that $\lambda_3 = \min \{ \lambda_1,\lambda_2,\lambda_3\}$ and the condition $F_{{\mathrm{\scriptscriptstyle BD}}}+\lambda_1+\lambda_2+\lambda_3=1$, it follows that 
\begin{equation}
\lambda_3 \in \left[0, \frac{1-F_{\mathrm{\scriptscriptstyle BD}}}{3}\right], \;\;\;\; \text{and} \;\;\; x\in \left[F_{\mathrm{\scriptscriptstyle BD}},\frac{1+2F_{\mathrm{\scriptscriptstyle BD}}}{3}\right].
    \label{eqn:range_x}
\end{equation}
To prove the proposition, it therefore suffices to show positivity of $h$ for $x$ in the range (\ref{eqn:range_x}). We start by establishing the monotonicity of $h$: $$\pdv{x}h(x) = -\frac{F_{\mathrm{\scriptscriptstyle BD}}}{ x^3} +\frac{F_{\mathrm{\scriptscriptstyle BD}}}{x^2} = - \frac{F_{\mathrm{\scriptscriptstyle BD}}}{x^3}(1-x) \leq 0, $$ since $x = F_{\mathrm{\scriptscriptstyle BD}}+\lambda_3 \leq 1$. We therefore see that $h$ is decreasing. To show that $h$ is positive in the range (\ref{eqn:range_x}), it therefore suffices to show that $$ h\left( \frac{1+2F_{\mathrm{\scriptscriptstyle BD}}}{3}\right) \geq 0. $$ We have
\begin{align*}
    h\left( \frac{1+2F_{\mathrm{\scriptscriptstyle BD}}}{3}\right) &=  F_{\mathrm{\scriptscriptstyle BD}} - \frac{1}{4} +  \frac{9 F_{\mathrm{\scriptscriptstyle BD}}}{2 (1+2F_{\mathrm{\scriptscriptstyle BD}})^2} -\frac{3 F_{\mathrm{\scriptscriptstyle BD}}}{1+2F_{\mathrm{\scriptscriptstyle BD}}} \\ &= F_{\mathrm{\scriptscriptstyle BD}} - \frac{1}{4} +  6 F_{\mathrm{\scriptscriptstyle BD}} \left(\frac{\frac{3}{4} - \frac{1}{2}(1+2F_{\mathrm{\scriptscriptstyle BD}})}{(1+2F_{\mathrm{\scriptscriptstyle BD}})^2} \right)  \\ &= F_{\mathrm{\scriptscriptstyle BD}} - \frac{1}{4} +  6 F_{\mathrm{\scriptscriptstyle BD}} \frac{\frac{1}{4} - F_{\mathrm{\scriptscriptstyle BD}}}{(1+2F_{\mathrm{\scriptscriptstyle BD}})^2} \\ &= \left(F_{\mathrm{\scriptscriptstyle BD}} - \frac{1}{4}\right)\left(1- \frac{6 F_{\mathrm{\scriptscriptstyle BD}}}{(1+2F_{\mathrm{\scriptscriptstyle BD}})^2} \right).
\end{align*}
Then, since $F_{\mathrm{\scriptscriptstyle BD}}>\frac{1}{2}$, we have 
\begin{align*}
    h\left( \frac{1+2F_{\mathrm{\scriptscriptstyle BD}}}{3}\right) >0 &\Leftrightarrow 1- \frac{6 F_{\mathrm{\scriptscriptstyle BD}}}{(1+2F_{\mathrm{\scriptscriptstyle BD}})^2} >0 \\ &\Leftrightarrow (1+2F_{\mathrm{\scriptscriptstyle BD}})^2 > 6 F_{\mathrm{\scriptscriptstyle BD}} \\ &\Leftrightarrow 4 F_{\mathrm{\scriptscriptstyle BD}}^2 -2 F_{\mathrm{\scriptscriptstyle BD}} +1 >0 \\ &\Leftrightarrow (1-2F_{\mathrm{\scriptscriptstyle BD}})^2 + 2 F_{\mathrm{\scriptscriptstyle BD}} >0,
\end{align*}
which holds. We therefore see that $$ h(x) \geq h\left( \frac{1+2F_{\mathrm{\scriptscriptstyle BD}}}{3}\right)>0 $$ for all $x$ in the range (\ref{eqn:range_x}), and therefore $$J_{\mathrm{\scriptscriptstyle UB}}\left(\frac{1}{4},\rho_{\mathrm{\scriptscriptstyle BD}}\right) -  J_{\mathrm{\scriptscriptstyle tan}}\left(\frac{1}{4},\rho_{\mathrm{\scriptscriptstyle BD}}\right)>0.$$ This suffices to show the proposition.
\end{proof}

\subsection{Additional proofs}\label{subsec.additional_proofs}
\begin{lemma}\label{lemma.bell_diagonal_entangled}
    A Bell-diagonal state 
    \begin{equation*}
        \rho = \lambda_0 \ketbra{\phi^+} + \lambda_1 \ketbra{\psi^+} + \lambda_2 \ketbra{\psi^-} + \lambda_3 \ketbra{\phi^-},
    \end{equation*}
    with $\lambda_0 + \lambda_1 + \lambda_2 + \lambda_3 = 1$,
    is entangled if and only if $\lambda_i>1/2$ for some $i$.
\end{lemma}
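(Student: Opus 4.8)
The plan is to prove the equivalence via the Peres-Horodecki (PPT) criterion, which is both necessary and sufficient for two-qubit states. Since a Bell-diagonal state is completely specified by its eigenvalues $\lambda_0,\lambda_1,\lambda_2,\lambda_3$ with respect to the Bell basis, the whole question reduces to a computation on these four numbers. The claim is symmetric under permutation of the $\lambda_i$ (the four Bell states are interchangeable by local Pauli operations, which preserve entanglement), so without loss of generality I may assume $\lambda_0 = \max_i \lambda_i$ and prove that $\rho$ is entangled iff $\lambda_0 > 1/2$.

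Let me think about how to set this up concretely.

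First I would write down the density matrix $\rho$ explicitly in the computational basis $\{\ket{00},\ket{01},\ket{10},\ket{11}\}$. Using the definitions of the Bell states, $\rho$ comes out as a matrix that is block-diagonal in the $\{\ket{00},\ket{11}\}$ and $\{\ket{01},\ket{10}\}$ subspaces, with diagonal entries built from the $\lambda_i$ and off-diagonal entries $\tfrac{1}{2}(\lambda_0 - \lambda_3)$ and $\tfrac{1}{2}(\lambda_1-\lambda_2)$. Next I would compute the partial transpose $\rho^{T_B}$ with respect to the second qubit, which swaps certain off-diagonal matrix elements. The key structural fact is that $\rho^{T_B}$ is again block-diagonal (now in a different pairing of basis states), so its eigenvalues can be read off in closed form as $\tfrac{1}{2}(1 - 2\lambda_i)$ for $i=0,1,2,3$; equivalently, the four eigenvalues are $\{\,\tfrac12 - \lambda_i\,\}$ up to the symmetric relabelling. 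Then the PPT criterion says $\rho$ is \emph{separable} iff $\rho^{T_B}\succeq 0$, i.e. iff every eigenvalue is nonnegative, i.e. iff $\lambda_i \leq \tfrac{1}{2}$ for all $i$. Negating, $\rho$ is entangled iff some $\lambda_i > \tfrac{1}{2}$, which is exactly the statement.

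The main obstacle, and the only place requiring care, is verifying that the eigenvalues of the partial transpose take the clean form claimed — the partial transpose mixes the two computational-basis blocks, so one must confirm that $\rho^{T_B}$ really does decompose into $2\times2$ blocks whose eigenvalues simplify to $\tfrac12 - \lambda_i$. This is a short but genuine linear-algebra computation (diagonalizing a few $2\times 2$ blocks). Everything else is bookkeeping: invoking the two-qubit PPT theorem (Peres–Horodecki, for which I would cite \cite{Peres1996,Horodecki1996}), using the local-unitary symmetry to reduce to the case $\lambda_0 = \lambda_{\max}$, and noting that $\sum_i \lambda_i = 1$ forces at most one $\lambda_i$ to exceed $\tfrac12$. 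I would also remark that since we are in the two-qubit setting, PPT is not merely necessary but also sufficient for separability, which is what makes the "if and only if" go through; in higher dimensions the argument would only yield one direction.
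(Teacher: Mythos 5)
Your proposal is correct and follows essentially the same route as the paper's proof: write the Bell-diagonal state in the computational basis, take the partial transpose, read off its eigenvalues $\tfrac{1}{2}(1-2\lambda_i)$, and invoke the Peres--Horodecki criterion (necessary and sufficient for two qubits). Your version is in fact slightly more careful about the overall normalisation factor of $\tfrac{1}{2}$, which the paper's displayed matrix omits.
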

\begin{proof}[Proof of Lemma \ref{lemma.bell_diagonal_entangled}]
    We will analyse the entanglement of a Bell-diagonal state using the Peres-Horodecki criterion, which states that a bipartite, \reb{ $2\times 2$ dimensional} quantum state $\rho$ is entangled if and only if the partial transpose of $\rho$ has at least one negative eigenvalue \cite{Peres1996,Horodecki1996}.
    A Bell-diagonal state can be written in the Bell basis as
    \begin{equation*}
        \rho = \lambda_0 \ketbra{\phi^+} + \lambda_1 \ketbra{\psi^+} + \lambda_2 \ketbra{\psi^-} + \lambda_3 \ketbra{\phi^-}.
    \end{equation*}
    In the computational basis, $\{\ket{00}, \ket{01}, \ket{10}, \ket{11}\}$, the Bell-diagonal state can be written as
    \begin{equation*}
        \rho = \begin{pmatrix}
        \lambda_0+\lambda_3 & 0 & 0 & \lambda_0-\lambda_3\\
        0 & \lambda_1+\lambda_2 & \lambda_1-\lambda_2 & 0\\
        0 & \lambda_1-\lambda_2 & \lambda_1+\lambda_2 & 0\\
        \lambda_0-\lambda_3 & 0 & 0 & \lambda_0+\lambda_3
        \end{pmatrix}.
    \end{equation*}
    The partial transpose of this density matrix is given by
    \begin{equation*}
        \rho^\mathrm{PT} = \begin{pmatrix}
        \lambda_0+\lambda_3 & 0 & 0 & \lambda_1-\lambda_2\\
        0 & \lambda_1+\lambda_2 & \lambda_0-\lambda_3 & 0\\
        0 & \lambda_0-\lambda_3 & \lambda_1+\lambda_2 & 0\\
        \lambda_1-\lambda_2 & 0 & 0 & \lambda_0+\lambda_3
        \end{pmatrix}.
    \end{equation*}
    The eigenvalues of the partial transpose are $\xi_i = 1-2\lambda_i$, $i=1,2,3,4$.
    One of the eigenvalues is negative iff $\lambda_i>1/2$ for some $i$.
    Therefore, according to the Peres-Horodecki criterion, the state is entangled iff $\lambda_i>1/2$ for some $i$.
    Since these $\lambda_i$ correspond to the fidelity of $\rho$ to one of the Bell states (e.g., $F(\rho,\ket{\phi^+}) \equiv \bra{\phi^+}\rho\ket{\phi^+} = \lambda_0$), we conclude that the state is entangled iff the fidelity to one of the Bell states is larger than $1/2$.
\end{proof}

\clearpage
\reb{
\section{Numerical simulations}
In our analytical calculations, we assumed a purification protocol with constant success probability (which implies a linear jump function, as shown in Appendix \ref{app:jump_function_general_form}).
This allowed us to derive bounds for the performance of any 1G1B entanglement buffering system that uses bilocal Clifford protocols.
However, the success probability of these purification protocols is in general linear in the fidelity of the buffered state (see Appendix \ref{app:jump_function_general_form}).
In this Appendix, we compare the analytical bounds, which assume a constant success probability, to the actual values obtained via a simulation that considers the true (linear, non-constant) success probability.}

\reb{Our discrete-event simulation keeps track of the buffered link, which decoheres until an event is triggered. These events could correspond to a consumption request (which consumes the buffered memory) or a successful entanglement generation (which is followed by pumping, with probability $q$).
When purification is performed, it succeeds with a probability that depends linearly on the fidelity of the buffered link (see Appendix \ref{app:jump_function_general_form}).}

\reb{To compute the average consumed fidelity and the availability, we run the simulation $N_\mathrm{samples}$ times.
In each realization $i$ of the process, we let the system evolve over $t_\mathrm{sim}$ units of time until convergence to a steady state, and we record the fidelity of the buffered link $F_i(t_\mathrm{sim})$ (if the memory is empty, the fidelity is set to zero, as was specified in Definition~\ref{def:F(t)}).
Then, we estimate the average consumed fidelity as the average fidelity of the buffered link at $t_\mathrm{sim}$ (conditional on the buffered link being present):
\begin{equation}
    \overline F \approx \overline F' \equiv  \frac{\sum_{i=1}^{N_\mathrm{samples}} F_i(t_\mathrm{sim})}{N_\mathrm{samples}'},
\end{equation}
where
\begin{equation}
    N_\mathrm{samples}' = \sum_{i=1}^{N_\mathrm{samples}} \mathds{1}_{F_i(t_\mathrm{sim}) > 0}
\end{equation}
is the number of samples in which $F_i(t_\mathrm{sim}) > 0$ ($\mathds{1}$ is the indicator function). We measure the error in the estimate using the standard error:
\begin{equation}
    \varepsilon_F = \sqrt{\frac{ \sum_{i=1}^{N_\mathrm{samples}'} \left(F_i(t_\mathrm{sim}) - \overline{F}' \right)^2  }{N_\mathrm{samples}' \left(N_\mathrm{samples}'-1\right)} },
\end{equation}
which corresponds to the square root of the unbiased sample variance divided by the number of samples.
The availability is estimated as the proportion of samples in which there is a buffered link at~time~$t_\mathrm{sim}$:
\begin{equation}
    A \approx A' \equiv \frac{1}{N_\mathrm{samples}} \sum_{i=0}^{N_\mathrm{samples}}\mathds{1}_{F_i(t_\mathrm{sim}) > 0},
\end{equation}
Note that $A'$ is the average of a binary random variable. We can therefore model this random variable as Bernoulli-distributed with probability of success $A'$. This yields a variance $A'(1-A')$, which allows us to compute the standard error as
\begin{equation}
    \varepsilon_A = \sqrt{ \frac{A'(1-A')}{N_\mathrm{samples}} }.
\end{equation}}

\reb{Next, we study again the example from Figure \ref{fig.operating_regimes}, and we compare the bounds discussed in the main text with the results from our simulation.
In Figure \ref{fig.operating_regimes-sim}, we show the same lower and upper bounds (yellow and dark blue lines, respectively) from Figure \ref{fig.operating_regimes}.
We simulated three buffering systems, each of them using the unique bilocal Clifford protocols 1, 2, and 3 from Table \ref{tab:7_protocols} (we neglect protocols 4-7 since they are trivial).
We emphasise that these simulations consider the true probabilities of success (which are linear but non-constant in the fidelity of the buffered link) and the true jump functions (rational in the fidelity of the buffered link) of the purification protocols.
Figure \ref{fig.operating_regimes-sim} shows the availability and average consumed fidelity attained by each of these systems, for different values of $q$.
We first note that protocols 2 (blue circles) and 3 (red crosses) are equivalent. This is due to the symmetry of the newly generated state considered in this example, $\rho_\mathrm{new} = F_\mathrm{new} \ketbra{\phi^+} + (1-F_\mathrm{new}) \left(\ketbra{\psi^+} +  \ketbra{\psi^-}\right)/2$.
More importantly, the performance of the simulated systems lies within the analytical bounds, which were derived assuming a constant probability of success. This serves as empirical evidence that our simplified model is still useful when lifting the assumption about a constant probability of success, and can guide the design of more complex and realistic buffering systems.}

\begin{figure}[h]
  \centering
  \includegraphics[width=\textwidth]{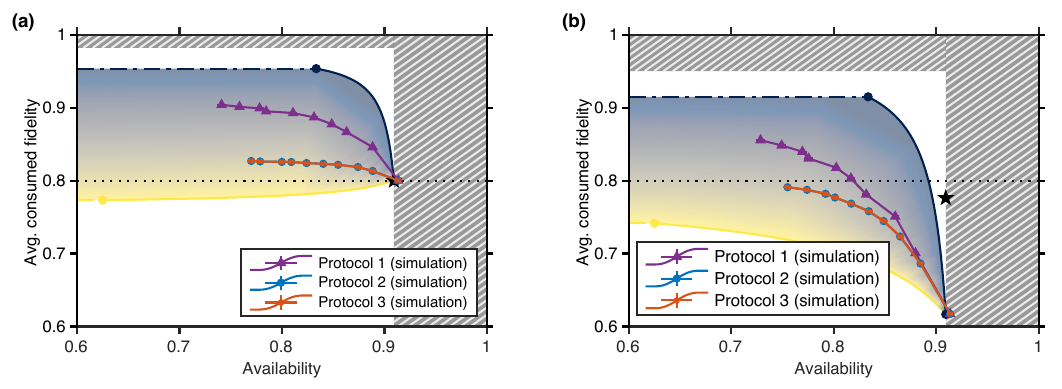}
  \caption{\reb{Bounds derived assuming a constant probability of success still apply when the assumption is lifted.
  \textbf{(a)} Noiseless memories ($\Gamma = 0$) or \textbf{(b)} noisy memories ($\Gamma = 5\cdot10^{-2}$ a.u.).
    For a given target availability, the average consumed fidelity is within the blue/yellow region (see Corollary \ref{corollary.bounds_avgconsfid}).
Availability is maximized for $q=0$ ($q$ is the probability of purification after successful entanglement generation), and it decreases for increasing $q$.
  White regions cannot be achieved by bilocal Clifford protocols.
  Striped regions cannot be achieved by any pumping protocol.
  Black star: performance of the replacement protocol (buffered link is replaced by new links).
  Dotted line: fidelity of newly generated entangled links.
  Solid lines with markers: performance of the 1G1B system obtained via simulation, using the true jump functions and true probabilities of success of purification protocols 1, 2, and 3 from Table \ref{tab:7_protocols}, ($q=0$ for the rightmost data point, decreasing in intervals of 0.111 until reaching $q=1$ in the leftmost data point). The simulation considers a linear probability of success, unlike the analytical calculations, in which this probability is assumed to be constant.
    Parameters used in this example (times and rates in the same arbitrary units): $\lambda = 1$, $\mu = 0.1$, $F_\mathrm{new}=0.8$, $\rho_\mathrm{new} = F_\mathrm{new} \ketbra{\phi^+} + (1-F_\mathrm{new}) \left(\ketbra{\psi^+} +  \ketbra{\psi^-}\right)/2$.
    Numerical parameters used in the simulation: $t_\mathrm{sim}=50$, $N_\mathrm{samples}=10^4$.}
  }
  \label{fig.operating_regimes-sim}
\end{figure}

\end{document}